\newtheorem{claim}{{\sc Claim}}[section]
\newtheorem{question}{{\sc Question}}
 \newcommand{\Z}{\mathbb{Z}}
\newcommand{\auf}{\langle}
\newcommand{\zu}{\rangle}
\newcommand{\avec}[1]{\boldsymbol{#1}}
\newcommand{\OWL}{\textsl{OWL\,2}}
\newcommand{\OWLQL}{\textsl{OWL\,2\,QL}}
\newcommand{\OWLEL}{\textsl{OWL\,2\,EL}}
\renewcommand{\int}{\mathsf{int}}
\newcommand{\hinte}{\textit{h\_int}}
\newcommand{\vinte}{\textit{v\_int}}
\newcommand{\sqe}{\textit{square}}
\newcommand{\nwplane}{\textit{nw}_{\omega\times\omega}}
\newcommand{\R}{\ensuremath{\mathsf{R}}}
\newcommand{\after}{\ensuremath{\mathsf{A}}}
\newcommand{\afterb}{\ensuremath{\mathsf{\bar{A}}}}
\newcommand{\later}{\ensuremath{\mathsf{L}}}
\newcommand{\laterb}{\ensuremath{\mathsf{\bar{L}}}}
\newcommand{\begins}{\ensuremath{\mathsf{B}}}
\newcommand{\beginsb}{\ensuremath{\mathsf{\bar{B}}}}
\newcommand{\eds}{\ensuremath{\mathsf{E}}}
\newcommand{\edsb}{\ensuremath{\mathsf{\bar{E}}}}
\newcommand{\during}{\ensuremath{\mathsf{D}}}
\newcommand{\duringb}{\ensuremath{\mathsf{\bar{D}}}}
\newcommand{\overlaps}{\ensuremath{\mathsf{O}}}
\newcommand{\overlapsb}{\ensuremath{\mathsf{\bar{O}}}}
\newcommand{\univ}{\mathsf{U}}
\newcommand{\CC}{\mathcal{C}}
\newcommand{\F}{\mathfrak{F}}
\newcommand{\Fin}{\mathsf{Fin}}
\newcommand{\Lin}{\mathsf{Lin}}
\newcommand{\Dis}{\mathsf{Dis}}
\newcommand{\Disinf}{\mathsf{Dis}^\infty}
\newcommand{\Den}{\mathsf{Den}}
\newcommand{\T}{\mathfrak{T}}
\newcommand{\M}{\mathfrak{M}}
\newcommand{\C}{\ensuremath{\mathfrak{K}}}
\newcommand{\D}{\Diamond}
\newcommand{\U}{[\univ]}
\newcommand{\LD}{\langle \later\rangle}
\newcommand{\EbD}{\langle \edsb\rangle}
\newcommand{\EbB}{[\edsb]\,}
\newcommand{\AD}{\langle \after\rangle}
\newcommand{\AbD}{\langle \afterb\rangle}
\newcommand{\AbB}{[\afterb]\,}
\newcommand{\AB}{[\after]\,}
\newcommand{\BD}{\langle \begins\rangle}
\newcommand{\BB}{[\begins]}
\newcommand{\DD}{\langle \during\rangle}
\newcommand{\DB}{[\during]}
\newcommand{\OD}{\langle \overlaps\rangle}
\newcommand{\DbD}{\langle \duringb\rangle}
\newcommand{\ObD}{\langle \overlapsb\rangle}
\newcommand{\Bh}{[\eds]\,}
\newcommand{\Dh}{\langle \eds\rangle}
\newcommand{\Bv}{[\beginsb]}
\newcommand{\Dv}{\langle\beginsb\rangle}
\newcommand{\evar}{{\sf end}}
\newcommand{\lvar}{{\sf line}}
\newcommand{\diag}{{\sf diag}}
\newcommand{\notdiag}{\overline{\sf diag}}
\newcommand{\plane}{{\sf unit}}
\newcommand{\ndiag}{{\sf next}}
\newcommand{\uvar}{{\sf up}}
\newcommand{\wvar}{{\sf wall}}
\newcommand{\uvaru}{{\sf up}_{\uparrow}}
\newcommand{\uvarr}{{\sf up}_{\to}}
\newcommand{\uvarn}{{\sf up}^{+}}
\newcommand{\now}{{\sf now}}
\newcommand{\pred}{{\sf P}}
\newcommand{\predp}{{\sf P}'}
\newcommand{\Htick}{{\sf Htick}}
\newcommand{\Vtick}{{\sf Vtick}}
\newcommand{\nHtick}{\overline{{\sf Htick}}}
\newcommand{\nVtick}{\overline{{\sf Vtick}}}
\newcommand{\mpred}{{\sf M}_{\lambda}}
\newcommand{\xpred}{{\sf X}_{\lambda}}
\newcommand{\ypred}{{\sf Y}_{\lambda}}
\newcommand{\qpred}{{\sf Q}}
\newcommand{\qpredp}{{\sf Q}'}
\newcommand{\init}{{\sf init}}
\newcommand{\uvarlast}{{\sf last\_up}}
\newcommand{\gridp}{{\sf grid\_proper}}
\newcommand{\mvar}{{\sf mirror}}
\newcommand{\gridc}{{\sf grid\_copy}}
\newcommand{\gridm}{{\sf first\_mirror}}
\newcommand{\mvarlast}{{\sf last\_mirror}}
\newcommand{\cellvar}{{\sf cell}}
\newcommand{\RhM}{\prec_\to^{\M}}
\newcommand{\RvM}{\prec_\uparrow^{\M}}
\newcommand{\imph}{\Rightarrow_{\!H}}
\newcommand{\impv}{\Rightarrow_V}
\newcommand{\imphd}{\Rightarrow^d_{\!H}\,} 
\newcommand{\hcoverf}{{\sf Cover}_\leftrightarrow}
\newcommand{\vcoverf}{{\sf Cover}_\updownarrow}
\newcommand{\succh}{{\sf grid\_succ}_\to}
\newcommand{\succv}{{\sf grid\_succ}_\uparrow}
\newcommand{\cbr}{{\sf Chessboard}}
\newcommand{\hsuc}{{\sf succ\_sq}_\to}
\newcommand{\vsuc}{{\sf succ\_sq}_\uparrow}
\newcommand{\ffill}{{\sf fill}}
\newcommand{\fdiag}{\phi_{\textit{enum}}}
\newcommand{\fdiagr}{\phi_{\textit{enum}}^r}
\newcommand{\fdiagb}{\phi_{\textit{enum}}^\Box}
\newcommand{\fgridcore}{\phi_{\textit{grid}}^{\textit{core}}}
\newcommand{\fgrid}{\phi_{\textit{grid}}}
\newcommand{\fgridr}{\phi_{\textit{grid}}^r}
\newcommand{\fgridb}{\phi_{\textit{grid}}^\Box}
\newcommand{\ftmpspace}{\Phi_{\!\A}}
\newcommand{\ftmpspaced}{\Phi_{\!\A}^d}
\newcommand{\ftm}{\Psi_{\!\A}}
\newcommand{\ftmr}{\Psi_{\!\A}^r}
\newcommand{\ftmb}{\Psi_{\!\A}^\Box}
\newcommand{\nwnxt}{\Rightarrow\!\begin{tikzpicture}[>=latex,yscale=0.5,xscale=0.5]\footnotesize
\draw (0,0) circle (3mm);
\draw[->] (0.2,-0.2) -- (-0.2,0.23);
\end{tikzpicture}\,}%
\newcommand{\HS}{\ensuremath{\mathcal{HS}}}
\newcommand{\HSh}{\ensuremath{\mathcal{HS}_\textit{horn}}}
\newcommand{\HSc}{\ensuremath{\mathcal{HS}_\textit{core}}}
\newcommand{\HShb}{\ensuremath{\mathcal{HS}^\Box_\textit{horn}}}
\newcommand{\HShd}{\ensuremath{\mathcal{HS}^\Diamond_\textit{horn}}}
\newcommand{\HScb}{\ensuremath{\mathcal{HS}^\Box_\textit{core}}}
\newcommand{\HScd}{\ensuremath{\mathcal{HS}^\Diamond_\textit{core}}}
\newcommand{\NLogSpace}{\textsc{NLogSpace}}
\newcommand{\PTime}{\textsc{P}}
\newcommand{\ExpTime}{\textsc{ExpTime}}
\newcommand{\NExpTime}{\textsc{NExpTime}}
\newcommand{\NP}{\textsc{NP}}
\newcommand{\PSpace}{\textsc{PSpace}}
\newcommand{\ExpSpace}{\textsc{ExpSpace}}
\newcommand{\A}{\ensuremath{\mathcal{A}}}
\newcommand{\LEnd}{\textit{LEnd}}
\newcommand{\GA}{\Gamma_{\!\A}}
\newcommand{\WA}{W_{\!\A}}
\newcommand{\CDT}{\ensuremath{\mathcal{CDT}}}
\newcommand{\SL}{\ensuremath{\mathcal{SL}}}
\newcommand{\chase}{\mathfrak V}
\newcommand{\cl}{\mathsf{cl}}
\newcommand{\tml}{\boldsymbol{l}}
\newcommand{\tmr}{\boldsymbol{r}}
\begin{document}

\markboth{D.~Bresolin, A. Kurucz, E.~Mu\~noz, V.~Ryzhikov, G.~Sciavicco, M.~Zakharyaschev}{Horn Fragments of the Halpern-Shoham Interval Temporal Logic}

\title{Horn Fragments of the Halpern-Shoham Interval Temporal Logic}

\author{DAVIDE BRESOLIN
\affil{University of Padova, Italy}
AGI KURUCZ
\affil{King's College London, UK}
EMILIO MU\~NOZ-VELASCO
\affil{University of Malaga, Spain}
VLADISLAV RYZHIKOV
\affil{Free University of Bozen-Bolzano, Italy}
GUIDO SCIAVICCO
\affil{University of Ferrara, Italy}
MICHAEL ZAKHARYASCHEV
\affil{Birkbeck, University of London, UK}
}

\begin{abstract}
We investigate the satisfiability problem for Horn fragments of the Halpern-Shoham interval temporal logic depending on the type (box or diamond) of the interval modal operators, the type of the underlying linear order (discrete or dense), and the type of semantics for the interval relations (reflexive or irreflexive). For example, we show that satisfiability of Horn formulas with diamonds  is undecidable for any type of linear orders and semantics. On the contrary, satisfiability of Horn formulas with boxes is tractable over both discrete and dense orders under the reflexive semantics and over dense orders under the irreflexive semantics, but becomes undecidable over discrete orders under the irreflexive semantics. Satisfiability of  binary Horn formulas with both boxes and diamonds is always undecidable under the irreflexive semantics.
\end{abstract}

%

\category{I.2.4}{Knowledge Representation Formalisms and Methods}{representation languages}
\category{F.4.1}{Mathematical Logic}{temporal logic}
\category{F.2.2}{Nonnumerical Algorithms and Problems}{complexity of proof procedures.}

\terms{languages, theory.}

\keywords{temporal logic, modal logic, computational complexity.}


\acmformat{Davide Bresolin, Agi Kurucz, Emilio Mu\~noz-Velasco, Vladislav Ryzhikov, Guido Sciavicco, and Michael Zakharyaschev. 2017. Horn Fragments of the Halpern-Shoham Interval Temporal Logic.}

\def\copyrightline{}

\makeatletter
\def\endbottomstuff{\vspace*{30mm}\par
\doiline
\vskip-13pt
\strut
\end@float}
\makeatother

\begin{bottomstuff}
The authors acknowledge the support from the Italian INDAM-GNCS project 2017 \emph{`Logics and Automata for Interval Model Checking'} (D.~Bresolin, G.~Sciavicco), the Spanish project \emph{TIN15-70266-C2-P-1} (E.~Mu\~noz-Velasco), and the EPSRC UK project EP/M012670 `\emph{iTract: Islands of Tractability in Ontology-Based Data Access}' (M.~Zakharyaschev).
\end{bottomstuff}

\maketitle


\section{Introduction}

Our concern in this paper is the satisfiability problem for Horn fragments of the interval temporal (or modal) logic introduced by Halpern and Shoham~\citeyear{HalpernS91} and known since then under the moniker $\HS$. Syntactically, \HS{} is a classical  propositional logic with modal diamond operators of the form $\langle \R \rangle$, where $\R$ is one of Allen's~\citeyear{allen83} twelve interval relations: \emph{After}, \emph{Begins}, \emph{Ends}, \emph{During}, \emph{Later}, \emph{Overlaps} and their inverses. The propositional variables of \HS{} are interpreted by sets of  closed intervals $[i,j]$ of some flow of time (such as $\mathbb Z$, $\mathbb R$, etc.), and a formula $\langle \R\rangle \varphi$ is regarded to be true in $[i,j]$ if and only if $\varphi$ is true in some interval $[i',j']$ such that $[i,j] \R [i',j']$ in Allen's interval algebra.

The elegance and expressive power of \HS{} have attracted attention of the temporal and modal communities, as well as many other areas of computer science, AI, philosophy and linguistics, e.g.,~\cite{AllenAI84,CauHDZMMS02,IMM2004-02867,CimattiRT15,DMGMS11,pratt}.
However, promising applications have been hampered by the fact, already discovered by Halpern and Shoham~\citeyear{HalpernS91}, that \HS{} is highly undecidable (for example, validity over $\mathbb Z$ and $\mathbb R$ is $\Pi^1_1$-hard).

A quest for `tame' fragments of \HS{} began in the 2000s, and has resulted in a substantial body of literature that identified a number of ways of reducing the expressive power of $\HS$:
\begin{itemize}
\item \emph{Constraining the underlying temporal structures.} Montanari et al.~\citeyear{Montanari02} interpreted their Split Logic \SL{} over structures where every interval can be chopped into at most a constant number of subintervals. \SL\ shares the syntax with \HS\ and \CDT~\cite{chopping_intervals} and can be seen as their decidable variant.

\item \emph{Restricting the set of modal operators.} Complete  classifications of decidable and undecidable fragments of \HS{} have been obtained for finite linear orders (62 decidable fragments), discrete linear orders (44), $\mathbb N$ (47), $\mathbb Z$ (44), and dense linear orders (130). For example, over finite linear orders, there are two maximal decidable fragments with the relations $\after,\afterb,\begins,\beginsb$ and $\after,\afterb, \eds,\edsb$, both of which are non-primitive recursive. Smaller fragments may have lower complexity: for example, the $\begins,\beginsb, \later,\laterb$  fragment is \NP-complete, $\after,\afterb$ is \NExpTime-complete, while $\after,\begins,\beginsb,\laterb$ is \ExpSpace-complete. For more details, we refer the reader
to~\cite{undecidability_BE,abba_finite,DBLP:conf/ecai/BresolinMMSS12,DBLP:journals/corr/abs-1210-2479,lata15} and references therein.

\item \emph{Softening semantics}. Allen~\citeyear{allen83} and  Halpern and Shoham~\citeyear{HalpernS91} defined the semantics of interval relations using the irreflexive $<$: for example, $[x,y]\later [x',y']$ if and only if $y < x'$. By `softening' $<$ to reflexive $\le$ one can make the undecidable $\during$ fragment of \HS~\cite{DBLP:journals/fuin/MarcinkowskiM14} decidable and \PSpace-complete~\cite{DBLP:conf/time/MontanariPS10}.

\item \emph{Relativisations.} The results of Schwentick and Zeume~\citeyear{Schwenticketal10} imply that some undecidable fragments of $\HS$ become decidable if one allows models in which not all the possible intervals of the underlying linear order are present.

\item \emph{Restricting the nesting of modal operators.} Bresolin et al.~\citeyear{light2013} defined a decidable fragment of \CDT\ that mimics the behaviour of the (\NP-complete) Bernays-Sch\"oenfinkel fragment of first-order logic, and one can define a similar fragment of \HS.

\item \emph{Coarsening relations}. Inspired by Golumbic and Shamir's~\citeyear{GolumbicS93} coarser interval algebra, Mu\~{n}oz-Velasco et al.~\citeyear{Munoz-VelascoPS15} reduce the expressive power of \HS\ by defining interval relations that correspond to (relational) unions of Allen's relations. They proposed two coarsening schemata, one of which turned out to be \PSpace-complete.
\end{itemize}

In this article, we analyse a different way of taming the expressive power of logic formalisms while retaining their usefulness for applications, viz., taking Horn fragments. Universal first-order Horn sentences $\forall \avec{x} (A_1 \land \ldots \land A_n \to A_0)$ with atomic $A_i$ are rules (or clauses) of the programming language Prolog. Although Prolog itself is undecidable due to the availability of functional symbols, its function-free subset Datalog, designed for interacting with databases, is \ExpTime-complete for combined complexity, even \PSpace-complete when restricted to predicates of bounded arity, and \PTime-complete in the propositional case~\cite{dan01}. Horn fragments of the Web Ontology Language \OWL~\cite{owl} such as the tractable profiles \OWLQL{} and \OWLEL{} were designed for ontology-based data access via query rewriting and applications that require ontologies with very large numbers of properties and classes (e.g., SNOMED CT). More expressive decidable Horn knowledge representation formalisms have been designed in Description Logic~\cite{DBLP:journals/jar/HustadtMS07,DBLP:journals/tocl/KrotzschRH13}, in particular, temporal description logics; see~\cite{LuWoZa-TIME-08,DBLP:journals/tocl/ArtaleKRZ14} and references therein. Horn fragments of modal and (metric) temporal logics have also been considered~\cite{del1987note,ChenLin93,DBLP:journals/tcs/ChenL94,nguyen2004complexity,artale2013complexity,DBLP:conf/aaai/BrandtKKRXZ17}.

In the context of the Halpern-Shoham logic, we observe first that any \HS-formula can be transformed to an equisatisfiable formula in \emph{clausal normal form}:
\begin{equation}\label{normal}
\varphi \ \  ::= \ \ \lambda \ \ \mid \ \ \neg \lambda  \ \ \mid \ \  \U (\neg \lambda_1 \lor \dots \lor \neg \lambda_n \lor
\lambda_{n+1} \lor \dots \lor \lambda_{n+m})  \ \ \mid \ \  \varphi_1 \land \varphi_2,
\end{equation}
where $\univ$ is the \emph{universal relation} (which can be expressed via the interval relations as
$\U\psi = \bigwedge_{\R}(\psi\wedge[\R]\psi\wedge[\bar{\R}]\psi)$),
and $\lambda$ and the $\lambda_i$ are (\emph{positive temporal}) \emph{literals} given by
\begin{equation}\label{literal}
\lambda \ ::= \ \ \top \ \ \mid \ \ \ \ \bot \ \ \mid \ \ p
\ \ \mid \ \ \langle \R \rangle \lambda \ \ \mid\ \ [\R]\lambda,
\end{equation}
with $\R$ being one of the interval relations and $p$ a \emph{propositional variable} and $[\R]$ the dual of $\langle \R \rangle$.
We now define the \emph{Horn fragment} $\HSh$ of \HS{} as comprising the formulas given by the grammar
\begin{equation}\label{horn}
\varphi \  ::=  \ \ \lambda
\ \ \mid \ \  \U (\lambda_1 \land \dots \land \lambda_{k} \to \lambda)
 \ \ \mid \ \  \varphi_1 \land \varphi_2.
\end{equation}
The conjuncts of the form $\lambda$ are called the \emph{initial conditions} of $\varphi$, and those of the form $\U(\lambda_1 \land \dots \land \lambda_{k} \to \lambda)$ the \emph{clauses} of $\varphi$.
We also consider the $\HShb$ fragment of $\HSh$, whose formulas do not contain occurrences of diamond operators $\langle \R\rangle$, and the $\HShd$ fragment whose formulas do not contain box operators $[\R]$.
We denote by $\HSc$ ($\HScb$ or $\HScd$) the fragment of $\HSh$ (respectively, $\HShb$ or $\HShd$) with only clauses of the form $\U (\lambda_1 \to \lambda_2)$ and $\U (\lambda_1 \land \lambda_2 \to \bot)$. We remind the reader that propositional Horn logic is $\PTime$-complete, while the (core) logic of binary Horn clauses is \NLogSpace-complete.

We illustrate the expressive power of the Horn fragments introduced above by a few examples describing constraints on a summer school timetable. The clause
\begin{equation*}
\U (\DbD \textit{MorningSession} \land  \textit{AdvancedCourse} \to \bot)
\end{equation*}
says that advanced courses cannot be given during the morning sessions defined by
\begin{equation*}
\U (\Dv \textit{LectureDay} \land \AD \textit{Lunch} \leftrightarrow \textit{MorningSession}).
\end{equation*}
The clause
\begin{equation*}
\U (\textit{teaches} \to \DB\textit{teaches})
\end{equation*}
claims that \textit{teaches} is {\em downward hereditary} (or {\em stative}) in the sense that if it holds in some interval, then it also holds in all of its sub-intervals. If, instead, we want to state that \textit{teaches} is {\em upward hereditary} (or {\em coalesced}) in the sense that \textit{teaches} holds in any interval covered by sub-intervals where it holds, then we can use the clause\footnote{Here we assume that the interval relations are reflexive; see Section~\ref{defs}.}
$$
\U \big( \DB (\OD \textit{teaches} \lor
  \DbD \textit{teaches}) \land
  \BD \textit{teaches} \land \Dh
  \textit{teaches} \to \textit{teaches}\big).
$$
By removing the last two conjuncts on the left-hand side of this clause, we make sure that \textit{teaches} is both upward and downward hereditary. For a discussion of these notions in temporal databases, consult~\cite{DBLP:conf/vldb/BohlenSS96,TerenzianiS:tkde04}. Note also that all of the above example clauses---apart from the implication $\leftarrow$ in the second one---are equisatisfiable to $\HShb$-formulas (see Section~\ref{defs} for details).

\begin{table}[t!]
\tbl{Horn and core $\HS$-satisfiability over various linear orders.\label{t:results}}{
\begin{tabular}{|l|c|c|}
\hline
&&\\
& Irreflexive semantics & Reflexive semantics \\[5pt]
\hline\hline
&  \multicolumn{2}{c|}{\ } \\
$\HSh$ & \multicolumn{2}{c|}{undecidable$^*$ (Thm.~\ref{t:undechorn})} \\[5pt]
\hline
&& \\
$\HSc$ & undecidable$^*$ (Thm.~\ref{t:undeccoreirrefl}) & \PSpace-hard$^*$ (Thm.~\ref{t:pspacecore})\\[5pt]
&& \framebox{decidable?}\\[5pt]
\hline
&  \multicolumn{2}{c|}{\ } \\
$\HShd$ & \multicolumn{2}{c|}{undecidable$^*$ (Thm.~\ref{t:undechorn})} \\[5pt]
\hline
&  \multicolumn{2}{c|}{\ } \\
$\HScd$ & \multicolumn{2}{c|}{\framebox{decidable?}} \\[5pt]
\hline
&& \\
& discrete: undecidable (Thm.~\ref{t:undechornboxirrefldisc}) & \\
$\HShb$ & & \PTime-complete (Thm.~\ref{t:inp})\\
& \hspace*{-.5cm}dense: \PTime-complete (Thm.~\ref{t:inp}) &\\[5pt]
\hline
&& \\
& discrete:  \PSpace-hard (Thm.~\ref{t:pspacecoreboxirrefldisc}) & \\[5pt]
$\HScb$ & \hspace*{-.2cm}\framebox{decidable?} & in \PTime\ (Thm.~\ref{t:inp})\\[7pt]
& \hspace*{-1.5cm}dense: in  \PTime\ (Thm.~\ref{t:inp})
&
\\[5pt]
\hline
\end{tabular}}
\hspace{2cm}\footnotesize $^*$actually holds for any class of linear orders containing unbounded orders.
\end{table}

\paragraph{Our contribution} In this article,
we investigate the satisfiability problem for the Horn fragments of \HS{} along two main axes. We consider:
\begin{itemize}
\item both the standard `irreflexive' semantics for \HS-formulas given by~Halpern and Shoham~\citeyear{HalpernS91} and its reflexive variant

\item over classes of discrete and dense linear orders (such as $(\Z,\le)$ and $(\mathbb R,\le)$), and general linear orders.
\end{itemize}
The obtained results are summarised in Table~\ref{t:results}. Most surprising is the computational behaviour of $\HShb$, which turns out to be undecidable over discrete orders under the irreflexive semantics (Theorem~\ref{t:undechornboxirrefldisc}), but becomes tractable under all other choices of semantics  (Theorem~\ref{t:inp}).
The tractability result, coupled with the ability of $\HShb$-formulas to express interesting temporal constraints, suggests that $\HShb$ can form a basis for  tractable interval temporal ontology languages that can be used for ontology-based data access over temporal databases or streamed data. Some preliminary steps in this direction have been made by Artale et al.~\citeyear{artale2015} and Kontchakov et al.~\citeyear{IJCAI16}. We briefly discuss applications of $\HShb$ for temporal ontology-based data access in Section~\ref{sec:app}.

On the other hand, the undecidability of $\HShb$ over discrete orders with the irreflexive semantics prompted us to investigate possible sources of high complexity.
\begin{itemize}
\item
What is the crucial difference between the irreflexive discrete and other semantic choices?
In
discrete models, there is a natural notion of `interval length'\!. With the irreflexive semantics, one can single out intervals of any `fixed' length
using very simple ($\HScb$) formulas: for example, $[\R]\bot$, where $\R$ is either $\eds$ or $\begins$, defines either intervals of length 0 (punctual intervals) or of length 1 (depending on whether one allows punctual intervals or not). Looking at $\HS$-models from the 2D perspective as in Fig.~\ref{f:2DAllen}, we see that intervals of the same fixed length  form a \emph{diagonal\/}.
 Such a `definable' diagonal might provide us with some kind of `horizontal' and `vertical' counting capabilities along the 2D grid, even though the horizontal and vertical `next-time operators' are not available in $\HS$.
It is a well-known fact about 2D \emph{modal product logics} that, if such a `unique controllable diagonal' is expressible in a logic, then the satisfiability problem for the logic is of high complexity~\cite{many_dimensional_modal_logics}. Here we show that $\HShb$ has sufficient counting power to make it undecidable (Theorem~\ref{t:undechornboxirrefldisc}), and that even the  seemingly very limited expressiveness of $\HScb$ is still enough to make it
$\PSpace$-hard (Theorem~ \ref{t:pspacecoreboxirrefldisc}).

\item
When $\D$-operators are available, even if the models are reflexive and/or dense,
one can generate a unique sequence of `diagonal-squares' (like on
a chessboard) and perform some horizontal and vertical counting on   it.
In particular, bimodal logics over products of (reflexive/irreflexive) linear orders \cite{undecidability_compass_logic,Reynolds01122001}
and also over products of various transitive (not necessarily linear) relations
\cite{gkwz05a} are all shown to be undecidable in this way.
It follows that full Boolean $\HS$-satisfiability with the reflexive semantics over any
unbounded timelines is undecidable.
Here we generalise this methodology and show that undecidability still holds even within the
$\HShd$-fragment (Theorem~\ref{t:undechorn}).
\item
We also analyse to what extent the above techniques can be applied within the \emph{core} fragments having $\D$-operators. We develop a few new `tricks'  that encode a certain degree of `Horn-ness'
to prove intractable lower bounds for $\HSc$-satisfiability: undecidability with the irreflexive semantics
(Theorem~\ref{t:undeccoreirrefl}) and $\PSpace$-hardness with the reflexive one (Theorem~\ref{t:pspacecore}).
\end{itemize}
%
The undecidability of $\HSh$ under the irreflexive semantics was established in a  conference paper by Bresolin et al.~\citeyear{DBLP:conf/jelia/BresolinMS14}, and the tractability of $\HShb$ over $(\Z,\le)$ under the reflexive semantics by Artale et al.~\citeyear{artale2015}.


\section{Semantics and notation}\label{defs}

$\HS$-formulas are interpreted over the set of intervals of any  linear order\footnote{Originally, Halpern and Shoham~\citeyear{HalpernS91} also consider more complex temporal structures based on partial orders with {\em linear intervals} such that, whenever $x \le y$, the closed interval $ \{z \in T \mid x \le z \le y\}$ is linearly ordered by $\le$. In particular, trees are temporal structures in this sense.} {$\T = ( T,\leq)$} (where $\le$ is a reflexive, transitive, antisymmetric and connected binary relation on $T$). As usual, we use $x<y$ as a shortcut for `$x\leq y$ and $x\ne y$'\!. The linear order $\T$ is
\begin{itemize}
\item \emph{dense} if, for any $x,y\in T$ with $x< y$, there exists $z$ such that $x< z< y$;

\item \emph{discrete} if every non-maximal $x\in T$ has an immediate $<$-successor, and every non-minimal $x \in T$ has an immediate $<$-predecessor.
\end{itemize}
Thus, the rationals $(\mathbb Q,\le)$ and reals $(\mathbb R,\le)$ are dense orders, while the integers \mbox{$(\mathbb Z,\le)$} and the natural numbers $(\mathbb N,\le)$ are  discrete. Any finite linear order is obviously discrete. We denote by $\Lin$ the class of all linear orders, by $\Fin$ the class of all finite linear orders, by $\Dis$ the class of all discrete linear orders, and by $\Den$ the class of all dense linear orders. We say that a linear order \emph{contains an infinite ascending} (\emph{descending}) \emph{chain} if it has a sequence of points $x_n$, $n < \omega$, such that $x_0<x_1<\dots <x_n<\dots$ (respectively, $x_0>x_1>\dots >x_n>\dots$).
Clearly, any infinite linear order contains an infinite ascending or an infinite descending chain.

Following Halpern and Shoham~\citeyear{HalpernS91}, by an \emph{interval in\/} $\T$  we mean any ordered pair
$\auf x,y\zu$ such that $x\le y$, and denote by $\int(\T)$ the set of all intervals in $\T$. Note that $\int(\T)$ contains all the  \emph{punctual intervals} of the form $\auf x,x\zu$, which is often referred to as the {\em non-strict} semantics. Under the {\em strict} semantics adopted by Allen~\citeyear{allen83}, punctual intervals are disallowed.
All
of our results hold for both semantics, with slight adjustments in the proofs in case of the strict semantics.
We define the interval relations over $\int(\T)$ in the same way as Halpern and Shoham~\citeyear{HalpernS91} by taking (see Fig.~\ref{f:2DAllen}):
\begin{itemize}
\item  $\auf x_1,y_1 \zu \after \auf x_2,y_2\zu$ iff
\footnote{It is to be noted that there exist slightly different versions of $\after$ and $\afterb$ in the
literature. All of our results hold with those versions as well, with slight adjustments in the proofs.}
$y_1=x_2$ and $x_2<y_2$;\hfill (After)

\item  $\auf x_1,y_1 \zu \begins \auf x_2,y_2 \zu$ iff $x_1=x_2$ and $y_2 < y_1$;\hfill (Begins)

\item  $\auf x_1,y_1 \zu \eds \auf x_2,y_2 \zu$ iff $x_1 < x_2$ and $y_1 = y_2$;\hfill (Ends)

\item  $\auf x_1,y_1 \zu \during \auf x_2,y_2 \zu$ iff $x_1 < x_2$ and $y_2 < y_1$;\hfill (During)

\item  $\auf x_1,y_1 \zu \later \auf x_2,y_2 \zu$ iff $y_1 < x_2$;\hfill (Later)

\item  $\auf x_1,y_1 \zu \overlaps \auf x_2,y_2 \zu$ iff $x_1 < x_2 < y_1 < y_2$;\hfill (Overlaps)

\item  $\auf x_1,y_1 \zu \afterb \auf x_2,y_2\zu$ iff  $y_2=x_1$ and $x_2<y_2$;

\item  $\auf x_1,y_1 \zu \beginsb \auf x_2,y_2 \zu$ iff $x_1=x_2$ and $y_1 < y_2$;

\item  $\auf x_1,y_1 \zu \edsb \auf x_2,y_2 \zu$ iff $x_2 < x_1$ and $y_1 = y_2$;

\item  $\auf x_1,y_1 \zu \duringb \auf x_2,y_2 \zu$ iff $x_2 < x_1$ and $y_1 < y_2$;

\item  $\auf x_1,y_1 \zu \laterb \auf x_2,y_2 \zu$ iff $y_2 < x_1$;

\item  $\auf x_1,y_1 \zu \overlapsb \auf x_2,y_2 \zu$ iff $x_2 < x_1 < y_2 < y_1$.

%
\begin{figure}[ht]
\begin{center}
\setlength{\unitlength}{.045cm}
\begin{picture}(120,150)(30,10)
\multiput(65,15)(0,2){68}{\circle*{.1}}
\multiput(95,15)(0,2){68}{\circle*{.1}}

\thicklines
\put(65,145){\line(1,0){30}}
\put(91,147){$i$}

\put(16,129){$i\,\after\,j$}
\put(95,130){\line(1,0){25}}
\put(116,133){$j$}
\put(16,119){$i\,\begins\,j$}
\put(65,120){\line(1,0){25}}
\put(86,123){$j$}
\put(16,109){$i\,\eds\,j$}
\put(70,110){\line(1,0){25}}
\put(91,113){$j$}
\put(16,99){$i\,\during\,j$}
\put(70,100){\line(1,0){15}}
\put(81,103){$j$}
\put(16,89){$i\,\later\,j$}
\put(105,90){\line(1,0){20}}
\put(121,93){$j$}
\put(16,79){$i\,\overlaps\,j$}
\put(75,80){\line(1,0){30}}
\put(101,83){$j$}

\put(16,69){$i\,\afterb\,j$}
\put(35,70){\line(1,0){30}}
\put(61,73){$j$}
\put(16,59){$i\,\beginsb\,j$}
\put(65,60){\line(1,0){37}}
\put(98,63){$j$}
\put(16,49){$i\,\edsb\,j$}
\put(45,50){\line(1,0){50}}
\put(91,53){$j$}
\put(16,39){$i\,\duringb\,j$}
\put(35,40){\line(1,0){75}}
\put(106,43){$j$}
\put(16,29){$i\,\laterb\,j$}
\put(35,30){\line(1,0){20}}
\put(51,33){$j$}
\put(16,19){$i\,\overlapsb\,j$}
\put(35,20){\line(1,0){50}}
\put(81,23){$j$}
\end{picture}
\hspace*{1cm}
\begin{picture}(120,138)(0,-10)
\thinlines
\put(-6,-6){\vector(1,0){136}}
\put(-6,-6){\vector(0,1){136}}
\put(114,-17){$( T,\leq)$}
\put(-12,134){$(T,\leq)$}
\put(1,1){\line(1,1){130}}

\thicklines
\put(1,45){\line(1,0){15}}
\put(25,45){\line(1,0){20}}
\put(1,95){\line(1,0){15}}
\put(25,95){\line(1,0){18}}
\put(45,95){\circle*{2.5}}
\put(47,95){\line(1,0){18}}
\put(73,95){\line(1,0){22}}
\put(45,45){\line(0,1){20}}
\put(45,75.5){\line(0,1){17}}
\put(45,98){\line(0,1){13}}
\put(45,121){\line(0,1){9}}
\put(95,95){\line(0,1){15}}
\put(95,121){\line(0,1){9}}

\put(105,120){$\later$}
\put(42,113){$\beginsb$}
\put(92,113){$\after$}
\put(18,113){$\duringb$}
\put(66,113){$\overlaps$}
\put(18,93){$\edsb$}
\put(66,93){$\eds$}
\put(18,68){$\overlapsb$}
\put(55,80){$\during$}
\put(42,68){$\begins$}
\put(18,43){$\afterb$}
\put(12,30){$\laterb$}
\end{picture}
\end{center}
\caption{The interval relations and their 2D representation.}\label{f:2DAllen}
\end{figure}

\end{itemize}
%
Observe that all of  these relations are irreflexive, so we refer to the definition above  as the \emph{irreflexive semantics}. As an alternative, we also consider the \emph{reflexive semantics}, which is obtained by replacing each $<$ with $\le$. We write $\T(\le)$ or $\T(<)$ to indicate that the semantics is reflexive or, respectively, irreflexive.
When formulating results where the choice of semantics for each interval relation does not matter, we use the term \emph{arbitrary semantics\/}.\!\footnote{It may be of interest to note that the query language SQL:2011 has seven interval temporal operators three of which are under the reflexive semantics and four under the irreflexive one~\cite{KulkarniM12}.}

As observed by Venema \shortcite{compass_logic}, if we represent intervals $\auf x,y\zu \in \int(\T)$ by points $(x,y)$ of the `north-western' subset of the two-dimensional Cartesian product \mbox{$T\times T$}, then $\int(\mathfrak T)$ together with the interval relations (under any semantics) forms a multimodal Kripke
frame (see Fig.~\ref{f:2DAllen}). We denote it by $\F_{\T}$ and call an $\HS$-\emph{frame\/}.%
\!\footnote{Note that if we consider $\T=(T,\le)$ as a unimodal Kripke frame, then $\bigl(\int(\T),\eds,\beginsb\bigr)$ with the reflexive semantics is an expanding subframe of the \emph{modal product frame} $\T\times\T$; see
\cite[Section~3.9]{many_dimensional_modal_logics}.}
Given a linear order $\T$, an \HS-\emph{model based on\/} $\T$ is a pair $\M=(\F_{\T}, \nu)$, where $\F_{\T}$ is an $\HS$-frame and $\nu$ a function from the set $\mathcal{P}$ of propositional variables to subsets of $\int(\T)$.
%
%
%
The \emph{truth-relation} $\M,\auf x,y\zu \models \varphi$, for an $\HSh$-formula $\varphi$ (read: $\varphi$ \emph{holds at} $\auf x,y\zu$ in $\M$), is defined inductively as follows, where $\R$ is any interval relation:
\begin{itemize}
\item
$\M,\auf x,y\zu\models \top$ and $\M,\auf x,y\zu\not\models \bot$, for any $\auf x,y\zu\in\int(\mathfrak T)$;
\item
$\M,\auf x,y\zu\models p$ iff $\auf x,y\zu\in\nu(p)$, for any $p \in \mathcal{P}$;
\item
$\M,\auf x,y\zu\models\langle \R\rangle \lambda$ iff there exists $\auf x',y'\zu$ such that
$\auf x,y\zu \R \auf x',y'\zu$ and $\M,\auf x',y'\zu\models\lambda$;
\item
$\M,\auf x,y\zu\models[\R] \lambda$ iff, for every $\auf x',y'\zu$ with
$\auf x,y\zu \R \auf x',y'\zu$, we have $\M,\auf x',y'\zu\models\lambda$;
\item
$\M,\auf x,y\zu\models[\univ](\lambda_1 \land \dots \land \lambda_{k} \to \lambda)$ iff,
for every $\auf x',y'\zu\in \int(\T)$ with $\M,\auf x',y'\zu\models\lambda_i$ for $i=1,\ldots,k$, we have $\M,\auf x',y'\zu\models\lambda$;
\item
$\M,\auf x,y\zu\models\varphi_1\land\varphi_2$ iff $\M,\auf x,y\zu\models\varphi_1$ and $\M,\auf x,y\zu\models\varphi_2$.
\end{itemize}
A model $\M$ based on $\T$ \emph{satisfies} $\varphi$ if $\M,\auf x,y\zu\models\varphi$, for some $\auf x,y\zu \in \int(\T)$.  
Given a class $\CC$ of linear orders, we say that a formula $\varphi$ is $\CC$-\emph{satisfiable} (respectively, $\CC(\le)$- or $\CC(<)$-\emph{satisfiable}) if it is satisfiable in an $\HS$-model based on some order from $\CC$ under the arbitrary (respectively, reflexive or  irreflexive) semantics.


To facilitate readability, we use the following \emph{syntactic sugar}, where $\psi = \lambda_1 \land \dots \land \lambda_{k}$:
\begin{itemize}
\item $\U(\psi \to\neg\lambda)$ as an abbreviation for
  $\U(\psi \land\lambda\to\bot)$;

\item $\U \bigl(\psi \to \lambda_1' \land \dots \land \lambda_{n}')$  as an abbreviation for
\[
\bigwedge_{i=1}^{n}\U \bigl(\psi \to \lambda_i');
\]
\item $\U \bigl(\psi \to [\R](\lambda_1' \land \dots \land \lambda_{n}' \to \lambda )\bigr)$ as an abbreviation for
\[
\U (\psi \to [\R] p)\ \land\
\U(p \land \lambda_1' \land \dots \land \lambda_{n}' \to \lambda),
\]
where $p$ is a fresh variable, and similarly for $\langle\R\rangle$ in place of $[\R]$.
\end{itemize}
Note also that $\U(\langle \R\rangle \lambda \land \psi \to \lambda')$ is equivalent to $\U(\lambda \to [\bar\R] (\psi
  \to \lambda'))$. This allows us to use $\langle \R \rangle$ on the left-hand side of the clauses in $\HShb$-formulas, and $[\R]$ on the right-hand side of the clauses in $\HShd$-formulas.



\section{Tractability of $\HShb$}\label{upperb}

Let $\T = (T,\le)$ be a linear order, $\auf a,b \zu \in \int(\T)$, and let $\varphi$ be an $\HShb$-formula. Suppose we want to check whether there exists a model $\M$ based on $\T$ such that $\M,\auf a,b \zu \models \varphi$ under the reflexive (or irreflexive) semantics, in which case we will say that $\varphi$ is \emph{$\auf a,b \zu$-satisfiable in $\T(\le)$} (respectively, $\T(<)$).
Let $\lhd \in \{\le,<\}$. We set
$$
\chase_\varphi =\{ \lambda@\auf a,b \zu \mid \lambda \text{ an initial condition of $\varphi$} \} \cup \{\top@\auf x,y \zu \mid \auf x,y \zu \in \int (\T) \}
$$
and denote by $\cl(\chase_\varphi)$ the result of applying non-recursively the following rules to $\chase_\varphi$, where $\R$ is any interval relation in $\T(\lhd)$:
\begin{itemize}
\item[(cl1)] if $[\R]\lambda @ \auf x,y\zu \in \chase_\varphi$, then we add to $\chase_\varphi$ all $\lambda @\auf x',y'\zu$ such that $\auf x',y'\zu\in \int(\T)$ and $\auf x,y\zu \R \auf x',y'\zu$;

\item[(cl2)] if $\lambda @ \auf x',y'\zu \in \chase_\varphi$ for all $\auf x',y'\zu\in \int(\T)$ such that $\auf x,y\zu \R \auf x',y'\zu$ and $[\R]\lambda$ occurs in $\varphi$, then we add $[\R]\lambda @ \auf x,y\zu$ to $\chase_\varphi$;

\item[(cl3)] if $\U (\lambda_1 \land \dots \land \lambda_{k} \to \lambda)$ is a clause of $\varphi$ and $\lambda_i @ \auf x,y\zu \in \chase_\varphi$, for $1 \le i \le k$, then we add $\lambda @\auf x,y\zu$ to $\chase_\varphi$.
\end{itemize}
Now, we set $\cl^0(\chase_\varphi) = \chase_\varphi$ and, for any successor ordinal $\alpha +1$ and limit ordinal $\beta$,
$$
\cl^{\alpha +1}(\chase_\varphi) = \cl(\cl^\alpha(\chase_\varphi)), \qquad \cl^\beta (\chase_\varphi) = \bigcup_{\alpha < \beta} \cl^{\alpha}(\chase_\varphi)\quad \text{and} \quad \cl^*(\chase_\varphi) = \bigcup_{\gamma \text{ an ordinal}} \cl^\gamma (\chase_\varphi).
$$
Define an \HS-model $\C_\varphi^{\auf a,b\zu} = (\F_{\T}, \nu)$ based on $\T(\lhd)$ by taking, for every variable $p$,
$$
\nu(p) = \{ \auf x,y\zu \mid p @ \auf x,y\zu \in \cl^* (\chase_\varphi)\}.
$$
\begin{figure}
\begin{center}
\begin{tikzpicture}[scale=1.2,%
point/.style={draw, thick, circle, inner sep=1, outer
    sep=6},%
    >=latex,
  tornout/.style={very thin, 
  decorate,
    decoration={random steps, amplitude=1pt, segment length=3pt}},%
    segmentpattern/.style={pattern=dots, pattern color = gray}
  ]

\def\shift{.3}
\footnotesize

  \node[point] (p00) at (0,0) {};
  \node[point] (p-10) at (-1,0) {};
  \node (p-20) at (-2,0) {$\dots$};
  \node[point] (p-30) at (-3,0) {};
  \node (p-40) at (-4,0) {$\dots$};
  \node[point] (p11) at (1,1) {};
  \node[point] (p01) at (0,1) {};
  \node[point] (p-11) at (-1,1) {};
  \node (p-21) at (-2,1) {$\dots$};
  \node[point] (p-31) at (-3,1) {};
  \node (p-41) at (-4,1) {$\dots$};
  \node[point] (p22) at (2,2) {};
  \node[point] (p12) at (1,2) {};
  \node[point] (p02) at (0,2) {};
  \node[point] (p-12) at (-1,2) {};
  \node (p-22) at (-2,2) {$\dots$};
  \node[point] (p-32) at (-3,2) {};
  \node (p-42) at (-4,2) {$\dots$};
  \draw[->]  (p00) -- (p11);
  \draw[->] (p11) -- (p22);
  \node[xshift=0.6cm] at (p00) {$(0,0)$};
  \node[yshift=-0.3cm] at (p00) {$p$};
  \node[yshift=0.4cm] at (p00) {$[\edsb] p$};
  \node[yshift=-0.3cm] at (p-10) {$p$};
  \node[yshift=-0.3cm] at (p-30) {$p$};

  \draw[->] (p00) -- (p-10);
  \draw[->] (p-10) -- (p-20);
  \draw[->] (p-20) -- (p-30);
  \draw[->] (p-30) -- (p-40);
  \draw[->] (p-40) to[bend left=15] (p00);


  \node[xshift=0.6cm] at (p11) {$(1,1)$};
  \node[yshift=0.4cm] at (p11) {$[\edsb] q$};
  \node[yshift=-0.3cm] at (p11) {$q$};
  \node[yshift=-0.3cm] at (p01) {$q$};
  \node[yshift=-0.3cm] at (p-11) {$q$};
  \node[yshift=-0.3cm] at (p-31) {$q$};

  \draw[->] (p11) -- (p01);
  \draw[->] (p01) -- (p-11);
  \draw[->] (p-11) -- (p-21);
  \draw[->] (p-21) -- (p-31);
  \draw[->] (p-31) -- (p-41);
  \draw[->] (p-41) to[bend left=15] (p11);


  \node[xshift=0.6cm] at (p22) {$(2,2)$};
  \node[yshift=0.4cm] at (p22) {$[\edsb] p$};
  \node[yshift=-0.3cm] at (p22) {$p$};
  \node[yshift=-0.3cm] at (p12) {$p$};
  \node[yshift=-0.3cm] at (p02) {$p$};
  \node[yshift=-0.3cm] at (p-12) {$p$};
  \node[yshift=-0.3cm] at (p-32) {$p$};

  \draw[->] (p22) -- (p12);
  \draw[->] (p12) -- (p02);
  \draw[->] (p02) -- (p-12);
  \draw[->] (p-12) -- (p-22);
  \draw[->] (p-22) -- (p-32);
  \draw[->] (p-32) -- (p-42);
  \draw[->] (p-42) to[bend left=15] (p22);

 \draw[dotted] (p22) -- ($(p22)+(.5,.5)$);

\end{tikzpicture}
\caption{The sequence of the canonical model construction for $(\mathbb Z,\le)$.}
\label{can-mod-sequence}
\end{center}
\end{figure}
\begin{example}
Let $\T = (\mathbb Z,\le)$. The model $\C_\varphi^{\auf 0,0\zu}$ based on $\T(<)$ for the $\HShb$-formula
\begin{multline*}
\varphi = p \land \U ([\eds]p \land \auf \eds\zu \top \to p) \land \U ([\eds]q \land \auf \eds\zu \top \to q) \land{}\\ \U (\langle \edsb \rangle [\begins][\edsb]p \to q) \land \U (\langle \edsb \rangle [\begins] [\edsb]q \to p)
\end{multline*}
is shown in Fig.~\ref{can-mod-sequence}. Note that the construction of $\C_\varphi^{\auf 0,0\zu}$ requires $\omega^2$ applications of $\cl$.
\end{example}

\begin{theorem}\label{canonical}
An $\HShb$-formula $\varphi$ is $\auf a,b \zu$-satisfiable in $\T(\lhd)$ if and only if $\bot @ \auf x,y\zu \notin \cl^* (\chase_\varphi)$, for any $\auf x,y\zu$.
Furthermore, if some model $\M$ over $\T(\lhd)$ satisfies $\varphi$ at $\auf a,b \zu$, then $\C_\varphi^{\auf a,b \zu}, \auf a,b \zu \models \varphi$ and, for any $\auf x,y\zu\in \int(\T)$ and any variable $p$, $\C_\varphi^{\auf a,b \zu},\auf x,y\zu \models p$ implies $\M,\auf x,y\zu \models p$.
\end{theorem}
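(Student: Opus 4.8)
The plan is to prove the theorem as a standard \emph{canonical model} argument, splitting it into a soundness direction and a completeness direction, with the bulk of the work being the verification that $\C_\varphi^{\auf a,b \zu}$ is actually a model of $\varphi$ whenever the closure is $\bot$-free.

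\medskip

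\noindent\textbf{Soundness (the ``only if'' direction).} Suppose $\bot @ \auf x,y\zu \in \cl^*(\chase_\varphi)$ for some $\auf x,y\zu$. I would argue by ordinal induction on the stage $\gamma$ at which an annotated literal $\mu @ \auf x,y\zu$ first enters $\cl^\gamma(\chase_\varphi)$ that \emph{every} model $\M$ over $\T(\lhd)$ with $\M,\auf a,b \zu \models \varphi$ must satisfy $\M,\auf x,y\zu \models \mu$. The base case $\gamma = 0$ is immediate from the definition of $\chase_\varphi$ (initial conditions hold at $\auf a,b\zu$, $\top$ holds everywhere). For the inductive step I would check each of the three rules (cl1)--(cl3) in turn: (cl1) is soundness of the semantic clause for $[\R]$; (cl2) uses that $[\R]\lambda$ \emph{occurs in} $\varphi$, so in $\M$ it is a bona fide subformula whose truth at $\auf x,y\zu$ follows from $\lambda$ holding at all $\R$-successors; (cl3) is soundness of clausal modus ponens under $\U$. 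Applying this at $\bot$ gives $\M,\auf x,y\zu \models \bot$, a contradiction, so no such $\M$ exists, i.e.\ $\varphi$ is not $\auf a,b\zu$-satisfiable.

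\medskip

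\noindent\textbf{Completeness (the ``if'' direction) and the ``furthermore'' clause.} Assume $\bot @ \auf x,y\zu \notin \cl^*(\chase_\varphi)$ for all $\auf x,y\zu$. I want to show $\C_\varphi^{\auf a,b\zu}, \auf a,b\zu \models \varphi$. The crucial lemma is a \emph{truth lemma} tying membership in $\cl^*(\chase_\varphi)$ to truth in $\C_\varphi^{\auf a,b\zu}$: for every positive temporal literal $\lambda$ that occurs in $\varphi$ (as a subformula of an initial condition or a clause), and every $\auf x,y\zu \in \int(\T)$,
\[
\lambda @ \auf x,y\zu \in \cl^*(\chase_\varphi) \quad\Longrightarrow\quad \C_\varphi^{\auf a,b\zu},\auf x,y\zu \models \lambda,
\]
and, for the box case, a matching converse for the relevant boxed literals. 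This is proved by induction on the structure of $\lambda$. Variables $p$ are handled by the definition of $\nu$; $\top$ and $\bot$ are trivial (the latter using the $\bot$-freeness hypothesis). For $\lambda = [\R]\lambda'$, the forward direction uses (cl1): if $[\R]\lambda' @ \auf x,y\zu$ is in the closure, then every $\R$-successor carries $\lambda'$, hence by IH satisfies $\lambda'$, so $[\R]\lambda'$ holds there. For the converse (needed because boxed literals may appear on the \emph{left} of clauses), one uses (cl2): the closure is taken to a fixpoint, so if $\lambda' @ \auf x',y'\zu$ is in it for every $\R$-successor $\auf x',y'\zu$ of $\auf x,y\zu$ and $[\R]\lambda'$ occurs in $\varphi$, then $[\R]\lambda' @ \auf x,y\zu$ is added. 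Here I would need the observation (recorded just before Section~\ref{upperb}) that after rewriting, in $\HShb$-formulas diamonds only ever appear on left-hand sides and can be pushed out via $\U(\langle\R\rangle\lambda \land \psi \to \lambda') \equiv \U(\lambda \to [\bar\R](\psi \to \lambda'))$, so that $\C_\varphi^{\auf a,b\zu}$ need only get box-literals and variables ``right'' --- diamonds on the left are subsumed by the box machinery after this normalisation. Once the truth lemma is in place, checking $\C_\varphi^{\auf a,b\zu},\auf a,b\zu\models\varphi$ is routine: initial conditions are in $\chase_\varphi$ at $\auf a,b\zu$, hence hold there; for a clause $\U(\lambda_1 \land \dots \land \lambda_k \to \lambda)$, if all $\lambda_i$ hold at some $\auf x,y\zu$ then by the (converse part of the) truth lemma $\lambda_i @ \auf x,y\zu \in \cl^*(\chase_\varphi)$, so (cl3) puts $\lambda @ \auf x,y\zu$ in the closure, whence $\lambda$ holds at $\auf x,y\zu$ by the forward part. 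Finally, the last assertion of the ``furthermore'' clause --- $\C_\varphi^{\auf a,b\zu},\auf x,y\zu\models p$ implies $\M,\auf x,y\zu\models p$ for any model $\M$ satisfying $\varphi$ at $\auf a,b\zu$ --- is exactly the soundness induction above specialised to $\mu = p$, so $\C_\varphi^{\auf a,b\zu}$ is the \emph{least} model in the pointwise sense.

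\medskip

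\noindent\textbf{Main obstacle.} The delicate point is the \emph{converse} direction of the truth lemma for box literals, i.e.\ showing that $\C_\varphi^{\auf a,b\zu},\auf x,y\zu \models [\R]\lambda'$ forces $[\R]\lambda' @ \auf x,y\zu \in \cl^*(\chase_\varphi)$ when $[\R]\lambda'$ occurs in $\varphi$. Rule (cl2) gives this only if \emph{all} $\R$-successors already carry $\lambda'$ in the closure, so one needs the inner IH (``truth at a successor implies membership'') to be available for $\lambda'$ --- hence the induction must be set up on subformula complexity with both implications carried simultaneously, and one must be careful that the transfinite, non-recursive iteration $\cl^*$ really does reach closure under (cl2) (it does, since $\cl^*$ is a fixpoint of $\cl$ and the relevant $[\R]\lambda'$ are drawn from the finite set of literals occurring in $\varphi$). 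A secondary subtlety is purely bookkeeping: making precise which literals ``occur in $\varphi$'' after the syntactic-sugar rewriting of Section~\ref{defs}, so that (cl2) fires for exactly the box literals the truth lemma needs; the example with $\omega^2$-many applications of $\cl$ shows the iteration genuinely can be transfinite, so one cannot shortcut this with a finitary argument.
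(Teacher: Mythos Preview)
Your proposal is correct and follows essentially the same route as the paper: a truth lemma showing $\lambda @ \auf x,y\zu \in \cl^*(\chase_\varphi)$ iff $\C_\varphi^{\auf a,b\zu},\auf x,y\zu\models\lambda$ (by induction on $\lambda$, with (cl1) and (cl2) handling the two directions for boxes), together with a soundness/minimality argument. The only stylistic difference is that the paper phrases the latter not as an ordinal induction on stages but as the one-line observation that, for any $\M$ with $\M,\auf a,b\zu\models\varphi$, the set $\{\lambda@\auf x,y\zu\mid \M,\auf x,y\zu\models\lambda\}$ contains $\chase_\varphi$ and is closed under (cl1)--(cl3), hence contains $\cl^*(\chase_\varphi)$; your explicit stage induction unfolds exactly this.
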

\begin{proof}
Suppose $\bot @ \auf x,y\zu \notin \cl^* (\chase_\varphi)$. It is easily shown by induction that we have $\lambda @ \auf x,y\zu \in \cl^*(\chase_\varphi)$ iff $\C_\varphi^{\auf a,b \zu}, \auf x,y\zu \models \lambda$.  It follows that $\C_\varphi^{\auf a,b \zu}, \auf a,b \zu \models \varphi$.
Suppose also that $\M, \auf a,b \zu \models \varphi$, for some model $\M$ over $\T(\lhd)$. Denote by $\chase$ the set of $\lambda @ \auf x,y \zu$ such that $\lambda$ occurs in $\varphi$, $\auf x,y\zu\in \int(\T)$ and $\M, \auf x,y \zu \models \lambda$. Clearly, $\chase$ is closed under the rules for $\cl$, and so  $\cl^*(\chase_\varphi) \subseteq \chase$. This observation also shows that if $\varphi$ is $\auf a,b \zu$-satisfiable in $\T(\lhd)$ then $\bot @ \auf x,y\zu \notin \cl^* (\chase_\varphi)$.
\end{proof}

If $\bot @ \auf x,y\zu \notin \cl^* (\chase_\varphi)$, we call $\C_\varphi^{\auf a,b \zu}$ the \emph{canonical} \emph{model of $\varphi$ based on $\T(\lhd)$}. Our next aim is to show that if (\emph{i}) $\T \in \Dis$ and $\lhd$ is $\le$, or (\emph{ii}) $\T \in \Den$ and $\lhd \in \{\le,<\}$, then there is a bounded-size multi-modal Kripke frame $\mathfrak Z^{\auf a,b \zu}$ with a set of worlds $Z$ and an accessibility relation $R$, for every interval relation $\R$, and a surjective map $f \colon \int(\T) \to Z$ such that the following conditions hold:
\begin{itemize}
\item[(p1)] if $\auf x,y \zu \R \auf x',y' \zu$ then $f(\auf x,y \zu) R f(\auf x',y' \zu)$;

\item[(p2)] if $z R z'$ then, for every $\auf x,y \zu \in f^{-1}(z)$, there is $\auf x',y' \zu \in f^{-1}(z')$ with $\auf x,y \zu \R \auf x',y' \zu$;

\item[(p3)] for any variable $p$ and any $z \in Z$, either $f^{-1}(z) \cap \nu(p) = \emptyset$ or $f^{-1}(z) \subseteq \nu(p)$.
\end{itemize}
In modal logic, a surjection respecting the first two properties is called a \emph{p-morphism} (or \emph{bounded morphism}) \emph{from $\F_\T$ to} $\mathfrak Z^{\auf a,b \zu}$ (see, e.g.,~\cite{Chagrov&Z97,GorankoOtto06}). It is well-known that if $f$ is a p-morphism from $\F_\T$ to $\mathfrak Z^{\auf a,b \zu}$ and $\varphi$ is  $f(\auf a,b \zu)$-satisfiable in $\mathfrak Z^{\auf a,b \zu}$ then $\varphi$ is $\auf a,b \zu$-satisfiable in $\T(\lhd)$. Moreover, if the third condition also holds and $\C_\varphi^{\auf a,b \zu}, \auf a,b \zu \models \varphi$, then $\varphi$ is  $f(\auf a,b \zu)$-satisfiable in $\mathfrak Z^{\auf a,b \zu}$. Indeed, in this case $f$ is a p-morphism from the canonical model $\C_\varphi^{\auf a,b \zu}$ onto the model $(\mathfrak Z^{\auf a,b \zu},\nu')$, where $\nu'(p) = \{z \mid f^{-1}(z) \subseteq \nu(p)\}$.

To construct $\mathfrak Z^{\auf a,b \zu}$ and $f$, we require a few definitions. If $a < b$, we denote by $\mathsf{sec}_{\T}(a,b)$ the set of \emph{non-empty} subsets of $T$ of the form $(-\infty,a)$, $[a,a]$, $(a,b)$, $[b,b]$ and $(b,\infty)$, where $(-\infty,a) = \{x \in T \mid x < a\}$ and $(b,\infty)= \{x \in T \mid x > b\}$. If $a=b$, then $\mathsf{sec}_{\T}(a,b)$ consists of non-empty sets of the form $(-\infty,a)$, $[a,a]$ and $(a,\infty)$. We call each $\sigma \in \mathsf{sec}_{\T}(a,b)$ an $(a,b)$-\emph{section} of $\T$. Clearly,  $\mathsf{sec}_{\T}(a,b)$ is a partition of $T$. Given $\sigma, \sigma' \in \mathsf{sec}_{\T}(a,b)$, we write $\sigma \preceq \sigma'$ if there exist $x\in\sigma$ and $x' \in \sigma'$ such that $\auf x,x' \zu \in \int(\T)$. The definition of $\mathfrak Z^{\auf a,b \zu}$ depends on the type of the linear order $\T$ and the semantics for the interval relations.

\smallskip

{\em Case $\T(\le)$, for $\T \in \Dis \cup \Den$.}
If $\T = (T,\le)$ is a linear order from $\Dis$ or $\Den$ and the semantics is reflexive, then we divide $\int(\T)$ into \emph{zones} of the form
\begin{itemize}
\item  $\zeta_{\sigma,\sigma'} = \{\auf x,x' \zu \in \int(\T) \mid \, x \in \sigma, \, x' \in \sigma' \}$, where $\sigma, \sigma' \in \mathsf{sec}_{\T}(a,b)$ and $\sigma \preceq \sigma'$.
\end{itemize}
For $a < b$ (or $a=b$), there are at most 15 (respectively, at most 6) disjoint non-empty zones covering $\int(\T)$; see Fig.~\ref{zones-ref}. These zones form the set $Z$ of worlds in the frame $\mathfrak Z^{\auf a,b \zu}$, and for any $\zeta,\zeta' \in Z$ and any interval relation $\R$, we set $\zeta R \zeta'$ iff there exist $\auf x,y \zu \in \zeta$ and $\auf x',y' \zu \in \zeta'$ such that $\auf x,y \zu \R \auf x',y' \zu$. Finally, we define a map $f \colon \int(\T) \to Z$ by taking $f(\auf x,y \zu) = \zeta$ iff $\auf x,y \zu \in \zeta$. By definition, $f$ is `onto' and satisfies (p1). Condition (p2) is checked by direct inspection of Fig.~\ref{zones-ref}, while condition (p3) is an immediate consequence of the following lemma:

\begin{figure}
\begin{center}
\begin{tikzpicture}[scale=1.2,%
point/.style={draw, thick, circle, inner sep=1.5, outer
    sep=2},%
  tornout/.style={very thin, 
  decorate,
    decoration={random steps, amplitude=1pt, segment length=3pt}},%
    segmentpattern/.style={pattern=dots, pattern color = gray}
  ]

\def\shift{0.1}
\footnotesize

  \node[point] (p00) at (0,0) {};
  \node[point] (p0-1) at (0,-2) {};
  \node[point] (p10) at (2,0) {};

  \node (p-1-2) at (-1,-3) {};
  \node (p-1-1) at (-1,-2) {};
  \node (p-10) at (-1,0) {};
  \node (p-11) at (-1,1) {};
  \node (p01) at (0,1) {};
  \node (p11) at (2,1) {};
  \node (p21) at (3,1) {};

  \draw[thick, double distance=1] (p00) -- (p0-1) (p00) -- (p10);
  \draw[thick, double distance=1] (p0-1) -- (p-1-1) (p00) -- (p-10);
  \draw[thick, double distance=1] (p00) -- (p01) (p10) -- (p11);

  \draw[segmentpattern,tornout] ($(p0-1)+(\shift,\shift)$) --
  ($(p00)+(\shift,-\shift)$) -- ($(p10)-(\shift,\shift)$)
  decorate[decoration={amplitude=0}]{
  -- ($(p0-1)+(\shift,\shift)$)};

  \draw[segmentpattern,tornout] ($(p-1-1)+(\shift,-\shift)$) --
  ($(p0-1)-(\shift,\shift)$) %
  decorate[decoration={amplitude=0}] {-- ($(p-1-2)+(\shift,\shift)$)};

  \draw[segmentpattern,tornout] ($(p-1-1)+(\shift,\shift)$) --
  ($(p0-1)+(-\shift,\shift)$) -- ($(p00)-(\shift,\shift)$) --
  ($(p-10)+(\shift,-\shift)$);

  \draw[segmentpattern, tornout] ($(p-10)+(\shift,\shift)$) --
  ($(p00)+(-\shift,\shift)$) -- ($(p01)-(\shift,\shift)$);

  \draw[segmentpattern, draw=none]
  ($(p01)-(\shift,\shift)$) -- ($(p-11)+(\shift,-\shift)$)--
  ($(p-10)+(\shift,\shift)$);

  \draw[segmentpattern,tornout] ($(p01)+(\shift,-\shift)$) --
  ($(p00)+(\shift,\shift)$) -- ($(p10)+(-\shift,\shift)$) --
  ($(p11)+(-\shift,-\shift)$);


 \draw[segmentpattern,tornout] ($(p21)+(-\shift,-\shift)$) decorate[decoration={amplitude=0}] {--
  ($(p10)+(\shift,\shift)$)} -- ($(p11)+(\shift,-\shift)$);

  \node[xshift=0.6cm] at (p10) {$\zeta_{[j],[j]}$};

  \node[xshift=0.6cm] at (p0-1) {$\zeta_{[i],[i]}$};

  \node[fill=white,inner sep=0, rounded corners, shift={(.9cm,-.6cm)}]
  at (p00)  {$\zeta_{(i,j),(i,j)}$};

  \node[fill=white,inner sep=0, rounded corners, shift={(.1cm,-.5cm)}] at (p21) {$\zeta_{(j, +\infty),(j, +\infty)}$};

  \node[fill=white,inner sep=0, rounded corners, shift={(.15mm,-4.5mm)}] at (p0-1) {$\zeta_{(-\infty, i),(-\infty, i)}$};

  \node[xshift=-0.7cm] at (p-1-1) {$\zeta_{(-\infty,i),[i]}$};


\node[shift={(3,.2)}] at (p-1-2) {$i \neq j$};

  \begin{scope}[shift={(5.5cm, -1cm)}]

\node[point] (p00) at (0,0) {};

  \node (p-1-1) at (-1,-1) {};
  \node (p-10) at (-1,0) {};
  \node (p-11) at (-1,1) {};
  \node (p01) at (0,1) {};
  \node (p11) at (1,1) {};

  \draw[thick, double distance=1] (p00) -- (p-10) (p00) -- (p01);

  \draw[segmentpattern,tornout] ($(p11)+(-\shift,-\shift)$) decorate[decoration={amplitude=0}] {--
  ($(p00)+(\shift,\shift)$)} -- ($(p01)+(\shift,-\shift)$);

  \draw[segmentpattern, tornout] ($(p-10)+(\shift,\shift)$) --
  ($(p00)+(-\shift,\shift)$) -- ($(p01)-(\shift,\shift)$);

  \draw[segmentpattern, draw=none]
  ($(p01)-(\shift,\shift)$) -- ($(p-11)+(\shift,-\shift)$)--
  ($(p-10)+(\shift,\shift)$);

  \draw[segmentpattern,tornout] ($(p-10)+(\shift,-\shift)$) --
  ($(p00)-(\shift,\shift)$) %
  decorate[decoration={amplitude=0}] {-- ($(p-1-1)+(\shift,\shift)$)};


  \node[xshift=0.6cm] at (p00) {$\zeta_{[i],[j]}$};

  \node[fill=white,inner sep=0, rounded corners, shift={(.1cm,-.5cm)}] at (p11) {$\zeta_{(j, +\infty),(j, +\infty)}$};

  \node[fill=white,inner sep=0, rounded corners, shift={(.15mm,-4.5mm)}] at (p00) {$\zeta_{(-\infty, i),(-\infty, i)}$};

  \node[xshift=-0.7cm] at (p-10) {$\zeta_{(-\infty,i),[i]}$};
  \node[shift={(8,.2)}] at (p-1-2) {$i = j$};

  \end{scope}
\end{tikzpicture}
\caption{Zones in the canonical models over $\Dis(\le)$ and $\Den(\le)$.}
\label{zones-ref}
\end{center}
\end{figure}

\begin{lemma}\label{res:mod-monoton-ext1}
For any zone $\zeta$ and any literal $\lambda$ in $\varphi$, if $\C_\varphi^{\auf i, j \zu}, \auf x, y \zu \models \lambda$ for some $\auf x, y \zu \in \zeta$, then $\C_\varphi^{\auf i, j \zu}, \auf x, y \zu \models \lambda$ for all $\auf x, y \zu \in \zeta$.
\end{lemma}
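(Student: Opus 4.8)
The plan is to prove the stronger statement that, for every literal $\lambda$ occurring in $\varphi$ (allowing also $\lambda=\top$), the set $S_\lambda=\{\auf x,y\zu\in\int(\T)\mid \lambda@\auf x,y\zu\in\cl^*(\chase_\varphi)\}$ is a union of zones; the lemma then follows at once, since --- in the case $\bot\notin\cl^*(\chase_\varphi)$ relevant to us, where $\C_\varphi^{\auf i,j\zu}$ is the canonical model --- the argument in the proof of Theorem~\ref{canonical} gives that $\lambda@\auf x,y\zu\in\cl^*(\chase_\varphi)$ iff $\C_\varphi^{\auf i,j\zu},\auf x,y\zu\models\lambda$. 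To establish this I would run a transfinite induction on the construction stage $\gamma$, showing that each $S_\lambda^\gamma=\{\auf x,y\zu\mid \lambda@\auf x,y\zu\in\cl^\gamma(\chase_\varphi)\}$ is a union of zones; the limit stages and the passage to $\cl^*$ are then immediate, as a union of unions of zones is again a union of zones. (Throughout one uses the routine observation that every literal appearing in a $\cl^*$-fact is a subformula of $\varphi$, so the statement is genuinely about literals in $\varphi$.)

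For the base case $\gamma=0$ we have $\cl^0(\chase_\varphi)=\chase_\varphi$, so $S_\top^0=\int(\T)$, while for any other literal $\lambda$ in $\varphi$ the set $S_\lambda^0$ is either empty or equal to $\{\auf i,j\zu\}$ --- the latter exactly when $\lambda$ is an initial condition of $\varphi$. The key observation is that $\{\auf i,j\zu\}$ is itself a zone: it equals $\zeta_{[i],[j]}$ when $i<j$ and $\zeta_{[i],[i]}$ when $i=j$, because the sections $[i,i]$ and $[j,j]$ are singletons. In the inductive step I would assume $S_\mu^\gamma$ is a union of zones for every literal $\mu$ in $\varphi$ and examine the three ways a fact $\lambda@\auf x_0,y_0\zu$ can enter $\cl^{\gamma+1}(\chase_\varphi)$. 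Rule (cl3) is easy: if a clause $\U(\lambda_1\land\dots\land\lambda_k\to\lambda)$ fires at $\auf x_0,y_0\zu$, then $\auf x_0,y_0\zu\in S_{\lambda_1}^\gamma\cap\dots\cap S_{\lambda_k}^\gamma$, which by the inductive hypothesis is a union of zones and hence contains the entire zone $\zeta$ of $\auf x_0,y_0\zu$, so (cl3) fires throughout $\zeta$. For (cl1), suppose $\lambda@\auf x_0,y_0\zu$ comes from $[\R]\lambda@\auf x',y'\zu\in\cl^\gamma(\chase_\varphi)$ with $\auf x',y'\zu\,\R\,\auf x_0,y_0\zu$; writing $\zeta'$ and $\zeta$ for the zones of $\auf x',y'\zu$ and $\auf x_0,y_0\zu$, the inductive hypothesis gives $\zeta'\subseteq S_{[\R]\lambda}^\gamma$, and $\zeta'R\zeta$ holds by the definition of the zone relation. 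Since the interval relations are closed under inverse, property~(p2) also applies to $\bar{\R}$; because the zone relations are defined symmetrically, $\zeta'R\zeta$ w.r.t.\ $\R$ is equivalent to the corresponding relation between $\zeta$ and $\zeta'$ w.r.t.\ $\bar{\R}$, so (p2) for $\bar{\R}$ yields that every $\auf x_1,y_1\zu\in\zeta$ has some $\auf x_1',y_1'\zu\in\zeta'\subseteq S_{[\R]\lambda}^\gamma$ with $\auf x_1',y_1'\zu\,\R\,\auf x_1,y_1\zu$; hence (cl1) adds $\lambda@\auf x_1,y_1\zu$ and $\zeta\subseteq S_\lambda^{\gamma+1}$. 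Rule (cl2) is dual: if $\lambda=[\R]\lambda'$ is added at $\auf x_0,y_0\zu$ because all its $\R$-successors lie in $S_{\lambda'}^\gamma$, then for any $\auf x_1,y_1\zu$ in the zone $\zeta$ of $\auf x_0,y_0\zu$ and any $\R$-successor $\auf x',y'\zu$ of $\auf x_1,y_1\zu$ (say in zone $\zeta'$) we have $\zeta R\zeta'$, so by~(p2) $\auf x_0,y_0\zu$ has some $\R$-successor in $\zeta'$; that successor lies in $S_{\lambda'}^\gamma$, whence $\zeta'\subseteq S_{\lambda'}^\gamma$ and in particular $\auf x',y'\zu\in S_{\lambda'}^\gamma$, so (cl2) also fires at $\auf x_1,y_1\zu$ (its side condition that $[\R]\lambda'$ occur in $\varphi$ being position-independent).

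I expect the only real obstacle to be the bookkeeping around the modal rules (cl1) and (cl2): both rely on the \emph{two-way} matching of $\R$-edges between a pair of zones --- every interval of $\zeta$ having an $\R$-successor in $\zeta'$ and, dually, every interval of $\zeta'$ having an $\R$-predecessor in $\zeta$ --- which is precisely property~(p2) used for $\R$ and for $\bar{\R}$, the point the paper settles by inspecting Fig.~\ref{zones-ref}. Everything else is routine transfinite induction, and the remaining canonical-model constructions (over $\Den(<)$, etc.) can be handled by the very same argument once the appropriate version of~(p2) is in place for the corresponding zones.
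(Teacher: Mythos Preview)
Your proposal is correct and follows essentially the same transfinite-induction argument as the paper's proof: both analyse the three closure rules (cl1)--(cl3) against properties (p1) and (p2), using the inductive hypothesis that each literal's extension at the previous stage is zone-saturated. The only difference is presentational: you make explicit that the (cl1) step invokes (p2) for the inverse relation $\bar\R$ (whereas the paper just writes ``by (p2)'' without flagging that it is the back-direction), and you spell out the base and limit cases, but the underlying reasoning is identical.
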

\begin{proof}
It suffices to show that if $\lambda@\auf x,y\zu \in \cl^{\alpha +1}(\chase_\varphi)$ for some $\auf x,y\zu \in \zeta$, then $\lambda@\auf x',y'\zu \in \cl^{\alpha+1}(\chase_\varphi)$ for all $\auf x',y'\zu \in \zeta$, assuming that $\cl^\alpha(\chase_\varphi)$ satisfies this property, which is the case for $\alpha =0$.

Suppose $\auf x,y \zu \in \zeta$ and $\lambda@\auf x,y\zu \in \cl^{\alpha +1}(\chase_\varphi)$ is obtained by an application of  (cl1) to $[\R] \lambda \auf u,v \zu \in \cl^{\alpha}(\chase_\varphi)$ with $\auf u,v \zu R \auf x,y \zu$ and $\auf u,v \zu \in \zeta'$. Take any $\auf x',y' \zu \in \zeta$. By (p2), there is $\auf u',v' \zu \in \zeta'$ such that $\auf u',v' \zu R \auf x',y' \zu$. By our assumption, $[\R] \lambda \auf u',v' \zu \in \cl^{\alpha}(\chase_\varphi)$, and so an application of (c1) to it gives $\lambda@\auf x',y'\zu \in \cl^{\alpha +1}(\chase_\varphi)$.

Suppose next that $\auf x,y \zu \in \zeta$ and $[\R]\lambda@\auf x,y\zu \in \cl^{\alpha +1}(\chase_\varphi)$ is obtained by an application of (cl2). Then $\lambda \auf u,v \zu \in \cl^{\alpha}(\chase_\varphi)$ for all $\auf u,v \zu$ with $\auf x,y \zu \R \auf u,v \zu$. Take any $\auf x',y' \zu \in \zeta$. We show that $\lambda \auf u',v' \zu \in \cl^{\alpha}(\chase_\varphi)$ for every $\lambda \auf u',v' \zu$ with $\auf x',y' \zu \R \auf u',v' \zu$, from which $[\R]\lambda@\auf x',y'\zu \in \cl^{\alpha +1}(\chase_\varphi)$ will follow. Let $\auf u',v' \zu \in \zeta'$. By (p1), $\zeta R \zeta'$ and, by (p2), $\auf x,y\zu \R \auf u,v \zu$ for some $\auf u,v\zu \in \zeta'$ such that $\auf x,y \zu \R \auf u,v\zu$. Then $\lambda \auf u,v \zu \in \cl^{\alpha}(\chase_\varphi)$ and, by our assumption, $\lambda \auf u',v' \zu \in \cl^{\alpha}(\chase_\varphi)$.

The case of rule (cl3) is obvious.
\end{proof}

Note that Lemma~\ref{res:mod-monoton-ext1} does not hold for $\T(<)$. Indeed, we may have punctual intervals $\auf y,y\zu$ (for $y\notin\{a,b\}$) such that $\C_\varphi^{\auf a,b \zu}, \auf y, y \zu \models [\eds] \bot$ but $\C_\varphi^{\auf a,b \zu}, \auf x, y \zu \not\models [\eds] \bot$ for $x < y$, with $\auf x, y \zu$ from the same zone as $\auf y, y \zu$.

\smallskip

{\em Case $\T(<)$, for $\T \in \Den$.}
If $\T$ is a dense linear order and the semantics is irreflexive,
we divide $\int(\T)$ into zones of three types:
\begin{itemize}
\item $\zeta_{\sigma,\sigma'} = \{\auf x,x' \zu \in \int(\T) \mid \, x \in \sigma, \, x' \in \sigma' \}$, where $\sigma, \sigma' \in \mathsf{sec}_{\T}(a,b)$, $\sigma \preceq \sigma'$ and $\sigma \ne \sigma'$;

\item $\zeta_\sigma = \{ \auf x,x' \zu \in \int(\T) \mid  x, x' \in \sigma,\ x \neq x'\}$, where $\sigma \in \mathsf{sec}_{\T}(a,b)$;

\item $\zeta^\bullet_\sigma = \{ \auf x,x \zu \in \int(\T) \mid \, x \in \sigma\}$, where $\sigma \in \mathsf{sec}_{\T}(a,b)$.
\end{itemize}
Now, for $a < b$ (or $a=b$), we have at most 18 (respectively, at most 8) disjoint non-empty zones covering $\int(\T)$; see Fig.~\ref{areas}. It is again easy to see that the map $f \colon \int(\T) \to Z$ defined by taking $f(\auf x,y \zu) = \zeta$ iff $\auf x,y \zu \in \zeta$ satisfies (p1)--(p3). The fact that $\T$ is dense is required for (p2). For discrete $\T$, condition (p2) does not hold. For example, for $\T=(\Z,\lhd)$, $a=0$ and $b=3$, we have $\zeta_{(a,b)}^\bullet \edsb  \zeta_{(a,b), (a,b)}$ but for $\auf 2,2\zu \in \zeta_{(a,b)}^\bullet$ there is no $\auf x',y'\zu \in \zeta_{(a,b), (a,b)}$ such that $\auf 2,2\zu \edsb \auf x',y'\zu$ as shown in the picture below:\\[2pt]
\centerline{
  \begin{tikzpicture}[scale=2,%
point/.style={fill, circle, inner sep=0.5, outer sep=0.5},%
    largepoint/.style={draw, thick, circle, inner sep=1.5, outer sep=2},%
  tornout/.style={very thin, 
  decorate, decoration={random steps, amplitude=1pt, segment length=3pt}},%
    segmentpattern/.style={pattern=dots, pattern color = gray}
  ]
\def\shift{0.1}
\footnotesize
 \node[largepoint] (p00) at (0,0) {};
 \node[point] at (p00) {};
\node[largepoint] (p30) at (.6,0) {};
 \node[point] at (p30) {};
\node[largepoint] (p03) at (0,-0.6) {};
 \node[point] at (p03) {};
\draw[thick, double distance=2] (p00)-- (p30);
\draw[thick, double distance=2] (p00) -- (p03);
\draw[thick, double distance=2] (p30) -- (p03);
\node[point] (p10) at (.2,0) {};
\node[point] (p20) at (.4,0) {};
\node[point] (p01) at (0,-0.2) {};
\node[point] (p11) at (.2,-0.2) {};
\node[point] (p21) at (.4,-0.2) {};
\node[point] (p02) at (0,-0.4) {};
\node[point] (p12) at (.2,-0.4) {};
\node[xshift=.5cm] at (p03) {$\auf 0,0 \zu$};
\node[xshift=.5cm] at (p12) {$\auf 1,1 \zu$};
\node[xshift=.5cm] at (p21) {$\auf 2,2 \zu$};
\node[xshift=.5cm] at (p30) {$\auf 3,3 \zu$};
\draw[tornout] ($(p00)+(\shift,-\shift)$) --
  ($(p30)+(-2*\shift,-\shift)$) %
  -- ($(p03)+(\shift,2*\shift)$) -- cycle;
 \draw[thick, double distance=2] (p00) -- ($(p00)+(0,0.2)$) (p00) -- ($(p00)+(-0.2,0)$);
  \draw[thick, double distance=2] (p30) -- ($(p30)+(0.15,0.15)$) (p30) -- ($(p30)+(0,0.2)$);
   \draw[thick, double distance=1.5] (p03) -- ($(p03)-(0.2,0)$) (p03) -- ($(p03)+(-0.15,-0.15)$);
\end{tikzpicture}
}

\smallskip

\begin{figure}
\begin{center}
 \begin{tikzpicture}[scale=1.2,%
point/.style={draw, thick, circle, inner sep=1.5, outer
    sep=2},%
  tornout/.style={very thin, 
  decorate,
    decoration={random steps, amplitude=1pt, segment length=3pt}},%
    segmentpattern/.style={pattern=dots, pattern color = gray}
  ]

\def\shift{0.1}
\footnotesize

  \node[point] (p00) at (0,0) {};
  \node[point] (p0-1) at (0,-2) {};
  \node[point] (p10) at (2,0) {};

  \node (p-1-2) at (-1,-3) {};
  \node (p-1-1) at (-1,-2) {};
  \node (p-10) at (-1,0) {};
  \node (p-11) at (-1,1) {};
  \node (p01) at (0,1) {};
  \node (p11) at (2,1) {};
  \node (p21) at (3,1) {};

  \draw[thick, double distance=1] (p00) -- (p0-1) (p00) -- (p10);
  \draw[thick, double distance=1] (p0-1) -- (p-1-1) (p00) -- (p-10);
  \draw[thick, double distance=1] (p00) -- (p01) (p10) -- (p11);
  \draw[thick, double distance=1] (p21) -- (p10) -- (p0-1) -- (p-1-2);

  \draw[segmentpattern,tornout] ($(p0-1)+(\shift,2*\shift)$) --
  ($(p00)+(\shift,-\shift)$) -- ($(p10)-(2*\shift,\shift)$)
  -- cycle;

  \draw[segmentpattern,tornout] ($(p-1-1)+(\shift,-\shift)$) --
  ($(p0-1)-(2*\shift,\shift)$) %
  -- ($(p-1-2)+(\shift,2*\shift)$);

  \draw[segmentpattern,tornout] ($(p-1-1)+(\shift,\shift)$) --
  ($(p0-1)+(-\shift,\shift)$) -- ($(p00)-(\shift,\shift)$) --
  ($(p-10)+(\shift,-\shift)$);

  \draw[segmentpattern, tornout] ($(p-10)+(\shift,\shift)$) --
  ($(p00)+(-\shift,\shift)$) -- ($(p01)-(\shift,\shift)$);

  \draw[segmentpattern, draw=none]
  ($(p01)-(\shift,\shift)$) -- ($(p-11)+(\shift,-\shift)$)--
  ($(p-10)+(\shift,\shift)$);

  \draw[segmentpattern,tornout] ($(p01)+(\shift,-\shift)$) --
  ($(p00)+(\shift,\shift)$) -- ($(p10)+(-\shift,\shift)$) --
  ($(p11)+(-\shift,-\shift)$);


 \draw[segmentpattern,tornout] ($(p21)+(-2*\shift,-\shift)$) --
  ($(p10)+(\shift,2*\shift)$) -- ($(p11)+(\shift,-\shift)$);

  \node[xshift=0.6cm] at (p10) {$\zeta_{[j],[j]}$};

  \node[xshift=0.6cm] at (p0-1) {$\zeta_{[i],[i]}$};

  \node[fill=white,inner sep=0, rounded corners, shift={(.9cm,-.6cm)}]
  at (p00)  {$\zeta_{(i,j)}$};

  \node[fill=white,inner sep=0, rounded corners, shift={(.5cm,-.0cm)}] at (p11) {$\zeta_{(j, \infty)}$};

  \node[fill=white,inner sep=0, rounded corners, shift={(-1mm,-4.5mm)}] at (p-1-1) {$\zeta_{(-\infty, i)}$};

  \node[xshift=-0.7cm] at (p-1-1) {$\zeta_{(-\infty,i),[i]}$};

  \node[shift={(.5cm,0cm)}]
  at ($0.5*(p10)+0.5*(p0-1)$)  {$\zeta_{(i,j)}^\bullet$};

  \node[shift={(.6cm,0cm)}] at ($0.5*(p10)+0.5*(p21)$) {$\zeta_{(j, \infty)}^\bullet$};

  \node[shift={(.7cm,0cm)}] at ($0.5*(p-1-2)+0.5*(p0-1)$) {$\zeta_{(-\infty, i)}^\bullet$};


\node[shift={(3,.2)}] at (p-1-2) {$i \neq j$};

  \begin{scope}[shift={(4.5cm, -1cm)}]

  \node[point] (p00) at (0,0) {};

  \node (p-1-1) at (-1,-1) {};
  \node (p-10) at (-1,0) {};
  \node (p-11) at (-1,1) {};
  \node (p01) at (0,1) {};
  \node (p11) at (1,1) {};

  \draw[thick, double distance=1] (p00) -- (p-10) (p00) -- (p01);
  \draw[thick, double distance=1] (p11) -- (p00) -- (p-1-1);

  \draw[segmentpattern,tornout] ($(p11)+(-2*\shift,-\shift)$) --
  ($(p00)+(\shift,2*\shift)$) -- ($(p01)+(\shift,-\shift)$);

  \draw[segmentpattern, tornout] ($(p-10)+(\shift,\shift)$) --
  ($(p00)+(-\shift,\shift)$) -- ($(p01)-(\shift,\shift)$);

  \draw[segmentpattern, draw=none]
  ($(p01)-(\shift,\shift)$) -- ($(p-11)+(\shift,-\shift)$)--
  ($(p-10)+(\shift,\shift)$);

  \draw[segmentpattern,tornout] ($(p-10)+(\shift,-\shift)$) --
  ($(p00)-(2*\shift,\shift)$) -- ($(p-1-1)+(\shift,2*\shift)$);


  \node[xshift=0.6cm] at (p00) {$\zeta_{[i],[j]}$};

  \node[fill=white,inner sep=0, rounded corners, shift={(.5cm,0cm)}] at (p01) {$\zeta_{(j, \infty)}$};

  \node[fill=white,inner sep=0, rounded corners, shift={(-1mm,-4.5mm)}] at (p-10) {$\zeta_{(-\infty, i)}$};

  \node[xshift=-0.7cm] at (p-10) {$\zeta_{(-\infty,i),[i]}$};

  \node[shift={(.6cm,0cm)}]
  at ($0.5*(p11)+0.5*(p00)$)  {$\zeta_{(j, \infty)}^\bullet$};

  \node[shift={(.7cm,0cm)}]
  at ($0.5*(p00)+0.5*(p-1-1)$)  {$\zeta_{(-\infty,i)}^\bullet$};

  \node[shift={(7,.2)}] at (p-1-2) {$i = j$};
  \end{scope}
\end{tikzpicture}
\caption{Zones in the canonical models over $\Den(<)$.}
\label{areas}
\end{center}
\end{figure}



Thus, in both cases the constructed function $f \colon \int(\T) \to Z$ satisfies conditions (p1)--(p3), and so, using Theorem~\ref{canonical}, we obtain:

\begin{theorem}\label{zone-reduction}
Suppose $\T \in \Dis$ and $\lhd$ is $\le$, or $\T \in \Den$ and $\lhd \in \{\le,<\}$. Then an $\HShb$-formula $\varphi$ is $\auf a,b \zu$-satisfiable in $\T(\lhd)$ iff $\varphi$ is $f(\auf a,b \zu)$-satisfiable in $\mathfrak Z^{\auf a,b \zu}$.
\end{theorem}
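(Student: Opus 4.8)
The plan is to assemble three ingredients that are already in place: Theorem~\ref{canonical}, the standard p-morphism preservation fact recalled in the paragraph just before the zone construction, and a verification that the surjection $f\colon\int(\T)\to Z$ built in each of the two cases genuinely satisfies (p1)--(p3). Given these, the biconditional splits into exactly the two implications announced there, so the theorem follows once (p1)--(p3) are confirmed.

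For the ($\Leftarrow$) direction I would start from a model $(\mathfrak Z^{\auf a,b\zu},\nu')$ with $(\mathfrak Z^{\auf a,b\zu},\nu'),f(\auf a,b\zu)\models\varphi$, pull the valuation back along $f$ by setting $\nu(p)=f^{-1}(\nu'(p))$ on $\F_\T$, and put $\M=(\F_\T,\nu)$. A routine induction on the construction of a literal $\lambda$ occurring in $\varphi$ --- using (p1) and (p2) for the $\langle\R\rangle$- and $[\R]$-cases --- shows $\M,\auf x,y\zu\models\lambda$ iff $(\mathfrak Z^{\auf a,b\zu},\nu'),f(\auf x,y\zu)\models\lambda$ for every $\auf x,y\zu\in\int(\T)$; surjectivity of $f$ then transfers the truth of the clauses $\U(\lambda_1\land\dots\land\lambda_k\to\lambda)$ and of conjunctions, so $\M,\auf a,b\zu\models\varphi$, i.e.\ $\varphi$ is $\auf a,b\zu$-satisfiable in $\T(\lhd)$.

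For the ($\Rightarrow$) direction I would first apply Theorem~\ref{canonical}: if $\varphi$ is $\auf a,b\zu$-satisfiable in $\T(\lhd)$, then the canonical model $\C_\varphi^{\auf a,b\zu}=(\F_\T,\nu)$ already satisfies $\varphi$ at $\auf a,b\zu$. Property (p3) --- which is Lemma~\ref{res:mod-monoton-ext1} in the reflexive case, and, in the dense irreflexive case, its verbatim analogue obtained by the same induction on $\cl^\alpha(\chase_\varphi)$, the refinement of the partition into the punctual pieces $\zeta^\bullet_\sigma$ being precisely what repairs the obstruction noted right after Lemma~\ref{res:mod-monoton-ext1} --- guarantees that the push-forward valuation $\nu'(p)=\{z\mid f^{-1}(z)\subseteq\nu(p)\}$ satisfies $f^{-1}(\nu'(p))=\nu(p)$. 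Hence $f$ is a p-morphism of \emph{models} from $\C_\varphi^{\auf a,b\zu}$ onto $(\mathfrak Z^{\auf a,b\zu},\nu')$; the truth of $\varphi$ transfers from $\auf a,b\zu$ to $f(\auf a,b\zu)$, so $\varphi$ is $f(\auf a,b\zu)$-satisfiable in $\mathfrak Z^{\auf a,b\zu}$.

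The remaining task, and the one I expect to be the real work, is to confirm (p1)--(p3) for $f$ in the two cases. Here (p1) is immediate from the definitions of $R$ and $f$, and (p3) is handled as above, so the laborious point is (p2): for every ordered pair of non-empty zones with $\zeta R\zeta'$, every interval relation $\R$ witnessing it, and every $\auf x,y\zu\in\zeta$, one has to exhibit some $\auf x',y'\zu\in\zeta'$ with $\auf x,y\zu\R\auf x',y'\zu$. This is a finite but sizeable case analysis ranging over the at most $15$, respectively $18$, zones of Figs.~\ref{zones-ref} and \ref{areas} and the twelve interval relations; density of $\T$ is invoked exactly where an intermediate point is needed to land inside a ``diagonal'' zone $\zeta_\sigma$, and the discrete counterexample drawn just before Fig.~\ref{areas} shows that (p2) really does fail for discrete orders under the irreflexive semantics, which is why that combination is excluded from the statement. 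Once (p2) has been checked by inspection of the figures, the theorem follows by combining the two implications above.
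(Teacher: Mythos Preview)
Your proposal is correct and follows essentially the same approach as the paper: the paper assembles exactly the three ingredients you name---Theorem~\ref{canonical}, the p-morphism preservation fact stated just before the zone construction, and the verification of (p1)--(p3) for $f$---and then simply announces Theorem~\ref{zone-reduction} as a consequence. Your write-up is in fact more explicit than the paper's, which leaves the pull-back/push-forward mechanics implicit in the phrase ``it is well-known'' and handles (p2) by saying it ``is checked by direct inspection'' of the figures.
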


To check whether $\varphi$ is $f(\auf a,b \zu)$-satisfiable in $\mathfrak Z^{\auf a,b \zu}$, we take the set
$$
\mathfrak U_\varphi =\{ \lambda@f(\auf a,b \zu) \mid \lambda \text{ an initial condition of $\varphi$} \} \cup \{\top@ \zeta \mid \zeta \in Z \}
$$
and apply to it the following obvious modifications of rules (cl1)--(cl3):
\begin{itemize}
\item if $[\R]\lambda @ \zeta \in \mathfrak U_\varphi$, then we add to $\mathfrak U_\varphi$ all $\lambda @\zeta'$ such that $\zeta R \zeta'$;

\item if $\lambda @ \zeta' \in \mathfrak U_\varphi$ for all $\zeta' \in Z$ with $\zeta R \zeta'$ and $[\R]\lambda$ occurs in $\varphi$, then we add $[\R]\lambda @ \zeta$ to $\mathfrak U_\varphi$;

\item if $\U (\lambda_1 \land \dots \land \lambda_{k} \to \lambda)$ occurs in $\varphi$ and $\lambda_i @ \zeta \in \mathfrak U_\varphi$,  $1 \le i \le k$, then add $\lambda @\zeta$ to $\mathfrak U_\varphi$.
\end{itemize}
It is readily seen that at most $|Z| \cdot |\varphi|$ applications are enough to construct a fixed point $\cl^*(\mathfrak U_\varphi)$. Similarly to Theorem~\ref{canonical}, we then show that $\varphi$ is $f(\auf a,b \zu)$-satisfiable in $\mathfrak Z^{\auf a,b \zu}$ iff $\cl^* (\mathfrak U_\varphi)$ does not contain $\bot @ f(\auf a,b \zu)$.

\begin{theorem}\label{t:inp}
Suppose $\Dis' \subseteq \Dis$ and $\Den' \subseteq \Den$ are non-empty.  Then $\Dis'(\le)$-, $\Den'(\le)$- and $\Den'(<)$-satisfiabily of $\HShb$-formulas are all $\PTime$-complete.
\end{theorem}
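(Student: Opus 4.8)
The plan is to prove $\PTime$-membership by reading off a polynomial-time decision procedure from the zone reduction established above, and $\PTime$-hardness by a logarithmic-space reduction from satisfiability of propositional Horn formulas, which the excerpt already recalls to be $\PTime$-complete.

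For membership, the key observation is that the zone frame $\mathfrak Z^{\auf a,b\zu}$ and its distinguished world $f(\auf a,b\zu)$ depend on $\T$ and $\auf a,b\zu$ only through a bounded amount of data: whether $a=b$, and which of the (at most five) candidate sets in $\mathsf{sec}_\T(a,b)$ are non-empty; call this the \emph{type} of $\auf a,b\zu$ in $\T$. In each of the cases covered by Theorem~\ref{zone-reduction}---and only there, since for the irreflexive discrete case (p2) fails---the combinatorics of the zones, of the relations $R$ and of the world $f(\auf a,b\zu)$ are fully determined by the type, so up to isomorphism there are only constantly many pairs $(\mathfrak Z_t,z_t)$ arising this way. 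For a fixed class $\CC\in\{\Dis',\Den'\}$ and a fixed $\lhd$, let $\mathcal{T}$ be the (fixed, finite) set of \emph{realizable} types, i.e.\ those $t$ for which some $\T\in\CC$ has an interval of type $t$; $\mathcal{T}$ is a constant of the problem, not part of the input. Combining Theorem~\ref{zone-reduction} with the fixed-point criterion for zone frames stated thereafter ($\varphi$ is $z$-satisfiable in a zone frame iff $\bot@z\notin\cl^*(\mathfrak U_\varphi)$, established exactly as Theorem~\ref{canonical}), one obtains: $\varphi$ is $\CC(\lhd)$-satisfiable iff there is $t\in\mathcal{T}$ with $\bot@z_t\notin\cl^*(\mathfrak U_\varphi)$ computed in $\mathfrak Z_t$ with distinguished world $z_t$. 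In the forward direction one takes $t$ to be the type of an interval witnessing satisfiability; in the backward direction, since $t\in\mathcal{T}$ there are $\T\in\CC$ and $\auf a,b\zu\in\int(\T)$ of type $t$, whence $(\mathfrak Z^{\auf a,b\zu},f(\auf a,b\zu))=(\mathfrak Z_t,z_t)$ and Theorem~\ref{zone-reduction} transfers satisfiability from $\mathfrak Z_t$ back to $\T(\lhd)$. The algorithm therefore ranges over the $O(1)$ types $t\in\mathcal{T}$ and, for each, builds $\cl^*(\mathfrak U_\varphi)$ by at most $|Z_t|\cdot|\varphi|=O(|\varphi|)$ rule applications on the constant-size frame $\mathfrak Z_t$, each application computable in time polynomial in $|\varphi|$; this is polynomial overall.

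For hardness, given a conjunction $\psi=\bigwedge_i C_i$ of propositional Horn clauses over variables $p_1,\dots,p_n$, let $\varphi_\psi=\bigwedge_i\U(\widehat{C_i})$, where $\widehat{C_i}$ is the clause $q_1\land\dots\land q_k\to q$ (or $q_1\land\dots\land q_k\to\bot$), the atoms now read as $\HShb$-variables. Then $\varphi_\psi$ is a modality-free $\HShb$-formula, computable from $\psi$ in logarithmic space. Applying Theorem~\ref{canonical} with the empty set of initial conditions, $\chase_{\varphi_\psi}=\{\top@\auf x,y\zu\mid\auf x,y\zu\in\int(\T)\}$ and only rule (cl3) ever applies, so for every $\auf x,y\zu$ the set $\{p\mid p@\auf x,y\zu\in\cl^*(\chase_{\varphi_\psi})\}$ is exactly the least model of $\psi$ obtained by forward chaining from its positive unit clauses; hence $\bot@\auf x,y\zu\in\cl^*(\chase_{\varphi_\psi})$ for some (equivalently, every) $\auf x,y\zu$ iff the least model falsifies a negative clause iff $\psi$ is unsatisfiable. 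As $\CC$ is non-empty it contains some $\T$, which has at least one (punctual) interval, so by Theorem~\ref{canonical} $\varphi_\psi$ is $\CC(\lhd)$-satisfiable iff $\psi$ is satisfiable. This gives $\PTime$-hardness of all three problems.

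I do not expect a serious obstacle beyond the bookkeeping just sketched; the point that needs care is that the procedure is necessarily \emph{non-uniform} in $\CC$ (the list $\mathcal{T}$ of realizable types is hard-wired) and that $\CC(\lhd)$-satisfiability is genuinely the disjunction over $t\in\mathcal{T}$ rather than satisfiability over a single canonical order. For instance, the $\HShb$-formula $\U(\langle\laterb\rangle\top\to\bot)$ is satisfiable over the one-point linear order, which belongs to $\Dis$, but not over $(\Z,\le)$, so one cannot replace $\Dis'$ by $\Dis$ or by $\{(\Z,\le)\}$. Everything else follows immediately from Theorems~\ref{canonical} and~\ref{zone-reduction}.
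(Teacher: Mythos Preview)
Your proposal is correct and follows essentially the same approach as the paper: the upper bound comes from the finitely many (the paper says at most~8) non-isomorphic zone frames together with Theorem~\ref{zone-reduction} and the polynomial-time closure computation on each, and the lower bound is inherited from propositional Horn satisfiability. You supply more detail than the paper does---in particular the explicit ``type'' bookkeeping and the spelled-out hardness reduction---and your caveat about non-uniformity in $\CC$ (illustrated by the $\U(\langle\laterb\rangle\top\to\bot)$ example) is a valid refinement that the paper leaves implicit.
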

\begin{proof}
Observe first that, for each of $\Dis'(\le)$, $\Den'(\le)$, $\Den'(<)$, there are at most 8 pairwise non-isomorphic frames of the form $\mathfrak Z^{\auf a,b \zu}$. As we saw above, checking whether $\varphi$ is satisfiable in one of them can be done in polynomial time. It remains to apply Theorem~\ref{zone-reduction}. The matching lower bound holds already for propositional Horn formulas; see, e.g.,~\cite[Theorem 4.2]{dan01} and references therein.
\end{proof}

It is readily seen that, in fact, Theorem~\ref{t:inp} also holds for $\Lin'(\le)$, where $\Lin'$ is any non-empty subclass of $\Lin$.

\subsection{Ontology-based access to temporal data with extensions of $\HShb$}\label{sec:app}

We now briefly discuss how extensions of $\HShb$ can be used to facilitate access to temporal data; for more details and experiments consult~\cite{IJCAI16}.

\paragraph{Querying historical data} Suppose that a non-IT expert user  would like to query the historical data provided by the STOLE\footnote{For STOria LEgislativa della pubblica amministrazione italiana.} ontology that extracts facts about the Italian Public Administration from journal articles~\cite{DBLP:conf/rr/AdorniMPP15}.
The STOLE dataset, $\mathcal{D}$, contains facts about institutions, legal systems, events, and people such as:
\begin{align*}
&  \textit{LegalSystem}(\textit{regno\_di\_sardegna})@[1720,1861],\\
&  \textit{Institution}(\textit{consiglio\_di\_intendenza}) @ [1806,1865].
\end{align*}
The former one, for example, states that Regno di Sardegna was a legal system in the period between 1720 and 1861.
Suppose now that the user is searching for institutions founded during the Regno di Sardegna period. To simplify the user's task, we can create  an ontology, $\mathcal{O}$, with the single clause
$$
\U \forall x\, \big( \textit{Institution}(x) \land \BD \DbD \textit{LegalSystem}(\textit{regno\_di\_sardegna}) \rightarrow {}\textit{RdSInstitution}(x)\big).
$$
The user's query can now be very simple: $\avec{q}(x, t, s) = \textit{RdSInstitution}(x)@[t,s]$. However, the query-answering system has to find  \emph{certain answers} to the \emph{ontology-mediated query} $(\mathcal{O},\avec{q}(x, t, s))$ over $\mathcal{D}$, which are triples $(a,m,n)$ such that $\textit{RdSInstitution}(a)@[m,n]$ holds in all models of $\mathcal{O}$ and $\mathcal{D}$. As shown by Kontchakov et al.~\citeyear{IJCAI16}, this ontology-mediated query can be `rewritten' into a standard datalog query $(\Pi,G(x,t,s))$, where $\Pi$ is a datalog program $\Pi$ and $G(x,t,s)$ a goal, such that the certain answers to $(\mathcal{O},\avec{q}(x, t, s))$ over $\mathcal{D}$ coincide with the answers to $(\Pi,G(x,t,s))$ over $\mathcal{D}$.

The ontology language in this case is a straightforward datalog extension of $\HShb$. However, to represent temporal data, we require more complex initial conditions compared to $\HShb$, namely, facts of the form $P(a_1, \dots, a_l)@[n, m]$, where $\auf n, m \zu$ is an interval. The zonal representation of canonical models above can be extended to this case, but the number of zones will be quadratic in the number of the initial conditions.

\smallskip
We next show an application that requires a \emph{multi-dimensional} version of $\HShb$.

\paragraph{Querying sensor data} Consider a turbine monitoring system that is receiving from sensors a stream of data of the form $\mathit{Blade}(\mathit{id})@(\iota_1,\iota_2)$,
where $\mathit{id}$ is a turbine blade ID and $\iota_2$ is the temperature range over $(\mathbb R,<)$ observed during the time interval $\iota_1$ over $(\mathbb Z,\le)$. Suppose also that the user wants to find the blades and time intervals where the temperature was rising. Thinking of a pair $\avec{\iota}= (\iota_1,\iota_2)$ as a rectangle in the two-dimensional space $(\mathbb Z,\le) \times (\mathbb R, <)$ and using the operators $\langle \R \rangle_\ell$ in dimension $\ell \in \{1,2\}$ coordinate-wise (that is,  $\avec{\iota} \R_\ell \avec{\iota}'$ iff $\iota_\ell \R \iota'_\ell$ and $\iota_i = \iota'_i$, for $i \ne \ell$), we can define rectangles with rising temperature by the clause
\begin{equation*}
\U \forall x \, \big(\AbD_1  \ObD_2 \mathit{BladeTemp}(x) \land  \AD_1 \OD_2 \mathit{BladeTemp}(x) \to \mathit{TempRise}(x) \big)
\end{equation*}
saying that the temperature of a blade $x$ is rising over a rectangle $(\iota_1,\iota_2)$ if $\mathit{BladeTemp}(x)$ holds at some rectangles $(\iota^{\scriptscriptstyle-}_1,\iota^{\scriptscriptstyle-}_2)$ and $(\iota^{\scriptscriptstyle+}_1,\iota^{\scriptscriptstyle+}_2)$ located as shown in Fig.~\ref{f:risingrect}.
%
\begin{figure}[ht]
\begin{center}
\begin{tikzpicture}[>=latex,semithick,
    segmentpattern/.style={pattern=north west lines, pattern color = gray}
    ]

  \coordinate (z) at (0.5,0.5) {};

  \draw[->] (z) -- (8, 0.5);
  \draw[->] (z) -- (0.5, 4);

  \node at (7.5,0.8) {\footnotesize $(\mathbb{Z},\leq)$};
  \node at (1,3.8) {\footnotesize $(\mathbb{R},<)$};

  \coordinate (psw) at (1,1) {};
  \coordinate (pse) at (3,1) {};
  \coordinate (pne) at (3,2) {};
  \coordinate (pnw) at (1,2) {};

  \draw[ pattern=north west lines, pattern color = gray] (psw) -- (pse) -- (pne) -- (pnw) -- cycle;
  \path (pse) -- (psw) node[fill=white, inner sep=0.3, above, midway] {$\iota_1^-$};
  \path (psw) -- (pnw) node[fill=white, inner sep=0.3, right, midway] {$\iota_2^-$};
  \path (pnw) -- (pne) node[fill=white, inner sep=0.3, below, pos=0.5] {\footnotesize$\mathit{BladeTemp}(x)$};


 \coordinate (qsw) at (3,1.5) {};
  \coordinate (qse) at (5.5,1.5) {};
  \coordinate (qne) at (5.5,3) {};
  \coordinate (qnw) at (3,3) {};

  \draw[ pattern=dots, pattern color = gray] (qsw) -- (qse) -- (qne) -- (qnw) -- cycle;
  \path (qse) -- (qsw) node[fill=white, inner sep=0.3, above, midway] {$\iota_1$};
  \path (qsw) -- (qnw) node[fill=white, inner sep=0.3, right, midway] {$\iota_2$};
  \path (qnw) -- (qne) node[fill=white, inner sep=0.3, below, pos=0.5] {\footnotesize$\mathit{TempRise}(x)$};


  \coordinate (rsw) at (5.5,2.5) {};
  \coordinate (rse) at (7.7,2.5) {};
  \coordinate (rne) at (7.7,3.5) {};
  \coordinate (rnw) at (5.5,3.5) {};

    \draw[ pattern=north west lines, pattern color = gray] (rsw) -- (rse) -- (rne) -- (rnw) -- cycle;

  \path (rse) -- (rsw) node[fill=white, inner sep=0.3, above, midway] {$\iota_1^+$};
  \path (rsw) -- (rnw) node[fill=white, inner sep=0.3, right, midway] {$\iota_2^+$};
  \path (rnw) -- (rne) node[fill=white, inner sep=0.3, below, pos=0.5] {\footnotesize$\mathit{BladeTemp}(x)$};


\end{tikzpicture}
\caption{Rectangles with rising temperature.}\label{f:risingrect}
\end{center}
\end{figure}

Note that relation algebras over (hyper)rectangles are well-known in temporal and spatial knowledge representation: the rectangle/block algebra $\mathsf{RA}$~\cite{DBLP:journals/logcom/BalbianiCC02} that extends Allen's interval algebra; see also~\cite{DBLP:journals/amai/NavarreteMSV13,DBLP:journals/jair/CohnLLR14,DBLP:conf/kr/ZhangR14} and references therein.
This
multi-dimensional $\HShb$ is capable of expressing rules such as `if $A$ holds at $\avec{\iota}$ and $A'$ at $\avec{\iota}'$, then $B$ holds at the intersection $\avec{\kappa}$ of $\avec{\iota}$ and $\avec{\iota}'$ (or at the smallest rectangle $\avec{\kappa}$ covering $\avec{\iota}$ and $\avec{\iota}'$)' as shown in Fig.~\ref{f:rectops}.
\begin{figure}[ht]
\begin{center}
\begin{tikzpicture}
  \coordinate (psw) at (1,1) {};
  \coordinate (pse) at (3,1) {};
  \coordinate (pne) at (3,2) {};
  \coordinate (pnw) at (1,2) {};

  \coordinate (qsw) at (2,0.5) {};
  \coordinate (qse) at (4,0.5) {};
  \coordinate (qne) at (4,1.5) {};
  \coordinate (qnw) at (2,1.5) {};

  \draw[ pattern=north west lines, pattern color = gray] (psw) -- (psw-|qnw) -- (qnw) -- (qnw-|pne) -- (pne) -- (pnw) -- cycle;

  \path (pse) -- (psw) node[fill=white, inner sep=0.3, above, pos = .95] {$\avec{\iota}$};
  \path (pnw) -- (pne) node[fill=white, inner sep=0.3, below, pos=0.5] {$A$};

  \draw[ pattern=north east lines, pattern color = gray] (qsw) -- (qse) -- (qne) -- (pne|-qnw) -- (pse) -- (pse-|qnw) -- cycle;

  \path (qse) -- (qsw) node[fill=white, inner sep=0.3, above, pos = .92] {$\avec{\iota'}$};
  \path (qnw) -- (qne) node[fill=white, inner sep=0.3, below, pos=0.8] {$A'$};

  \draw[ pattern=dots, pattern color = gray] (psw-|qsw) -- (qnw) -- (qnw-|pse) -- (pse) -- cycle;

  \path (psw-|qsw) -- (pse) node[fill=white, inner sep=0.3, above, pos = .15] {$\avec{\kappa}$};
  \path (qnw) -- (qnw-|pse) node[fill=white, inner sep=0.3, below, pos=0.7] {$B$};

  \begin{scope}[xshift = 5cm]

   \coordinate (psw) at (1,1) {};
  \coordinate (pse) at (3,1) {};
  \coordinate (pne) at (3,2) {};
  \coordinate (pnw) at (1,2) {};

  \coordinate (qsw) at (2,0.5) {};
  \coordinate (qse) at (4,0.5) {};
  \coordinate (qne) at (4,1.5) {};
  \coordinate (qnw) at (2,1.5) {};

  \draw[ pattern=north west lines, pattern color = gray] (psw) -- (pnw) -- (pne) -- (pse) -- cycle;

  \path (pse) -- (psw) node[fill=white, inner sep=0.3, above, pos = .95] {$\avec{\iota}$};
  \path (pnw) -- (pne) node[fill=white, inner sep=0.3, below, pos=0.5] {$A$};

  \draw[ pattern=north east lines, pattern color = gray] (qsw) -- (qnw) -- (qne) -- (qse) -- cycle;

  \path (qse) -- (qsw) node[fill=white, inner sep=0.3, above, pos = .92] {$\avec{\iota'}$};
  \path (qnw) -- (qne) node[fill=white, inner sep=0.3, below, pos=0.8] {$A'$};

  \draw[ pattern=dots, pattern color = gray] (psw|-qsw) -- (psw) -- (psw-|qsw) -- (qsw) -- cycle;

  \path (psw|-qsw) -- (qsw) node[fill=white, inner sep=0.3, above, pos = .2] {$\avec{\kappa}$};

  \draw[pattern=dots, pattern color = gray] (qnw-|pse) -- (pne) -- (pnw-|qne) -- (qne) -- cycle;

  \path (pne) -- (pne-|qne) node[fill=white, inner sep=0.3, below, pos=0.7] {$B$};


  \end{scope}

\end{tikzpicture}
\end{center}
\caption{Expressing simple rules in multi-dimensional $\HShb$.}\label{f:rectops}
\end{figure}

Answering ontology-mediated queries with ontologies in the datalog extension of multi-dimensional $\HShb$ is \PTime-complete for data complexity and can also be done via rewriting into standard datalog queries over the data. The reasonable scalability of this approach was shown experimentally by Kontchakov et al.~\citeyear{IJCAI16} for both one- and two-dimensional cases using standard off-the-shelf datalog tools.


\section{Lower bounds}\label{lb}

In this section, we show that tractability results such as Theorem~\ref{t:inp} are not possible when some kind of `controlled infinity' becomes expressible in the formalism.

\subsection{Methodology}

When
simulating complex problems in $\HS$-models, we always begin by singling out those intervals---call them \emph{units\/}---that are used in the simulation.
It should be clear that if an $\HS$-fragment is capable of
\begin{itemize}
\item[(\emph{i})] forcing an $\omega$-type infinite (or unbounded finite) sequence of units, and

\item[(\emph{ii})] passing polynomial-size information from one unit to the next,
\end{itemize}
then it is $\PSpace$-hard (because polynomial space bounded Turing machine computations can be encoded). It is readily seen that $\HSh$ can easily do both (\emph{i}) and (\emph{ii}).
We show that, in certain situations, Horn clauses can be encoded by means of core clauses, which gives (\emph{i}) and (\emph{ii}) already in the core fragments. In particular, this is the case:
\begin{itemize}
\item  for $\HSc$ over any class  of unbounded timelines under arbitrary semantics (Theorem~\ref{t:pspacecore}), and even
\item  for $\HScb$ over any class of unbounded discrete timelines under the irreflexive semantics (Theorem~\ref{t:pspacecoreboxirrefldisc}).
\end{itemize}
Further, if a fragment is expressive enough to
\begin{itemize}
\item[(\emph{iii})] force an $\omega\times\omega$-like grid-structure of units, and
\item[(\emph{iv})] pass (polynomial-size) information from each unit representing some grid-point to the unit representing its right- and up-neighbours in the grid,
\end{itemize}
then it becomes possible to encode undecidable problems such as $\omega\times\omega$-tilings, Turing
or counter machine computations.
We show this to be the case for the following fragments:
\begin{itemize}
\item
$\HShd$ over any class  of unbounded timelines under arbitrary semantics (Theorem~\ref{t:undechorn}),
\item
$\HSc$ over any class  of unbounded timelines under the irreflexive semantics (Theorem~\ref{t:undeccoreirrefl}), and
\item  $\HShb$ over any class  of unbounded  discrete timelines under the irreflexive semantics (Theorem~\ref{t:undechornboxirrefldisc}).
\end{itemize}
%
%

Although $\HS$-models are always grid-like by definition, it is not straightforward to achieve (\emph{iii})--(\emph{iv}) in them.
Even if we consider the irreflexive semantics and discrete underlying linear orders, $\HS$ does not provide us with horizontal and vertical next-time operators.
The undecidability proofs for (Boolean) $\HS$-satisfiability given by  Halpern and Shoham~\citeyear{HalpernS91} and Marx and Reynolds~\citeyear{undecidability_compass_logic} (for irreflexive semantics), by Reynolds and Zakharyaschev~\citeyear{Reynolds01122001} and Gabbay et al.~\citeyear{many_dimensional_modal_logics}
(for arbitrary semantics), and by Bresolin et al.~\citeyear{lpar08} (for the $\begins\eds$, $\beginsb\eds$ and $\beginsb\edsb$ fragments with irreflexive semantics) all employ the following solution to this problem:
\begin{itemize}
\item[(\emph{v})] Instead of using a grid-like subset of an $\HS$-model as units representing grid-locations, we use some Cantor-style enumeration of either the whole \mbox{$\omega\times\omega$}-grid or its north-western octant $\nwplane$ (see Fig.~\ref{f:enumtwo}), and then force a \emph{unique} infinite (or unbounded finite) sequence of units representing this enumeration (or an unbounded finite prefix of it).

\item[(\emph{vi})] Then we use some `up- and right-pointers' in the model to access the unit representing the grid-location immediately above and to the right of the current one.
\end{itemize}
\begin{figure}[t!]
\setlength{\unitlength}{0.1cm}
\begin{center}
\begin{picture}(60,53)(-5,0)
\put(10,4){\circle*{1}}
\multiput(10,14)(10,0){2}{\circle*{1}}
\multiput(10,24)(10,0){3}{\circle*{1}}
\multiput(10,34)(10,0){4}{\circle*{1}}
\multiput(10,44)(10,0){5}{\circle*{1}}
\thicklines
\put(10,4){\line(0,1){10}}
\put(10,14){\line(1,0){10}}
\put(10,24){\line(1,-1){10}}
\put(10,24){\line(1,0){20}}
\put(10,34){\line(2,-1){20}}
\put(10,34){\line(1,0){30}}
\put(10,44){\line(3,-1){30}}
\put(10,44){\line(1,0){33}}
\put(44.5,43.9){$\ldots$}

\put(11,3){$0$}
\put(11,11){$1$}
\put(21,11){$2$}
\put(10,25){$3$}
\put(20,25){$4$}
\put(30,20){$5$}
\put(10,35){$6$}
\put(20,35){$7$}
\put(30,30){$8$}
\put(40,30){$9$}

\put(1,3){$( 0,0)$}
\put(1,13){$( 0,1)$}
\put(1,23){$( 0,2)$}
\put(1,33){$( 0,3)$}
\put(1,43){$( 0,4)$}

\put(16,46){$( 1,4)$}
\put(26,46){$( 2,4)$}
\put(36,46){$( 3,4)$}

\put(7,51){wall}
\put(9,48){${}_\downarrow$}
\put(50,50){diagonal}
\put(52,46.5){${}_\swarrow$}
\put(-12,13){line$_1\to$}
\put(-12,23){line$_2\to$}
\put(-12,33){line$_3\to$}
\put(-12,43){line$_4\to$}
\end{picture}
\end{center}
\caption{An enumeration of the $\nwplane$-grid.}\label{f:enumtwo}
\end{figure}

\noindent Here, we follow a similar approach. The proofs of Theorems~\ref{t:undechorn}--\ref{t:undechornboxirrefldisc} differ in how (\emph{v}) and (\emph{vi}) are achieved by the capabilities of the different formalisms.
\begin{itemize}
\item In the proof of Theorem~\ref{t:undechorn}, the encoding of the $\omega\times\omega$-grid resembles that of~\cite{undecidability_compass_logic,Reynolds01122001,many_dimensional_modal_logics} for modal products of linear orders, and~\cite{gkwz05a} for modal products of various transitive (not necessarily linear) relations, regardless whether the relations are irreflexive or reflexive.
In particular, in the reflexive semantics the uniqueness constraints in (\emph{v}) are usually not satisfiable, so instead it is forced that all points encoding the same unit behave in the same way.
It turns out that, with some additional `tricks'\!, this technique is applicable to $\HShd$-formulas.

\item It is not clear whether the above method can be applied to the case of $\HSc$. In the proof of Theorem~\ref{t:undeccoreirrefl}, we achieve (for the irreflexive semantics) (\emph{v}) and (\emph{vi}) in a different way, similar to that of~\cite{HalpernS91}.

\item Both techniques above make an essential use of $\auf\R\zu$-operators. In order to achieve (\emph{v}) and (\emph{vi}) using $\HShb$-formulas with the irreflexive semantics and discrete linear orders, in the proof of Theorem~\ref{t:undechornboxirrefldisc} we provide a completely different encoding the $\nwplane$-grid.
\end{itemize}
%


\subsection{Turing machines}\label{tm}

We begin by fixing the notation and terminology regarding Turing machines.
A {\em single-tape right-infinite deterministic Turing Machine} ({\em TM}, for short) is a tuple
$\A=( Q,\Sigma,q_0,q_f,\delta_{\A})$, where $Q$
is a finite set of {\em states} containing, in particular, the
  {\em initial state\/} $q_0$ and the {\em halt state\/} $q_f$,
$\Sigma$ is the {\em tape alphabet\/}  (with a distinguished {\em blank} symbol $\sqcup\in \Sigma$), and
 $\delta_{\A}$ is the {\em transition function\/}, where we use the symbol $\pounds\notin\Sigma$ to mark
the leftmost cell of the tape:
\[
\delta_{\A}:(Q-\{q_f\})\times (\Sigma\cup\{\pounds\})\to Q\times(\Sigma\cup\{
  \tml ,\tmr \}).
\]
The transition function transforms each pair of the form $(q,s)$ into one of the following
pairs:
\begin{itemize}
\item $(q',s')$ (write $s'$ and change the state to $q'$);
\item $(q',\tml)$ (move one cell left and change the state to $q'$);
\item $( q', \tmr)$ (move one cell right and change the state to $q'$),
\end{itemize}
where $\tml$ and $\tmr$ are fresh symbols. We assume that if $s=\pounds$ (i.e.,
the leftmost cell of the tape is active) then
$\delta_{\A}(q,s) = ( q',\tmr)$ (that is, having reached
the leftmost cell, the machine always moves to the right).  We set $\textit{size}(\A)=|Q\cup\Sigma\cup\delta_{\A}|$. {\em Configurations} of $\A$ are infinite
sequences  of the form
$$
C=( s_0, s_1,\ldots, s_i,\ldots, s_n,\sqcup,\ldots),
$$
where either $s_0=\pounds$ and all $s_1,\ldots,s_n$ save one, say $s_i$, are in $\Sigma$, while $s_i$
belongs to $Q\times \Sigma$ and represents the \emph{active cell} and the
\emph{current state}, or $s_0=( q,\pounds)$ for some $q\in Q$ ($s_0$ is the active cell), and all
 $s_1,\ldots,s_n$ are in $\Sigma$. In both cases, all cells of the tape located to the right of $s_n$ contain $\sqcup$.
We assume that the machine always starts with the empty tape (all cells
of which are blank), and so the {\em initial configuration} is represented
by the sequence
\[
C_0=\bigl(( q_0,\pounds),\sqcup,\sqcup,\ldots\bigr).
\]
We denote by $(\CC_n \mid n<H)$ the unique sequence of subsequent configurations
of $\A$ starting with the empty tape ---the unique \emph{computation of} $\A$ \emph{with empty input}---where
\[
H=\left\{
\begin{array}{ll}
n+1, & \mbox{$n$ is the smallest number with $( q_f,s)$ occurring in $\CC_n$ for some $s$},\\
\omega, & \mbox{otherwise}.
\end{array}
\right.
\]
If $H<\omega$, we say that $\A$ \emph{halts with empty input\/}, and call $\CC_{H-1}$ the \emph{halting configuration of} $\A$.
If $H=\omega$, we say that $\A$ \emph{diverges with empty input\/}.
We denote by $\CC_n(m)$ the $m$th symbol in $\CC_n$.

In our lower bound proofs, we use the following Turing machine problems~\cite{Moret1998}:

\medskip
\noindent
\underline{{\sc Halting:}}\ \ ($\Sigma_1^0$-hard)\\[3pt]
Given a Turing machine $\A$, does it halt with empty input?

\medskip
\noindent
\underline{{\sc Non-halting:}}\ \ ($\Pi_1^0$-hard)\\[3pt]
Given a Turing machine $\A$, does it diverge with empty input?

\medskip
\noindent
\underline{{\sc PSpace-bound halting:}}\ \ ($\PSpace$-hard)\\[3pt]
Given a Turing machine $\A$ whose computation with empty input uses at most $\textit{poly}\bigl(\textit{size}(\A)\bigr)$ tape cells for some polynomial function $\textit{poly}()$, does $\A$ halt on empty input?

\medskip
\noindent
\underline{{\sc PSpace-bound non-halting:}}\ \ ($\PSpace$-hard)\\[3pt]
Given a Turing machine $\A$ whose computation with empty input uses at most $\textit{poly}\bigl(\textit{size}(\A)\bigr)$ tape cells for some polynomial function $\textit{poly}()$, does $\A$ diverge on empty input?


\subsection{\PSpace-hardness of core fragments}\label{pspace}

As we have already observed, proving \PSpace-hardness in the case of $\HSh$ is relatively easy. In order to do this in the case
$\HSc$, we use the following \emph{binary implication trick} to capture at least some of the Horn features in $\HSc$.
For any literals $\lambda_1$, $\lambda_2$, and $\lambda$, we define the
formula $\bigl[\lambda_1\land\lambda_2\imph\lambda\bigr]$ as the conjunction of
\begin{align}
\label{vsees}
& \U(\lambda_1\to\AD\mu_1)\land\U(\lambda_2\to\AD\mu_2),\\
\label{lower}
& \U(\mu_2\to\neg\Dv\mu_1),\\
\label{vall}
& \U(\mu_1\to\mu\land\Bv\mu)\land\U(\mu_2\to\BB\mu),\\
\label{hall}
& \U(\AB\mu\to\lambda),
\end{align}
where $\mu_1$, $\mu_2$, and $\mu$ are fresh variables
(the $H$ in $\Rightarrow_H$ stands for `horizontal').
The following claim holds for arbitrary
semantics:

\begin{claim}\label{c:bicomp}
Suppose $\M$ is an $\HS$-model based on some linear order $\T$ and satisfying $\bigl[\lambda_1\land\lambda_2\imph\lambda\bigr]$.
For all $y$ in $\T$, if there exist $x_1,x_2\le y$ such that $\M,\auf x_1,y\zu\models\lambda_1$ and $\M,\auf x_2,y\zu\models\lambda_2$, then
$\M,\auf x,y\zu\models\lambda$ for all $x\le y$.
\end{claim}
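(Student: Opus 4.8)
The plan is to unwind the definition of $\bigl[\lambda_1\land\lambda_2\imph\lambda\bigr]$ and chase the consequences of the four conjuncts \eqref{vsees}--\eqref{hall} at the relevant intervals. Fix $y\in\T$ and suppose $x_1,x_2\le y$ with $\M,\auf x_1,y\zu\models\lambda_1$ and $\M,\auf x_2,y\zu\models\lambda_2$. First I would apply \eqref{vsees}: since $\AD\mu_1$ holds at $\auf x_1,y\zu$, there is some interval $\auf y,z_1\zu$ (with $y<z_1$ under the irreflexive semantics, $y\le z_1$ in the reflexive case) at which $\mu_1$ holds; similarly $\mu_2$ holds at some $\auf y,z_2\zu$. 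The key point is that both these witnessing intervals start at the \emph{same} point $y$, because $\R=\after$ forces the witness to begin where the current interval ends.

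Next I would use \eqref{lower}, which says $\mu_2\to\neg\Dv\mu_1$, i.e.\ no interval properly $\beginsb$-below a $\mu_2$-interval satisfies $\mu_1$. Applied at $\auf y,z_2\zu$: if we had $z_1<z_2$ then $\auf y,z_1\zu$ would be $\beginsb$-below $\auf y,z_2\zu$ and carry $\mu_1$, a contradiction. Hence $z_2\le z_1$. (One should double-check the boundary case $z_1=z_2$ is harmless, and handle the strict semantics where $\beginsb$ is irreflexive — there $z_1<z_2$ is exactly the forbidden configuration, so $z_2\le z_1$ again.) Now \eqref{vall} propagates $\mu$ downward in the vertical ($\beginsb$) direction: from $\mu_1$ at $\auf y,z_1\zu$ we get $\mu$ at $\auf y,z_1\zu$ and at all $\auf y,w\zu$ with $z_1\le w$ via $\Bv\mu$; and from $\mu_2$ at $\auf y,z_2\zu$ we get $\mu$ at all $\auf y,w\zu$ with $w\le z_2$ via $\BB\mu$. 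Since $z_2\le z_1$, these two ranges together cover \emph{every} interval $\auf y,w\zu$ with $w\ge y$; in particular $\mu$ holds at $\auf y,w\zu$ for all such $w$.

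Finally I would apply \eqref{hall}, $\AB\mu\to\lambda$, read at an arbitrary interval $\auf x,y\zu$ with $x\le y$: the $\after$-successors of $\auf x,y\zu$ are exactly the intervals $\auf y,w\zu$, and we have just shown $\mu$ holds at all of them, so $\AB\mu$ holds at $\auf x,y\zu$, whence $\lambda$ holds at $\auf x,y\zu$. This gives the claim for all $x\le y$. The main obstacle, and the only place requiring real care, is step two: getting the ordering $z_2\le z_1$ out of \eqref{lower} correctly, and making sure the argument is robust across both the reflexive and irreflexive readings of $\after$ and $\beginsb$ (including whether punctual intervals $\auf y,y\zu$ are allowed) — everything else is a routine propagation chase. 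I would also note for the record that \eqref{vsees}--\eqref{hall} are genuine $\HSc$-clauses once the abbreviations for $\AD$, $\Dv$, $\Bv$, $\BB$, $\AB$ on the left/right of clauses are expanded as in Section~\ref{defs}, so the trick stays inside the core fragment.
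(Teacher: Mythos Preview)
Your overall approach matches the paper's proof exactly, but you have the inequality from \eqref{lower} backwards, and with it reversed the covering step fails.

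Recall that $\Dv=\langle\beginsb\rangle$ and that $\auf y,z_2\zu\,\beginsb\,\auf y,w\zu$ iff $z_2<w$ (same left endpoint, \emph{longer} interval). So $\M,\auf y,z_2\zu\models\neg\Dv\mu_1$ says: there is no $w>z_2$ with $\mu_1$ at $\auf y,w\zu$. Since $\mu_1$ holds at $\auf y,z_1\zu$, this forces $z_1\le z_2$, not $z_2\le z_1$. (Your sentence ``if we had $z_1<z_2$ then $\auf y,z_1\zu$ would be $\beginsb$-below $\auf y,z_2\zu$'' has the relation the wrong way round: it is $\auf y,z_1\zu\,\beginsb\,\auf y,z_2\zu$ when $z_1<z_2$, which says nothing about $\Dv\mu_1$ evaluated at $\auf y,z_2\zu$.)

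This matters for the next step. Your two ranges from \eqref{vall} are $\{w:w\ge z_1\}$ (via $\mu_1\to\mu\land\Bv\mu$) and $\{w:w\le z_2\}$ (via $\mu_2\to\BB\mu$, reflexive reading; $w<z_2$ irreflexively). With your inequality $z_2\le z_1$ these leave a gap $(z_2,z_1)$ whenever $z_2<z_1$, so the claim ``these two ranges together cover every interval $\auf y,w\zu$'' is false as stated. With the correct inequality $z_1\le z_2$ the two ranges do cover all $w\ge y$, and then \eqref{hall} finishes exactly as you describe. In short: swap the inequality to $z_1\le z_2$ and your argument is the paper's proof.
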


\begin{proof}
Suppose $\M,\auf x_1,y\zu\models\lambda_1$ and $\M,\auf x_2,y\zu\models\lambda_2$. Take some $x\le y$.
By \eqref{vsees}, there exist $z_1,z_2\geq y$ such that
$\M,\auf y,z_1\zu\models\mu_1$ and $\M,\auf y,z_2\zu\models\mu_2$. Then $z_1\leq z_2$
by \eqref{lower}. So $\M,\auf y,z\zu\models\mu$ for all $z\geq y$ by \eqref{vall}, and therefore
$\M,\auf x,y\zu\models\lambda$ by \eqref{hall}.
\end{proof}

\emph{Soundness\/}:
Observe that in order to satisfy $\bigl[\lambda_1\land\lambda_2\imph\lambda\bigr]$ the following are necessary:
\begin{itemize}
\item
$\lambda$ is \emph{horizontally stable\/}: for every $y$, we have $\M,\auf x,y\zu\models\lambda$ iff $\M,\auf x',y\zu\models\lambda$ for all $x'$;
\item
if $\M,\auf x',y\zu\not\models\lambda$ (and so $\M,\auf x,y\zu\not\models\lambda$ for all $x$) and $\M,\auf x'',y\zu\models\lambda_1$ for some $x',x''$, then $\M,\auf x,y\zu\not\models\lambda_2$ should hold for all $x$.
\end{itemize}

We use the binary implication trick to prove the following:

\begin{theorem}\label{t:pspacecore}\textbf{\textup{(}$\HSc$, arbitrary semantics\textup{)}}\\
\textup{(}i\textup{)}
For any class $\CC$ of linear orders containing an infinite order,
$\CC$-satisfiability of $\HSc$-formulas is $\PSpace$-hard.
\textup{(}ii\textup{)}
$\Fin$-satisfiability of $\HSc$-formulas is $\PSpace$-hard.
\end{theorem}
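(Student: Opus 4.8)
\noindent\emph{Proof plan.}
I will reduce \textsc{PSpace-bound non-halting} to $\CC$-satisfiability of $\HSc$-formulas for part~(i), and \textsc{PSpace-bound halting} to $\Fin$-satisfiability for part~(ii); since $\PSpace$ is closed under complement, either reduction yields $\PSpace$-hardness. Fix a Turing machine $\A$ whose computation $(\CC_n\mid n<H)$ on the empty input uses at most $p=\textit{poly}(\textit{size}(\A))$ tape cells, and, for part~(i), assume without loss of generality that the infinite order in $\CC$ contains an infinite ascending chain (otherwise mirror the whole construction, replacing $\after$ by $\afterb$). In polynomial time I build an $\HSc$-formula $\varphi_\A$ whose every model is forced to carry an $\after$-chain of \emph{unit} intervals $U_0 \after U_1 \after U_2 \after \cdots$, with $U_n$ intended to encode $\CC_n$. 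A fresh variable $\state$ marks the units; an initial condition places $\state$ at $U_0$; and for every state-symbol pair $(q,s)$ with $q\neq q_f$ and every cell position $i<p$ I include the core clause $\U((q,s)_i\to\AD\state)$, so that a non-halting configuration demands a successor unit while a halting one (whose active cell is $(q_f,s)$) demands none. The configuration $\CC_n$ is stored at $U_n$ by the polynomially many propositional variables $s_j$ ($s$ a tape symbol or a state-symbol pair, $j<p$), which realises the `passing polynomial-size information' requirement of the methodology; an initial condition fixes the $s_j$ at $U_0$ to describe $\CC_0$, and core clauses $\U(s_j\land s'_j\to\bot)$ for distinct symbols $s,s'$, plus similar clauses forbidding two active cells in one configuration, keep every configuration well formed.

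\smallskip
The core of the reduction is the transfer of data from $U_n$ to $U_{n+1}$ in accordance with $\delta_{\A}$. The local law computing cell $j$ of $\CC_{n+1}$ from the symbols $a,b,c$ in cells $j-1,j,j+1$ of $\CC_n$ is a ternary Horn implication, which $\HSc$ does not provide --- and neither does it provide the `filter by $\state$' idiom of the syntactic sugar. Both are simulated with the binary implication trick of Claim~\ref{c:bicomp}: I first contract the ternary premise to a single literal by composing $\bigl[a_{j-1}\land b_j\imph\mu\bigr]$ with $\bigl[\mu\land c_{j+1}\imph\nu\bigr]$ (fresh $\mu,\nu$), then push $\nu$ to the successor line by $\U(\nu\to[\after]p)$, and finally apply $\bigl[p\land\state\imph d_j\bigr]$, where $d=\textit{next}(a,b,c)$, which writes the required symbol at each successor unit; an entirely analogous chain handles the `unchanged cell' case. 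Because every literal used as a premise holds at intervals ending at the right endpoint of the pertinent unit, and Claim~\ref{c:bicomp} propagates its conclusions \emph{horizontally}, the data for each $\CC_n$ becomes horizontally stable along $U_n$'s line, the auxiliaries $\mu,\nu$ stay confined to that line, and the marker $p$ is relevant only at units; hence no information leaks to a wrong configuration. (Each use of the trick demands a bounded `halo' of auxiliary intervals beyond each unit, which over a finite order is accommodated by extending the order by a bounded amount.) For part~(i) I also add $\U((q_f,s)_i\to\bot)$ for all $i,s$, so a model exists over the infinite order of $\CC$ iff $\A$ never reaches $q_f$, i.e.\ diverges; for part~(ii) these clauses are dropped, so over a finite linear order every model carries a finite chain of units which, by the construction, terminates exactly at a halting configuration, whence $\varphi_\A$ is $\Fin$-satisfiable iff $\A$ halts.

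\smallskip
The hardest part will be the soundness direction: proving that \emph{every} model of $\varphi_\A$ contains a faithful copy of $(\CC_n\mid n<H)$, so that the interplay of the chained binary implications, the box-propagated markers, and the existential clauses $\U((q,s)_i\to\AD\state)$ cannot produce a spurious model --- a finite one in case~(ii), or one that remains satisfiable even though $\A$ halts in case~(i). The delicate point within this argument is exactly that Claim~\ref{c:bicomp} does not deliver a clean ternary implication but a horizontally-propagating gadget, so one must check that storing configurations line-wise, keeping the auxiliaries on a single line, and reading off each $\CC_{n+1}$ only at a successor \emph{unit} really does neutralise this side effect. The completeness direction --- that the intended layout of $(\CC_n\mid n<H)$ over a suitable order of $\CC$, respectively over a sufficiently large finite order, satisfies $\varphi_\A$ --- is routine.
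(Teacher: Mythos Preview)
Your overall plan---reduce \textsc{PSpace-bound (non-)halting}, keep polynomially many cell variables per configuration, and use Claim~\ref{c:bicomp} to compress conjunctive premises---matches the paper's, and chaining $\bigl[a_{j-1}\land b_j\imph\mu\bigr]$ with $\bigl[\mu\land c_{j+1}\imph\nu\bigr]$ is fine in principle. The gap is the step that carries $\nu$ to the next configuration. The clause $\U(\nu\to\AB p)$ does not push $p$ to the successor \emph{line}: since $\nu$ is horizontally stable on line $u_{n+1}$, the box forces $p$ at \emph{every} $\auf u_{n+1},z\zu$, so $p$ occupies the whole column above $u_{n+1}$. Then $\bigl[p\land\state\imph d_j\bigr]$ fires on every line $u_{m+1}$ with $m\ge n+1$ (each such line carries both the $\state$-unit $U_m$ ending there and the $p$-interval $\auf u_{n+1},u_{m+1}\zu$), forcing $d_j$ at all later configurations, not only at $\CC_{n+1}$. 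Together with your uniqueness clauses $\U(s_j\land s'_j\to\bot)$ this makes $\varphi_\A$ unsatisfiable as soon as cell $j$ changes value again after step $n{+}1$; the direction you call ``routine'' is precisely the one that fails. Replacing $\AB$ by $\AD$ does not help either: then $p$ lands at a single uncontrolled interval and the soundness direction breaks instead.

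The paper avoids this by not aiming at a pre-existing successor unit. From the current line it uses $\AD$ to spawn a bar-copy $\overline{\cellvar}_i^{(q,s)}$ of the active cell, constrains it to be a $\plane$-interval (so its right endpoint \emph{defines} the next line), and gives the remaining cells bar-copies via $\bigl[\cellvar_i^{(q,s)}\land\cellvar_j^z\imph\AD\overline{\cellvar}_j^z\bigr]$, pinned vertically above the head copy by $\U(\overline{\cellvar}_i^{(q,s)}\to\neg\BD\overline{\cellvar}_j^z)$. The transition is then another binary trick whose premises are the bar-copies themselves, e.g.\ $\bigl[\overline{\cellvar}_i^{(q,s)}\land\Dv\overline{\cellvar}_j^z\imph\cellvar_j^{z'}\bigr]$; because each bar-copy is a single diamond-witness rather than a box-image, there is no column-wide leak.
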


\begin{proof}
(\emph{i}) We reduce  {\sc PSpace-bound non-halting} to $\CC$-satisfiability.
%
%
%
%
Let $\A$ be a Turing machine whose computation on empty input uses $<\textit{poly}\bigl(\textit{size}(\A)\bigr)$ tape cells for some polynomial function $\textit{poly}()$, and
let $N=\textit{poly}\bigl(\textit{size}(\A)\bigr)$.
Then we may assume that each configuration $\CC$ of $\A$
is not infinite but of length $N$, and $\A$ never visits the last cell of any configuration.
Let $\GA=\Sigma\cup\{\pounds\}\cup \bigl(Q\times(\Sigma\cup\{\pounds\})\bigr)$.
%
For each $i<N$ and $z\in\GA$, we introduce two propositional variables:
$\cellvar_i^z$ (to encode that `the content of the $i$th cell is $z$') and its `copy' $\overline{\cellvar}_i^z$.

%
%
%
%
Then we can express the uniqueness of cell-contents by
\begin{equation}
\label{celluniq}
\bigwedge_{i<N}\bigwedge_{z\ne z'\in\GA}\U(\cellvar_i^z\to\neg\cellvar_i^{z'}),
\end{equation}
and initialise the computation by
\begin{equation}
\label{cellinit}
\cellvar_0^{(q_0,\pounds)}\land\bigwedge_{0<i<N}\cellvar_i^\sqcup.
\end{equation}
%
%
%
Now we pass information from one configuration to the next, using the `copy' variables and the
`binary implication trick':
\begin{align}
\label{nextunit}
& \U\bigl(\cellvar_i^{(q,s)}\to\AD\overline{\cellvar}_i^{(q,s)}\bigr),\qquad\mbox{for $i<N$, $(q,s)\in (Q-\{q_f\})\times(\Sigma\cup\{\pounds\})$},\\
\nonumber
& \bigl[\cellvar_i^{(q,s)}\land\cellvar_j^z\imph\AD\overline{\cellvar}_j^z\bigr],\\
\label{nextbarvar}
& \hspace*{3cm}
 \mbox{for $i,j<N$, $(q,s)\in (Q-\{q_f\})\times(\Sigma\cup\{\pounds\})$, $z\in \Sigma\cup\{\pounds\}$},\\
 & \U\bigl(\overline{\cellvar}_i^{(q,s)}\to\neg\BD\overline{\cellvar}_j^z\bigr).
\end{align}
We can force that all $\overline{\cellvar}_i^{(q,s)}$-intervals are different (meaning none of them is punctual)
 by the conjunction of, say,
\begin{align}
\label{cellunit}
& \U\bigl(\overline{\cellvar}_i^{(q,s)}\to\plane\bigr),\qquad\mbox{for $i<N$, $(q,s)\in Q\times(\Sigma\cup\{\pounds\})$},\\
\label{unitnotdiag}
& \U (\plane\to\neg\DB\plane).
\end{align}
%
Finally, we can ensure that the information passed in fact encodes the computation steps of $\A$ by the
following formulas. For all $(q,s)\in (Q-\{q_f\})\times(\Sigma\cup\{\pounds\})$ and
$z\in\Sigma\cup\{\pounds\}$,
\begin{itemize}
\item
if $\delta_{\A}(q,s)=(q',s')$, then take the conjunction of
\begin{align}
& \U\bigl(\overline{\cellvar}_i^{(q,s)}\to\cellvar_i^{(q',s')}\bigr),\qquad\mbox{for $i<N$},\\
\label{headstays}
& \bigl[\overline{\cellvar}_i^{(q,s)}\land\Dv\overline{\cellvar}_j^z\imph\cellvar_j^z\bigr],\qquad\mbox{for $i,j<N$, $j\ne i$};
\end{align}

\item
if $\delta_{\A}(q,s)=(q',\tmr)$, then take the conjunction of
\begin{align}
\label{headcell}
& \U\bigl(\overline{\cellvar}_i^{(q,s)}\to\cellvar_i^s\bigr),\qquad\mbox{for $i<N-1$},\\
\label{headright}
& \bigl[\overline{\cellvar}_i^{(q,s)}\land\Dv\overline{\cellvar}_{i+1}^z\imph\cellvar_{i+1}^{(q',z)}\bigr],\qquad\mbox{for $i<N-1$},\\
& \bigl[\overline{\cellvar}_i^{(q,s)}\land\Dv\overline{\cellvar}_j^z\imph\cellvar_j^z\bigr],\qquad\mbox{for $i<N-1$, $j<N$, $j\ne i,\,i+1$};
\end{align}

\item
if $\delta_{\A}(q,s)=(q',\tml)$, then take the conjunction of \eqref{headcell} for $0<i<N$ and
\begin{align}
& \bigl[\overline{\cellvar}_i^{(q,s)}\land\Dv\overline{\cellvar}_{i-1}^z\imph\cellvar_{i-1}^{(q',z)}\bigr],\qquad\mbox{for $0<i<N$},\\
\label{pspacelast}
& \bigl[\overline{\cellvar}_i^{(q,s)}\land\Dv\overline{\cellvar}_j^z\imph\cellvar_j^z\bigr],\qquad\mbox{for $0<i<N$, $j<N$, $j\ne i,\,i-1$}.
\end{align}
\end{itemize}
Finally, we force non-halting with
\begin{equation}
\label{pspacenonhalting}
\U \bigl(\cellvar_i^{(q_f,s)}\to\bot\bigr),\qquad\mbox{for $i<N$, $s\in\Sigma\cup\{\pounds\}$}.
\end{equation}

\begin{claim}
Let $\ftmpspace$ be the conjunction of \eqref {celluniq}--\eqref{pspacenonhalting}. If $\ftmpspace$ is satisfiable in an $\HS$-model, then $\A$ diverges with empty input.
\end{claim}

\begin{proof}
Take any $\HS$-model $\M$ based on a linear order $\T$. Suppose $\M,\auf r,r'\zu\models\ftmpspace$. Then it is not hard to show by induction on $n$ that
there exists an infinite sequence $u_0\le u_1 < u_2 <\dots < u_n<\dots$ of points in $\T$ such that
$u_0=r$, $u_1=r'$, and for all $n<\omega$, the interval $\auf u_n,u_{n+1}\zu$ `represents' the
$n$th configuration $\CC_n$ in the infinite computation of $\A$ with empty input in the following sense:
\[
\M,\auf u_n,u_{n+1}\zu\models\cellvar_i^z\quad\mbox{iff}\quad
\CC_n(i)=z,
\]
for all $i<N$ and $z\in \GA$.
\end{proof}

On the other hand, if $\A$ diverges on empty input, then take some linear order $\T$ containing an infinite ascending chain $t_0< t_1 <\dots$ and define an $\HS$-model
$\M=(\F_{\T},\nu)$ by taking, for all $i<N$ and $z\in \GA$,
\begin{align*}
\nu(\plane) & = \{\auf t_{2n},t_{2n+2}\zu\mid n<\omega\},\\
\nu(\cellvar_i^z) & = \{\auf x,t_{2n+2}\zu\mid \CC_n(i)=z,\ n<\omega,\ x\le t_{2n+2}\},\\
\nu(\overline{\cellvar}_i^z) & = \left\{
\begin{array}{ll}
\{\auf t_{2n+2},t_{2n+5}\zu\mid \CC_n(i)=z,\ n<\omega\}, & \mbox{if $z\in \Sigma\cup\{\pounds\}$},\\[3pt]
\{\auf t_{2n+2},t_{2n+4}\zu\mid \CC_n(i)=z,\ n<\omega\}, & \mbox{if $z\in Q\times(\Sigma\cup\{\pounds\})$}.
\end{array}
\right.
\end{align*}
We
claim that it is possible to evaluate the fresh auxiliary variables in the binary trick formulas
\eqref{nextbarvar}, \eqref{headstays}, \eqref{headright}--\eqref{pspacelast}
so that $\M,\auf t_0,t_2\zu\models\ftmpspace$ with arbitrary semantics.
Indeed, for example, fix some
$(q,s)\in (Q-\{q_f\})\times(\Sigma\cup\{\pounds\})$ with $\delta_{\A}(q,s)=(q',\tmr)$,
$z\in\Sigma\cup\{\pounds\}$, and $i<N-1$,
and consider the corresponding instance of conjunct \eqref{headright}:
\[
\bigl[\overline{\cellvar}_i^{(q,s)}\land\Dv\overline{\cellvar}_{i+1}^z\imph\cellvar_{i+1}^{(q',z)}\bigr].
\]
If we take
\begin{align*}
\nu(\mu_1) & = \{\auf t_{2n+4},t_{2n+6}\zu\mid \CC_n(i)=(q,s),\ n<\omega\},\\
\nu(\mu_2) & = \{\auf x,t_{2n+7}\zu\mid t_{2n+2}\leq x\leq t_{2n+5},\ \CC_n(i+1)=z,\  n<\omega\},
\end{align*}
then it is easy to check that
\[
\U(\overline{\cellvar}_i^{(q,s)}\to\AD\mu_1)\land\U(\Dv\overline{\cellvar}_{i+1}^z\to\AD\mu_2)
\qquad\mbox{and}\qquad
\U(\mu_2\to\neg\Dv\mu_1)
\]
hold in $\M$ (at all points); see Fig.~\ref{f:pspacesound}.
Further, if we take
\begin{multline*}
\nu(\mu) = \{\auf t_{2n+4},y\zu\mid y\geq t_{2n+6},\ \CC_n(i)=(q,s),\ n<\omega\}\ \cup\\
\{\auf x,y\zu\mid t_{2n+2}\leq x\leq t_{2n+5},\ x\leq y\leq t_{2n+7},\ \CC_n(i+1)=z,\  n<\omega\},
\end{multline*}
then it is straightforward to see that  $\U(\mu_1\to\mu\land\Bv\mu)\land\U(\mu_2\to\BB\mu)$
also holds in $\M$.
\begin{figure}[ht]
\begin{center}
 \begin{tikzpicture}[xscale=1.3,yscale=1.3]
\draw[thin,->] (4,2) -- (12,2);
\draw[thick] (5.5,2.1) -- (5.5,1.9) node[below]{$t_{2n+2}$};
\draw[thick] (6.5,2.1) -- (6.5,1.9) node[below]{$t_{2n+3}$};
\draw[thick] (7.5,2.1) -- (7.5,1.9) node[below]{$t_{2n+4}$};
\draw[thick] (8.5,2.1) -- (8.5,1.9) node[below]{$t_{2n+5}$};
\draw[thick] (9.5,2.1) -- (9.5,1.9) node[below]{$t_{2n+6}$};
\draw[thick] (10.5,2.1) -- (10.5,1.9) node[below]{$t_{2n+7}$};
\draw[thin,->] (3,2.5) -- (3,10.5);
\draw[thick] (2.9,4) -- (3.1,4) node[left]{$t_{2n+2}\ \ $};
\draw[thick] (2.9,5) -- (3.1,5) node[left]{$t_{2n+3}\ \ $};
\draw[thick] (2.9,6) -- (3.1,6) node[left]{$t_{2n+4}\ \ $};
\draw[thick] (2.9,7) -- (3.1,7) node[left]{$t_{2n+5}\ \ $};
\draw[thick] (2.9,8) -- (3.1,8) node[left]{$t_{2n+6}\ \ $};
\draw[thick] (2.9,9) -- (3.1,9) node[left]{$t_{2n+7}\ \ $};
\draw[thin] (4.5,3) -- (12,10.5);

 \draw[very thick, double distance=1] (3.5,4) -- (5.5,4);
 \draw (4.4,4.3) node{$\cellvar_i^{(q,s)},\ \cellvar_{i+1}^z$};

\draw[thick, pattern=dots, pattern color = gray] (5.5,9) -- (5.5,4) -- (8.5,7)  -- (8.5,9) -- cycle;
 \draw[thick, dotted] (7.5,9) -- (7.5,10.5);
   \draw (7.7,10) node{$\mu$};

\draw (6.65,8.15) node[rounded corners, inner sep=0.05cm, fill=white] {\Large $\mu$};
 \draw (7.7,7.8) node[rounded corners, inner sep=0.05cm, fill=white]{$\mu_1$};

  \draw[very thick, double distance=1] (3.5,6) -- (7.5,6);
  \draw (4.1,6.3) node{$\cellvar_{i+1}^{(q',z)}$};

  \draw[very thick, double distance=1] (5.5,9) -- (8.5,9);
   \draw (6,9.2) node{$\mu_2$};

  \draw [fill] (5.5,6) circle [radius=0.08];
    \draw (5.1,5.7) node{$\overline{\cellvar}_i^{(q,s)}$};

    \draw [fill] (5.5,7) circle [radius=0.08];
     \draw (5.1,7.3) node{$\overline{\cellvar}_{i+1}^z$};

 \draw [fill] (7.5,8) circle [radius=0.08];

 \draw (10.2,4.5) node{$\delta_{\A}(q,s)=(q',\tmr)$};
 \draw (10,4) node{$\CC_n(i)=(q,s)$};
 \draw (10,3.5) node {$\CC_n(i+1)=z$};
 \draw (10.4,3) node {$\CC_{n+1}(i+1)=(q',z)$};

 \end{tikzpicture}
\caption{Satisfying $\bigl[\overline{\cellvar}_i^{(q,s)}\land\Dv\overline{\cellvar}_{i+1}^z\imph\cellvar_{i+1}^{(q',z)}\bigr]$ in $\M$.}\label{f:pspacesound}
\end{center}
\end{figure}
Finally, we claim that $\U(\AB\mu\to\cellvar_{i+1}^{(q',z)})$
%
%
holds in $\M$ as well.

Indeed, suppose $\M,\auf x,y\zu\models\AB\mu$ for some $x\leq y$.
Then there exist $y_1$, $y_2$ such that $\M,\auf y,y_1\zu\models\mu_1$ and $\M,\auf y,y_2\zu\models\mu_2$.
Thus, there is $n<\omega$ such that $y=t_{2n+4}$ and $\CC_n(i)=(q,s)$,
and  there is $m<\omega$ such that $t_{2m+2}\leq y\leq t_{2m+5}$ and $\CC_m(i+1)=z$.
It follows that either $m=n+1$ or $m=n$. If $m=n+1$ were the case, then both
$\CC_n(i)=(q,s)$ and $\CC_{n+1}(i+1)=z$ would hold, which is not possible when the head moves to the right.
So $m=n$, and we have
$\CC_n(i)=(q,s)$ and $\CC_n(i+1)=z$. Therefore, $\CC_{n+1}(i+1)=(q',z)$, and so
$\M,\auf x,t_{2n+4}\zu\models\cellvar_{i+1}^{(q',z)}$, as required.
Checking the other conjuncts in $\ftmpspace$ is similar and left to the reader.

The case when $\T$ contains an infinite descending chain requires `symmetrical versions' of the used formulas and is also left to the reader.


\smallskip
(\emph{ii})  In the finite case, we reduce  {\sc PSpace-bound halting} to $\Fin$-satisfiability. To achieve this, we just omit
the conjunct \eqref{pspacenonhalting} from $\ftmpspace$. Now,
\eqref{nextunit}
together with the finiteness of the models force the computation to reach the halting state.
\end{proof}

\begin{theorem}\label{t:pspacecoreboxirrefldisc}\textbf{\textup{(}$\HScb$, discrete orders, irreflexive semantics\textup{)}}\\
\textup{(}i\textup{)}
For any class $\Disinf$ of discrete linear orders containing an infinite order, \mbox{$\Disinf(<)$}-satisfiability of $\HScb$-formulas is $\PSpace$-hard.
\textup{(}ii\textup{)}
$\Fin(<)$-satisfiability of $\HScb$-formulas is $\PSpace$-hard.
\end{theorem}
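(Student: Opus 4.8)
The plan is to reduce {\sc PSpace-bound non-halting} to $\Disinf(<)$-satisfiability of $\HScb$ for part~(i), and {\sc PSpace-bound halting} to $\Fin(<)$-satisfiability for part~(ii), closely following the architecture of the proof of Theorem~\ref{t:pspacecore}. Given a Turing machine $\A$ whose computation on the empty input uses $N=\textit{poly}(\textit{size}(\A))$ tape cells, one spreads the computation $(\CC_n\mid n<H)$ out along an infinite (resp.\ unbounded finite) chain of \emph{unit} intervals: the content of cell $i$ of configuration $n$ is recorded by a variable $\cellvar_i^z$, $z\in\GA$; uniqueness of cell contents is enforced by core no-goods $\U(\cellvar_i^z\land\cellvar_i^{z'}\to\bot)$; the initial configuration by initial conditions; and the transition function by passing the content of each cell --- through fresh `copy' variables living on auxiliary intervals --- to the corresponding cells of the next configuration. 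For part~(i) non-halting is then forced by $\U(\cellvar_i^{(q_f,s)}\to\bot)$, while for part~(ii) this conjunct is dropped and, as in Theorem~\ref{t:pspacecore}(ii), finiteness of the model forces the computation to reach the halting state.

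The crux, and the only place where a genuinely new device is required, is that the diamond-based \emph{binary implication trick} $\bigl[\lambda_1\land\lambda_2\imph\lambda\bigr]$ of Theorem~\ref{t:pspacecore} cannot be reused: its defining conjuncts \eqref{vsees}--\eqref{hall} place $\AD$ on the right-hand side of clauses, which is not allowed in $\HScb$, and a plain binary Horn clause $\lambda_1\land\lambda_2\to\lambda$ is not captured by core clauses within propositional Horn logic, so the modal structure must be exploited. The plan is therefore to build a \emph{box-only} core counterpart of $\bigl[\lambda_1\land\lambda_2\imph\lambda\bigr]$ --- a conjunction of $\HScb$-clauses --- that is sound over every discrete linear order under the irreflexive semantics, using the feature highlighted in the Methodology subsection: over discrete orders with the irreflexive semantics a formula of the shape $[\begins]\bot$ singles out the intervals of a fixed length, so the diagonal $\sqe$ of shortest intervals is definable by an $\HScb$-literal, and on $\sqe$ the relation $\after$ is \emph{functional} (a unit interval has a unique $\after$-successor again on $\sqe$). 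On that sub-frame $[\after]$ and $\langle\after\rangle$ coincide, which is what lets box modalities take over the role of the signal-sending diamonds: a premise sitting on a unit interval is relayed, one $\after$-step at a time, to later unit intervals, and two such relays meeting at a common unit interval trigger the conclusion, all intermediate steps being expressed by clauses of the forms $\U(\mu\to[\R]\mu')$, $\U(\mu\to\mu')$ and $\U(\mu\land\mu'\to\bot)$ only; $\afterb$ is used symmetrically for leftward/downward relays.

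With this gadget available as a black box, the remainder is bookkeeping parallel to Theorem~\ref{t:pspacecore}: configurations are laid along $\sqe$, cell $(n,i)$ ($0\le i<N$) identified with the unit interval at flattened position $nN+i$; $N$ `phase' propositions cycling modulo $N$ record which cell index a unit interval represents; and the step from $\CC_n$ to $\CC_{n+1}$ is realized by $O(N)$ `carry' propositions relaying the head's and its two neighbours' contents $N$ steps forward along $\sqe$, the actual transition step being applied by instances of the box-only gadget. For part~(i) one must in addition force the diagonal to be infinite --- otherwise a satisfying model over a too-short finite order would encode only a truncated computation and soundness would fail. The plan is to include \emph{both} an $\after$-directed and an $\afterb$-directed copy of the layout clauses --- one of the two produces an infinite diagonal over any infinite discrete order, since the unit intervals starting at the anchor and going in one of the two directions form an infinite chain there --- each carrying an `alive' flag along $\sqe$; when a copy runs into a dead-end unit interval, the corresponding `dead' signal is propagated to all intervals via the interval relations, and a no-good $\U(\mathsf{dead}_\to\land\mathsf{dead}_\leftarrow\to\bot)$ then rules out precisely the models over finite orders. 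Soundness and completeness are then routine: a model satisfying the whole conjunction yields, by induction on $n$, an infinite chain of unit intervals whose consecutive data faithfully encode $\CC_0,\CC_1,\dots$, so a halting state would fire the no-good; conversely, from a diverging (resp.\ halting) computation one lays it out along the diagonal of any infinite discrete order in $\Disinf$ (resp.\ of a sufficiently long finite order) and verifies, instantiating all the fresh auxiliary variables, that every clause holds.

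I expect the main difficulty to lie in the design and verification of the box-only core gadget: proving it sound in \emph{every} discrete irreflexive model (not merely the intended ones) while ensuring that its fresh variables can always be valued so as to hold in the models built for the completeness direction, and handling the idiosyncrasies of the irreflexive semantics over discrete orders --- punctual intervals under the non-strict reading, the failure of zone monotonicity observed after Lemma~\ref{res:mod-monoton-ext1}, and the systematic use of mirrored formulas for orders whose only infinite chain is descending.
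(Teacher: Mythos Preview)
Your overall architecture is right, and you have correctly isolated the crux: replacing the diamond-based trick $[\lambda_1\land\lambda_2\imph\lambda]$ by a box-only core gadget sound over every discrete order under the irreflexive semantics. Your proposal diverges from the paper in three places, one of which conceals a genuine gap.

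\textbf{Layout of configurations.} The paper does \emph{not} flatten configurations along the diagonal. It takes $\plane$ to be the punctual intervals (defined by $\U(\plane\to[\eds]\bot)\land\U([\eds]\bot\to\plane)$), and the whole configuration $\CC_n$ lives on the single interval $\langle u_n,u_n\rangle$ via the variables $\cellvar_i^z$, $i<N$. The one-step ``next-time'' is realised at the length-$1$ interval $\langle u_n,u_{n+1}\rangle$ by the core clause $\U([\begins]\lambda\to[\eds]\lambda')$. This removes your phase propositions and $N$-step carries entirely. Likewise, infinity is forced by the single core no-good $\U(\cellvar_i^{(q,s)}\land[\beginsb]\bot\to\bot)$ for non-halting $(q,s)$, rather than by your two-copy/dead-signal scheme.

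\textbf{The box-only gadget.} Your description---relay the two premises forward along the diagonal and ``trigger the conclusion'' when they meet---does not go through in core as stated: the only two-premise core clauses have head $\bot$, so a meeting cannot produce an arbitrary $\lambda$, and your listed clause shapes $\U(\mu\to[\R]\mu')$, $\U(\mu\to\mu')$, $\U(\mu\land\mu'\to\bot)$ confirm this. Your claim that ``on $\sqe$ the relation $\after$ is functional, so $[\after]$ and $\langle\after\rangle$ coincide'' is also off: from a punctual interval $[\after]$ still ranges over \emph{all} non-punctual successors, not just the next unit. The paper's mechanism exploits precisely this over-shooting. Its \emph{binary implication trick for the diagonal} is the three-clause, one-fresh-variable conjunction
\[
\U(\lambda_1\to[\after]\mu)\ \land\ \U(\lambda_2\to[\after][\edsb]\mu)\ \land\ \U([\after][\afterb]\mu\to\lambda).
\]
If $\lambda_1\land\lambda_2$ holds at $\langle u_n,u_n\rangle$, the first two clauses push $\mu$ onto every $\langle w,y\rangle$ with $w\le u_n<y$; at any $\langle x,u_{n+1}\rangle$ the literal $[\after][\afterb]\mu$ then holds (it asks for $\mu$ at all $\langle w,u_{n+1}\rangle$ with $w\le u_n$), and the third clause yields $\lambda$. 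The point is ``cover-and-check'': both premises contribute to covering a region with a \emph{single} $\mu$, and the conclusion is fired by a \emph{box literal on the left} of a core clause---no binary conjunction with non-$\bot$ head is needed. Once you have this, the rest of Theorem~\ref{t:pspacecore} goes through verbatim with $\imphd$ in place of $\imph$.
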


\begin{proof}
(\emph{i}) We again reduce  {\sc PSpace-bound non-halting} to the satisfiability problem.
%
Take any $\HS$-model $\M$ based on a discrete linear order $\T$, and consider the irreflexive semantics of the interval relations.
In this case, we can single out  $\plane$-intervals with the formula
%
\begin{equation}
\label{diaggen}
 \U(\plane\to\Bh\bot)\land \U(\Bh\bot\to\plane).
\end{equation}
It should be clear that if $\M\models$ \eqref{diaggen} then,
for all $\auf x,x'\zu$  in $\int(\T)$, we have $\M,\auf x,x'\zu\models\plane$ iff $x=x'$.
%
Further, it is easy to pass information from one $\plane$-interval to the next, as we have a
`next-time operator w.r.t.' the above  $\plane$-sequence. Namely,
\[
\U (\BB\lambda\to\Bh\lambda')
\]
forces $\lambda'$ to be true at a $\plane$-interval, whenever $\lambda$ is true at the previous one.

To replace the binary implication trick with one using only $\HScb$-formulas,
we employ the following \emph{binary implication trick for the diagonal}.
For any literals $\lambda_1$, $\lambda_2$ and $\lambda$, we define the
formula $\bigl[\lambda_1\land\lambda_2\imphd\lambda\bigr]$ as the conjunction of
\begin{align*}
& \U(\lambda_1\to\AB\mu),\\
& \U(\lambda_2\to\AB\EbB\mu),\\
& \U(\AB\AbB\mu\to\lambda),
\end{align*}
where $\mu$ is a fresh variable.
Then we clearly have the following:

\begin{claim}
Suppose $\M$ satisfies $\bigl[\lambda_1\land\lambda_2\imphd\lambda\bigr]$. If $\M,\auf u_n,u_n\zu\models\lambda_1\land\lambda_2$ then $\M,\auf x,u_{n+1}\zu\models\lambda$ for all $x\le u_{n+1}$.
\end{claim}

\emph{Soundness\/}:
Observe again that to satisfy $\bigl[\lambda_1\land\lambda_2\imphd\lambda\bigr]$ it is necessary that $\lambda$ is horizontally stable in the model.

\smallskip
Now suppose that
$\A$ is a Turing machine whose computation with empty input uses $<\textit{poly}\bigl(\textit{size}(\A)\bigr)$ tape cells for some polynomial function $\textit{poly}()$, and let $\ftmpspaced$
be the conjunction of $\plane$, \eqref{celluniq}, \eqref{cellinit}, \eqref{pspacenonhalting}, \eqref{diaggen}, 
and the following formulas,
for all $(q,s)\in (Q-\{q_f\})\times(\Sigma\cup\{\pounds\})$ and $z\in\Sigma\cup\{\pounds\}$:
%
\[
\U\bigl(\cellvar_i^{(q,s)}\to\neg\Bv\bot\bigr),\qquad\mbox{for $i<N$,}
\]
and
\begin{itemize}
\item
if $\delta_{\A}(q,s)=(q',s')$, then
%
\begin{align}
\nonumber
& \U\bigl(\BB\cellvar_i^{(q,s)}\to\Bh\cellvar_i^{(q',s')}\bigr),\qquad\mbox{for $i<N$},\\
\label{headstaysdiscr}
& \bigl[\cellvar_i^{(q,s)}\land\cellvar_j^z\imphd\cellvar_j^z\bigr],\qquad\mbox{for $i,j<N$, $j\ne i$};
\end{align}

\item
if $\delta_{\A}(q,s)=(q',\tmr)$, then
%
\begin{align}
\label{headcelld}
& \U\bigl(\BB\cellvar_i^{(q,s)}\to\Bh\cellvar_i^s\bigr),\qquad\mbox{for $i<N-1$},\\
\nonumber
& \bigl[\cellvar_i^{(q,s)}\land\cellvar_{i+1}^z\imphd\cellvar_{i+1}^{(q',z)}\bigr],\qquad\mbox{for $i<N-1$},\\
\nonumber
& \bigl[\cellvar_i^{(q,s)}\land\cellvar_j^z\imphd\cellvar_j^z\bigr],\qquad\mbox{for $i<N-1$, $j<N$, $j\ne i,\,i+1$};
\end{align}

\item
if $\delta_{\A}(q,s)=(q',\tml)$, then
\eqref{headcelld} for $0<i<N$ and
\begin{align*}
& \bigl[\cellvar_i^{(q,s)}\land\cellvar_{i-1}^z\imphd\cellvar_{i-1}^{(q',z)}\bigr],\qquad\mbox{for $0<i<N$},\\
& \bigl[\cellvar_i^{(q,s)}\land\cellvar_j^z\imphd\cellvar_j^z\bigr],\qquad\mbox{for $0<i<N$, $j<N$, $j\ne i,\,i-1$}.
\end{align*}
\end{itemize}
%

\begin{claim}
If $\ftmpspaced$ is satisfiable in
an $\HS$-model based on a discrete linear order, then $\A$ diverges with empty input.
\end{claim}

\begin{proof}
Take any $\HS$-model $\M$ based on a discrete linear order $\T$, and suppose $\M,\auf r,r'\zu\models\ftmpspaced$ with the irreflexive semantics. Then it is not hard to show by induction on $n$ that
there exists an infinite sequence $r=r'=u_0< u_1 < u_2 <\dots < u_n<\dots$ of subsequent
points in $\T$ such that for all $n<\omega$, the interval $\auf u_{n},u_{n}\zu$ `represents' the
$n$th configuration $\CC_n$ in the infinite computation of $\A$ with empty input in the following sense:
\[
\M,\auf u_{n},u_{n}\zu\models\cellvar_i^z\quad\mbox{iff}\quad
\CC_n(i)=z,
\]
for all $i<N$ and $z\in \GA$.
\end{proof}

On the other hand, if $\A$ diverges on empty input, then take some discrete linear order $\T$ containing an infinite ascending chain $t_0< t_1 <\dots$ of subsequent points. Define an $\HS$-model
$\M=(\F_{\T},\nu)$ by taking, for all $i<N$ and $z\in \GA$,
\begin{align*}
\nu(\plane) & = \{\auf x,x\zu\mid x\mbox{ in }\T\},\\
\nu(\cellvar_i^z) & = \{\auf x,t_{n}\zu\mid x\le t_{n},\ \CC_n(i)=z,\ n<\omega\}.
\end{align*}
%
%
%
%
%
%
We claim that it is possible to evaluate the fresh auxiliary variable in each binary trick formula
so that $\M,\auf t_0,t_0\zu\models\ftmpspaced$ with the irreflexive semantics.
Indeed, for example, fix some
$(q,s)\in (Q-\{q_f\})\times(\Sigma\cup\{\pounds\})$ with $\delta_{\A}(q,s)=(q',s')$,
$z\in\Sigma\cup\{\pounds\}$, and $i,j<N$, $j\ne i$,
and consider the corresponding instance of conjunct \eqref{headstaysdiscr}:
\[
\bigl[\cellvar_i^{(q,s)}\land\cellvar_j^z\imphd\cellvar_j^z\bigr].
\]
Take
\[
\nu(\mu)= \{\auf x,y\zu\mid x\leq t_{n},\ y\geq t_{n+1},\ \CC_n(i)=(q,s),\ \CC_n(j)=z,\ n<\omega\}.
\]
Then it is straightforward to see that $\U(\cellvar_i^{(q,s)}\to\AB\mu)$ and
$\U(\cellvar_j^z\to\AB\EbB\mu)$ both hold in $\M$. We claim that $\U(\AB\AbB\mu\to\cellvar_j^z)$
%
%
holds in $\M$ as well. Indeed, suppose $\M,\auf x,y\zu\models\AB\AbB\mu$ for some $x\leq y$.
Then $y=t_{n+1}$ for some $n<\omega$ such that $\CC_n(i)=(q,s)$ and $\CC_n(j)=z$.
Thus, $\CC_{n+1}(j)=z$, and so $\M,\auf x,t_{n+1}\zu\models\cellvar_j^z$, as required.
Checking the other conjuncts in $\ftmpspaced$ is similar and left to the reader.

The case when $\T$ contains an infinite descending chain of immediate predecessor points requires `symmetrical versions' of the used formulas and is also left to the reader.

(\emph{ii}) We reduce  {\sc PSpace-bound halting} to $\Fin(<)$-satisfiability. To achieve this,
we omit the conjunct \eqref{pspacenonhalting} from $\ftmpspaced$ above
in order to force the computation to reach the halting state.
\end{proof}


\subsection{Undecidability}\label{undec}

In our undecidability proofs, we `represent' Turing machine computations
on the $\nwplane$-grid as follows.
Given any Turing machine $\A$, observe that for any computation of $\A$
in the $n$th step the head can never move further than the $n$th cell. If $\A$ starts with empty input, this
means that $\CC_n(m)=\sqcup$ for all $n<H$ and $n<m<\omega$. Because of this
we may actually assume that $\CC_n$ is not of infinite length but of \emph{finite} length $n+2$.
(Thus, $\CC_0=\bigl((q_0,\pounds),\sqcup\bigr)$ and $\A$ never
visits the last cell of any $\CC_n$, so it is always $\sqcup$.)
%
So we can place the subsequent finite configurations of the computation
on the subsequent horizontal \emph{lines} of the $\nwplane$-grid,
continuously one after another (until we reach $\mathcal{C}_{H-2}$, if $H<\omega$),
as depicted in Fig.~\ref{f:lines}.
\begin{figure}[t]
\setlength{\unitlength}{0.1cm}
\begin{center}
\begin{picture}(60,43)(-18,4)
\put(10,4){\circle*{1}}
\multiput(10,14)(10,0){2}{\circle*{1}}
\multiput(10,24)(10,0){3}{\circle*{1}}
\multiput(10,34)(10,0){4}{\circle*{1}}
\multiput(10,44)(10,0){5}{\circle*{1}}
\thicklines
\put(10,4){\line(0,1){10}}
\put(10,14){\line(1,0){10}}
\put(10,24){\line(1,-1){10}}
\put(10,24){\line(1,0){20}}
\put(10,34){\line(2,-1){20}}
\put(10,34){\line(1,0){30}}
\put(10,44){\line(3,-1){30}}
\put(10,44){\line(1,0){33}}
\put(44.5,43.9){$\ldots$}

\put(-12,13){{\small $\CC_0$ on line$_1\to$}}
\put(-12,23){{\small $\CC_1$ on line$_2\to$}}
\put(-12,33){{\small $\CC_2$ on line$_3\to$}}
\put(-12,43){{\small $\CC_3$ on line$_4\to$}}
\end{picture}
\end{center}
\caption{Placing the computation of $\A$ on the
$\nwplane$-grid.}\label{f:lines}
\label{F1}
\end{figure}

Observe also that only the active cell and its neighbours can be changed by the transition to the next
configuration, while all other cells remain the same. So instead of using the transition function $\delta_{\A}$,
we can have the same information in the form of a `triples to cells' function $\tau_{\A}$ defined as follows.
Let $\GA=\Sigma\cup\{\pounds\}\cup\bigl(Q\times(\Sigma\cup\{\pounds\})\bigr)$ and
let $\WA\subseteq \GA\times\GA\times\GA$ consist of those triples that can occur as three
subsequent cells in the continuous enumeration of the configurations of the computation, that is, let
\begin{align*}
\WA= \; & \bigl((Q^-\times\Sigma)\times\Sigma\times\Sigma\bigr)\cup
  \bigl(\Sigma\times(Q^-\times\Sigma)\times\Sigma\bigr)\cup
   \bigl(\Sigma\times\Sigma\times(Q^-\times\Sigma)\bigr)\,\cup\\
   & \ \bigl(\LEnd\times\Sigma\times\Sigma\bigr)\cup
  \bigl(\{\sqcup\}\times\LEnd\times\Sigma\bigr)\cup
   \bigl(\Sigma\times\{\sqcup\}\times\LEnd\bigr)\cup
   \bigl\{\bigl(\auf q_0,\pounds\zu,\sqcup,\pounds\bigr)\bigr\},
\end{align*}
where $Q^-=Q-\{q_f\}$ and $\LEnd=\{\pounds\}\cup\bigl(Q^-\times\{\pounds\}\bigr)$.
We define a function
$\tau_{\A}:\WA\to \GA$ by taking, for all $(x,y,z)\in \WA$,
\begin{equation}\label{tmfv}
\tau_{\A}(x,y,z)=\left\{
\begin{array}{ll}
( q',y), & \mbox{if either $x\in (Q-\{q_f\})\times(\Sigma\cup\{\pounds\})$ and $\delta_{\A}(x)=(q',\tmr)$},\\
& \hspace*{.4cm} \mbox{or $z\in (Q-\{q_f\})\times\Sigma$ and $\delta_\mathcal{A}(z)=(q',\tml)$},\\
( q',y'), & \mbox{if $y\in (Q-\{q_f\})\times\Sigma$ and $\delta_{\A}(y)=( q',y')$},\\
y' & \mbox{if $y=( q,y')$ and $\delta_{\A}(y)=( q',{\sf M})$ for ${\sf M}=\tml,\tmr$},\\
y, & \mbox{otherwise}.
\end{array}
\right.
\end{equation}
Then it is easy to see that $\tau_{\A}$ indeed determines the computation of $\A$, that is,
for all $0<n<H$, $\CC_n(n+1)=\sqcup$ and for all $m\leq n$,
\[
\CC_n(m)=\left\{
\begin{array}{ll}
\tau_\mathcal{A}\bigl(\sqcup,\CC_{n-1}(0),\CC_{n-1}(1)\bigr),
& \mbox{if $m=0$},\\[3pt]
\tau_\mathcal{A}\bigl(\CC_{n-1}(m-1),\CC_{n-1}(m),\CC_{n-1}(m+1)\bigr),
& \mbox{if $0<m< n$},\\[3pt]
\tau_\mathcal{A}\bigl(\CC_{n-1}(n-1),\sqcup,\CC_{n}(0)\bigr),
& \mbox{if $m=n$}.
\end{array}
\right.
\]

%

\begin{theorem}\label{t:undechorn}\textbf{\textup{(}$\HShd$, arbitrary semantics\textup{)}}\\
\textup{(}i\textup{)}
For any class $\CC$ of linear orders containing an infinite order,
$\CC$-satisfiability of $\HShd$-formulas is undecidable.
\textup{(}ii\textup{)}
$\Fin$-satisfiability of $\HShd$-formulas is undecidable.
\end{theorem}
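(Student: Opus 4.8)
The plan is to reduce the \textsc{Non-halting} problem for Turing machines to $\CC$-satisfiability for part~(i), and the \textsc{Halting} problem to $\Fin$-satisfiability for part~(ii). Fix a Turing machine $\A$. I would build an $\HShd$-formula $\ftm$ of the shape $\ftm=\fgrid\land\Psi_{\mathrm{TM}}\land\Psi_{\mathrm{halt}}$, where $\fgrid=\fdiag\land\Psi_{\mathrm{grid}}$, such that $\ftm$ is satisfiable in an $\HS$-model based on some infinite linear order iff $\A$ diverges with empty input. Since $\ftm$ will contain $\Psi_{\mathrm{halt}}=\bigwedge\U(\cellvar_i^{(q_f,s)}\to\bot)$ together with transition clauses that, in every model, force a fresh ``next line'' to exist whenever the current configuration is non-halting, no finite order can satisfy $\ftm$; hence $\ftm$ is $\CC$-satisfiable iff $\A$ diverges, for every $\CC$ containing an infinite order, which is~(i).

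Following items~(v)--(vi) of the methodology, rather than carving a literal $\omega\times\omega$-grid out of the $\HS$-model I would represent grid points by a single sequence of \emph{unit} intervals that enumerates the octant $\nwplane$ in the Cantor style of Figure~\ref{f:enumtwo}. The subformula $\fdiag$ uses a variable $\plane$ to mark the units, $\ndiag$ to link each unit to its successor in the enumeration order, $\diag$ to mark the units on the main diagonal, and $\wvar$ those on the left wall; diamond clauses $\U(\dots\to\langle\R\rangle\dots)$ force the sequence to start at $(0,0)$, to step one cell to the right along the current line until the diagonal is reached, and then to jump to the beginning of the next line. On top of this, $\Psi_{\mathrm{grid}}$ introduces $\rvar$- and $\uvar$-pointers linking the unit for $(i,j)$ to those for $(i{+}1,j)$ and $(i,j{+}1)$: the right-pointer coincides with the $\ndiag$-pointer off the diagonal, while the up-pointer is defined recursively and must be ``synchronised'' so that the $\ndiag$- and $\uvar$-steps commute along each line, i.e.\ so that the two ways of reaching $(i{+}1,j{+}1)$ lead to units carrying the same labels. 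Under the reflexive semantics one cannot force these units to be literally unique, so, exactly as in the modal-product undecidability proofs of~\cite{undecidability_compass_logic,Reynolds01122001,many_dimensional_modal_logics,gkwz05a}, the ``additional tricks'' are Horn clauses enforcing label-consistency (functionality up to label-equivalence) instead of uniqueness. This stays inside $\HShd$ because every universal propagation statement ``every $\bar\R$-successor of a $\lambda$-point is a $\mu$-point'' is the $\HShd$-clause $\U(\langle\R\rangle\lambda\to\mu)$, so $[\R]\lambda$ may be used freely on the right-hand side of clauses.

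Once the grid is available, I would place the computation of $\A$ on it exactly as in Section~\ref{tm}: configuration number $n$ lives on line~$n$, and the content of each cell is obtained from the contents of the three cells below it via the ``triples-to-cells'' function $\tau_{\A}$. Since $\tau_{\A}$ is a \emph{function}, this is Horn-friendly: for each $(x,y,z)\in\WA$ one takes the clause that collects, at a grid unit, the three $\cellvar$-variables reached through the $\uvar$-pointer and one right/left step and fires $\cellvar_i^{\tau_{\A}(x,y,z)}$; uniqueness of cell contents is imposed by the usual clauses $\U(\cellvar_i^z\land\cellvar_i^{z'}\to\bot)$, so no disjunction is needed. Soundness: if $\A$ diverges, one builds the explicit model on an infinite ascending chain of the chosen order, laying out the whole infinite octant and reading off $\tau_{\A}$, much as in the proof of Theorem~\ref{t:pspacecore}(i), with a symmetric construction for a descending chain. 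Completeness: any model of $\ftm$ contains the full enumerated octant, hence a non-terminating $q_f$-free computation of $\A$, so $\A$ diverges. For part~(ii) I would drop $\Psi_{\mathrm{halt}}$ and, exactly as in the proof of Theorem~\ref{t:pspacecore}(ii), use the finiteness of the order: the enumeration cannot grow forever, and the transition clauses only permit it to stop consistently once a halt-state cell appears, so $\fgrid\land\Psi_{\mathrm{TM}}$ is $\Fin$-satisfiable iff $\A$ halts.

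The main obstacle is $\fdiag$ together with $\Psi_{\mathrm{grid}}$: realising a Cantor enumeration of $\nwplane$ that is determined up to labels, with correctly synchronised up- and right-pointers, using only Horn clauses whose literals contain no box operators and no disjunction, while remaining sound under the reflexive semantics where nothing can be pinned down uniquely. In the full-Boolean or non-Horn settings one freely uses $[\R]$-literals and disjunction both to propagate constraints and to do a case analysis on the shape of the current grid point; here every such universal statement must be rerouted through a clause with $\langle\bar\R\rangle$ on the antecedent, and every case split replaced by fresh auxiliary variables. Verifying that the resulting clause set still forces a genuine grid---in particular that the up- and right-pointers commute, so that the cell-computation by $\tau_{\A}$ is well defined along every diagonal---is the technical heart of the argument.
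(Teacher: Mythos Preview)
Your overall architecture matches the paper's: reduce \textsc{Non-halting} to infinite satisfiability via a Cantor enumeration of $\nwplane$, mark units with $\plane$, set up $\uvar$-pointers, and drive the computation with the function $\tau_{\A}$. The paper first does the irreflexive case cleanly (where uniqueness constraints like $\U(\plane\to\neg\Dh\plane\land\dots)$ are satisfiable) and only then upgrades to arbitrary semantics via a specific \emph{chessboard trick}: fresh variables $\Htick,\Vtick$ partition the model into ``squares'', and a \emph{cover trick} (clauses of the form $\U(\top\to\Dv(\mpred\land\Dh\xpred\land\Dh\ypred))$ plus disjointness) expresses $\Htick\lor\neg\Htick$ Horn-style. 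All uniqueness is then replaced by square-uniqueness via explicit $\hsuc/\vsuc/\ffill$ formulas. Your ``label-consistency instead of uniqueness'' gesture is in the right spirit but underspecified; the actual mechanism is this chessboard discretisation, not merely functionality-up-to-labels.

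There is a genuine gap in your treatment of part~(ii). You propose to ``drop $\Psi_{\mathrm{halt}}$'' and rely on finiteness, citing Theorem~\ref{t:pspacecore}(ii). That worked there because the only generating clause was $\U(\cellvar_i^{(q,s)}\to\AD\overline{\cellvar}_i^{(q,s)})$, which is already guarded by $q\ne q_f$. Here, however, your $\fdiag$ (and, in the reflexive version, the cover trick) contain unguarded generators such as $\U(\plane\to\AD\plane)$ or $\U(\top\to\Dv(\dots))$, which are \emph{unsatisfiable on any finite order}. Simply removing the non-halting clause leaves a formula with no finite models at all, regardless of $\A$. The paper fixes this by introducing a fresh $\evar$ variable, replacing~\eqref{nonhalting} by clauses~\eqref{endunit}--\eqref{notfinalstate} tying $\evar$ to the halting configuration, and rewriting every generating clause to fire only at intervals that still ``see'' $\evar$ ahead (e.g.\ $\U(\langle\R\rangle\evar\land\plane\land\Htick\to\Dh(\ndiag\land\nHtick))$). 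You need the same manoeuvre: make every existential generator in $\fdiag$ and in the reflexive-semantics machinery conditional on not yet having reached the end marker.
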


\begin{proof}
(\emph{i}) We reduce  {\sc non-halting} to $\CC$-satisfiability.
We discuss only the case when $\CC$ contains some linear order $\T$ having an infinite ascending chain.
(The case when $\T$ contains an infinite descending chain requires `symmetrical versions' of the used formulas and it is left to the reader.)
%

To make the main ideas more transparent, first we assume the \emph{irreflexive semantics}
for the interval relations, and then we show how to modify the proof for \emph{arbitrary}
semantics.
Take any $\HS$-model $\M$ based on some linear order $\T$.
We begin with forcing a \emph{unique} infinite $\plane$-sequence in $\M$, using the conjunction of
\eqref{unitnotdiag} and
\begin{align}
\label{diaggenwithA}
& \plane\land\U(\plane\to\AD\plane),\\
\label{diaguniq}
& \U(\plane\to\neg\Dh\plane\land\neg\Dv\plane\land\neg\DD\plane\land\neg\OD\plane).
\end{align}
Then it is straightforward to show the following:

\begin{claim}\label{c:grid}
Let $\fdiag$ be the conjunction of \eqref{unitnotdiag}, \eqref{diaggenwithA} and \eqref{diaguniq}, and
suppose that $\M,\auf r,r'\zu\models\fdiag$. Then there is an infinite sequence
$u_0<u_1<\ldots<u_n<\ldots$
of points in $\T$ such that for all $r\leq x$ and all $r'\leq x'$,
we have $\M,\auf x,x'\zu\models\plane$ iff $x=u_n$ and $x'=u_{n+1}$ for some $n<\omega$.
%
%
\end{claim}

Next, we use this $\plane$-sequence to encode the enumeration of the $\nwplane$-grid depicted in Fig.~\ref{f:enumtwo}.
Observe that for this particular enumeration the right-neighbour of a grid-location is the next one in the enumeration. As we generated our $\plane$-sequence with \eqref{diaggenwithA}, we have access from one $\plane$-interval
to the next by the $\after$ interval relation.
So, to encode the $\nwplane$-grid, it is enough to use  `up-pointers'.
We force the proper placement of `up-pointers' in a particular way, by using
the following properties of this enumeration:
\begin{enumerate}
\item[(a.1)]
$0$ is on the diagonal, and $\mathsf{up\_neighbour\_of}(0)=1$.
\item[(a.2)]
If $n$ is on the diagonal, then $\mathsf{up\_neighbour\_of}(n)+1$ is on the diagonal, for every $n<B$.
\item[(a.3)]
If $n$ is the up-neighbour of some location, then $n$ is not on the diagonal, for every $n<B$.
\item[(a.4)]
If $n$ is not on the diagonal, then $\mathsf{up\_neighbour\_of}(n+1)=\mathsf{up\_neighbour\_of}(n)+1$,
for every $n+1<B$.
\item[(a.5)]
If $n$ is on the diagonal, then $\mathsf{up\_neighbour\_of}(n+1)=\mathsf{up\_neighbour\_of}(n)+2$,
for every $n+1<B$.
\end{enumerate}
\begin{claim}
Properties {\rm (a.1)--(a.5)} uniquely determine%
\footnote{Among those that contain the enumeration of the diagonal locations as $( 0,0),\ldots,( 1,1),\ldots,( 2,2),\ldots$}
%
the enumeration in Fig.~\ref{f:enumtwo}.
\end{claim}

\begin{proof}
We prove by induction on $n<B$ that for every $k\leq n$,
\begin{itemize}
\item[(\emph{i})]
$k=\langle x,y)$ is like it should be in Fig.~\ref{f:enumtwo}.
\item[(\emph{ii})]
$k$ is on the diagonal iff $k=( x,x)$ for some $x$.
\end{itemize}
Indeed, for $n=1$ (\emph{i}) follows from (a.1), and (b) follows from (a.3). Now suppose inductively that
(\emph{i})--(\emph{ii}) hold for all $k\leq n$ for some $0<n<B$, and let $n+1<B$. There are three cases.

If $n$ is on the diagonal, then by (ii),
$n=( x,x)$ for some $x>0$. Let \mbox{$m=( x-1,x-1)$.} Then $m<n$ by (i) and so by (ii),
$m$ is on the diagonal. So by (a.5), $n+1=\mathsf{up\_neighbour\_of}(m+1)$, proving (\emph{i}).
Now (\emph{ii}) follows from (a.3).

If $n$ is not on the diagonal and $n=( x,y)$ for some $y$ and $x<y-1$, then let $m=( x,y-1)$.
Then $m<n$ by (\emph{i}) and so by (\emph{ii}), $m$ is not on the diagonal.
So by (a.4), $n+1=\mathsf{up\_neighbour\_of}(m+1)$, proving (\emph{i}). Now (\emph{ii}) follows from (a.3).

If $n$ is not on the diagonal and $n=( y-1,y)$ for some $y$, then let $m=( y-1,y-1)$.
Then $m<n$ by (i) and so by (ii), $m$ is on the diagonal.
By (a.2), $n+1$ is on the diagonal, so it should be the next `unused' diagonal location, which is
$( y,y)$, proving both (\emph{i}) and (\emph{ii}).
\end{proof}

Next,
given a unique infinite $\plane$-sequence $\mathcal{U}=\bigl(\auf u_n,u_{n+1}\zu \mid n<\omega\bigr)$
as in Claim~\ref{c:grid} above,
we express `horizontal' and `vertical next-time' in $\M$ `\emph{with respect to}\ $\mathcal{U}$'. Given
literals $\lambda_1$ and $\lambda_2$,
let $\succh[\lambda_1,\lambda_2]$ denote the conjunction of
\begin{align}
\label{uniqhnext}
& \U(\lambda_1\to\neg\Dh\lambda_1)\land \U(\lambda_2\to\neg\Dh\lambda_2),\\
\nonumber
& \U (\lambda_1\to\Dh\lambda_2),\\
\nonumber
& \U \bigl(\lambda_1\to\Bh(\Dh\lambda_2\to\neg\BD\plane)\bigr),
\end{align}
and similarly,
let $\succv[\lambda_1,\lambda_2]$ denote the conjunction of
\begin{align}
\label{uniqvnext}
& \U(\lambda_1\to\neg\Dv\lambda_1)\land \U(\lambda_2\to\neg\Dv\lambda_2),\\
\nonumber
& \U (\lambda_1\to\Dv\lambda_2),\\
\nonumber
& \U \bigl(\lambda_1\to\Bv(\Dv\lambda_2\to\neg\Dh\plane)\bigr).
\end{align}

It is straightforward to show the following:

\begin{claim}
Suppose $\M,\auf u_m,u_n\zu\models\lambda_1$ for some $m,n<\omega$.
\begin{itemize}
\item
Suppose $\M$ satisfies $\succh[\lambda_1,\lambda_2]$. Then, for all $x$,
 $\M,\auf x,u_n\zu\models\lambda_2$ iff $x=u_{m+1}$, and
 $\M,\auf x,u_n\zu\models\lambda_1$ iff $x=u_{m}$.
 \item
Suppose $\M$ satisfies $\succv[\lambda_1,\lambda_2]$. Then, for all $y$,
 $\M,\auf u_m,y\zu\models\lambda_2$ iff $y=u_{n+1}$,
 and
 $\M,\auf u_m,y\zu\models\lambda_1$ iff $y=u_{n}$.
 \end{itemize}
\end{claim}

Now we can encode (a.1)--(a.5) as follows.
We use a propositional variable $\uvar$ to mark up-pointers,
 variables $\diag$  and $\notdiag$ to mark those respective $\plane$-points that are on the diagonal and not on
 the diagonal,
and further fresh variables $\now$, $\uvaru$, $\uvarr$, $\uvarn$
(see Fig.~\ref{f:diagtwo} for the intended placement of the variables).
\begin{figure}[ht]
\begin{center}
\setlength{\unitlength}{.045cm}
\begin{picture}(215,195)(0,3)
\thicklines
\put(35,25){\circle{4}}
\multiput(65,55)(15,15){2}{\circle{4}}
\multiput(110,100)(15,15){3}{\circle{4}}
\multiput(170,160)(15,15){3}{\circle{4}}

\put(140,70){\circle*{4}}
\put(145,69){$=\plane\land\diag$}
\put(140,55){\circle{4}}
\put(145,54){$=\plane\land\notdiag$}

\thinlines
\put(20,10){\line(1,0){185}}
\put(35,10){\line(1,1){175}}
\multiput(35,10)(15,15){12}{\line(0,1){13}}
\multiput(37,25)(15,15){11}{\line(1,0){13}}
\multiput(35,9)(15,0){12}{\line(0,1){2}}
\put(20,10){\line(0,1){185}}
\multiput(19,40)(0,15){11}{\line(1,0){2}}
\put(20,25){\circle*{2}}
\multiput(20,10)(30,30){2}{\circle*{4}}
\put(95,85){\circle*{4}}
\put(155,145){\circle*{4}}

\put(9,2){$\now$}
\put(10,23){$\uvar$}
\multiput(35,55)(15,15){2}{\circle*{2}}
\multiput(37,57)(15,15){2}{$\uvar$}
\multiput(65,100)(15,15){3}{\circle*{2}}
\multiput(67,102)(15,15){3}{$\uvar$}
\multiput(110,160)(15,15){3}{\circle*{2}}
\multiput(112,162)(15,15){3}{$\uvar$}
\multiput(213,189)(1.5,1.5){3}{\circle*{.5}}

\put(34,3){$u_0$}
\put(49,3){$u_1$}
\put(64,3){$u_2$}
\put(79,3){$u_3$}
\put(94,3){$u_4$}
\put(109,3){$u_5$}
\put(124,3){$u_6$}
\put(139,3){$u_7$}
\put(154,3){$u_8$}
\put(169,3){$u_9$}
\put(184,3){$u_{10}$}
\put(199,3){$u_{11}$}

\put(10,39){$u_2$}
\put(10,54){$u_3$}
\put(10,69){$u_4$}
\put(10,84){$u_5$}
\put(10,99){$u_6$}
\put(10,114){$u_7$}
\put(10,129){$u_8$}
\put(10,144){$u_9$}
\put(8,159){$u_{10}$}
\put(8,174){$u_{11}$}
\put(8,189){$u_{12}$}

\end{picture}
\hspace*{.5cm}
\begin{picture}(60,70)(0,-30)
\thinlines
\put(15,5){\line(1,0){35}}
\put(5,15){\line(0,1){45}}
\multiput(8,5)(2,0){3}{\circle*{.5}}
\multiput(53,5)(2,0){3}{\circle*{.5}}
\multiput(5,8)(0,2){3}{\circle*{.5}}
\multiput(5,63)(0,2){3}{\circle*{.5}}
\multiput(25,4)(15,0){2}{\line(0,1){2}}
\multiput(4,25)(0,15){3}{\line(1,0){2}}
\put(25,25){\circle*{2}}
\multiput(25,40)(15,0){2}{\circle*{2}}
\put(40,55){\circle*{2}}

\put(19,18){$\uvar$}
\put(20,33){$\uvaru$}
\put(40,58){$\uvarn$}
\put(40,33){$\uvarr$}

\put(22,-1){$u_n$}
\put(37,-1){$u_{n+1}$}
\put(-7,22){$u_m$}
\put(-15,38){$u_{m+1}$}
\put(-15,53){$u_{m+2}$}

\end{picture}

\end{center}
\caption{Encoding the $\nwplane$-grid in an $\HS$-model: version 1.\label{f:diagtwo}}
\end{figure}
%
%
%
%
%
%
Then we express (a.1) by the conjunction of
\begin{align}
\label{init1}
& \plane\land\diag\land\now,\\
& \succv[\now,\uvar],
\end{align}
(a.2) by  the conjunction of
\begin{align}
& \succv[\uvar,\uvaru],\\
& \U\Bigl(\plane\land\diag\to\Bv\bigl(\uvaru\to\Bh(\plane\to\diag)\bigr)\Bigr),
\end{align}
(a.3) by   the conjunction of
\begin{align}
& \U \bigl(\uvar\to\Bh(\plane\to\notdiag)\bigr),\\
& \U(\diag\land\notdiag\to\bot),
\end{align}
(a.4) by
\begin{equation}
 \U\Bigl(\plane\land\notdiag\to\Bv\bigl(\uvaru\to\Bh(\uvarr\to\uvar)\bigr)\Bigr),
\end{equation}
and (a.5) by the conjunction of
\begin{align}
& \succh[\uvaru,\uvarr],\\
& \succv[\uvarr,\uvarn],
\end{align}

\begin{align}
\label{lastr}
& \U\Bigl[\plane\land\diag\to\Bv\Bigl(\uvaru\to\Bh\bigl(\uvarr\to\Bv(\uvarn\to\uvar)\bigr)\Bigr)\Bigr].
\end{align}
%
%
%

It is not hard to show the following:

\begin{claim}\label{c:gridok}
Suppose $\M,\auf r,r'\zu\models\fdiag\land\fgrid$, where $\fgrid$ is the conjunction of  \eqref{init1}--\eqref{lastr}. Then $\now$, $\diag$, $\notdiag$ and $\uvar$ are properly placed
\textup{(}see Fig.~\ref{f:diagtwo}\textup{)}.
\end{claim}

Given a Turing machine $\A$, we will use the function $\tau_{\A}$ (defined in \eqref{tmfv}) to
force a diverging computation of $\A$ with empty input as follows.
We introduce (with a slight abuse of notation) a propositional variable $x$ for each $x\in\GA$. Then we formulate general constraints as
\begin{align}
\label{tmongrid}
& \U(x\to\plane),\qquad\mbox{for $x\in\GA$},\\
\label{tmuniq}
& \U(x\to\neg y),\qquad\mbox{for $x\ne y,\ x,y\in\GA$},
\end{align}
and then force the computation steps  by the conjunction of
\begin{align}
& \AD(q_0,\pounds),\\
\label{tmdiag}
& \U(\diag\to\sqcup),\\
\label{transH}
& \U\Bigl(y\land\AD z\land\AbD x\to \Bv\bigl(\uvar\to\Bh(\plane\to \tau_\mathcal{A}(x,y,z))\bigr)\Bigr),\qquad\mbox{for $(x,y,z)\in \WA$}.
\end{align}
Finally, we force non-halting with
\begin{equation}
\label{nonhalting}
\U \bigl((q_f,s)\to\bot\bigr),\qquad\mbox{for $s\in\Sigma\cup\{\pounds\}$}.
\end{equation}
Using Claims~\ref{c:grid}--\ref{c:gridok}, now it is straightforward to prove the following:

\begin{claim}\label{c:tm}
Let $\ftm$ be the conjunction of $\fdiag$, $\fgrid$ and \eqref{tmongrid}--\eqref{nonhalting}. If $\ftm$ is satisfiable
in an $\HS$-model, then $\A$ diverges with empty input.
\end{claim}

On the other hand, Fig.~\ref{f:diagtwo}  shows how to satisfy $\fdiag\land\fgrid$
 (using the irreflexive semantics)  in an $\HS$-model
that is based on some linear order $\T$ having an infinite ascending chain
\mbox{$u_0<u_1<\ldots$}. If $\A$ diverges with empty input, then we can add, for all $x\in\GA$,
\begin{multline}
\label{tmsound}
\nu(x)=\{\auf u_{n-1},u_{n}\zu\mid \mbox{$n>0$, $\CC_j(i)=x$}\\
\mbox{and the $n$th point in the grid-enumeration is $(i,j+1)$}\}
\end{multline}
to obtain an $\HS$-model $\M=(\F_{\T},\nu)$ satisfying \eqref{tmongrid}--\eqref{nonhalting} as well.


\smallskip

Next, we show how to modify the formula $\ftm$ above in order to be satisfiable with arbitrary semantics of the interval relations. `Uniqueness forcing' constraints like \eqref{diaguniq}, \eqref{uniqhnext}, and  \eqref{uniqvnext} above are clearly not satisfiable with the reflexive semantics.
Expanding on an idea of \cite{Spaan93},
\cite{Reynolds01122001,many_dimensional_modal_logics,gkwz05a}, we use the following \emph{chessboard trick} to solve this problem and
kind of `discretise' the $\HS$-model.
Take two fresh propositional variables $\Htick$ and $\Vtick$, and make the $\HS$-model $\M$ `chessboard-like' by the formula
\begin{equation}
\label{stripe}
\U (\Htick\to\Bv\Htick)\land \U(\Vtick\to\Bh\Vtick).
\end{equation}
However, to make it a real chessboard, we also need to have `cover' by these variables and their negations, that is,  for every interval in $\M$,
$\Htick\lor\neg\Htick$ and $\Vtick\lor\neg\Vtick$ should hold.
In order to express these by $\HShd$-formulas, we use the following
\emph{cover trick} of \cite[p.~11]{DBLP:conf/time/ArtaleKLWZ07}.
%
%
For any literals $\lambda$ and $\overline{\lambda}$, let
$\hcoverf[\lambda,\overline{\lambda}]$ denote the conjunction of
\begin{align}
\label{coverd}
& \U\bigl(\top\to \Dv(\mpred\land\Dh\xpred\land\Dh\ypred)\bigr),\\
\nonumber
& \U\bigl(\xpred\land\ypred\to\bot\bigr),\\
\nonumber
& \U\Bigl(\Dv\bigl(\mpred\land\Dh(\ypred\land\Dh\xpred)\bigr)\to\lambda\Bigr),\\
\nonumber
& \U\Bigl(\Dv\bigl(\mpred\land\Dh(\xpred\land\Dh\ypred)\bigr)\to\overline{\lambda}\Bigr),\\
\nonumber
& \U\bigl(\lambda\land\overline{\lambda}\to\bot\bigr),
\end{align}
where $\mpred$, $\xpred$, and $\ypred$ are fresh variables.
%

\emph{Soundness\/}:
Observe that $\hcoverf[\lambda,\overline{\lambda}]$ forces the model to be infinite. Also, it always implies that both $\lambda$ and $\overline{\lambda}$ are vertically stable,
that is,
\[
\U \bigl(\lambda\to\Bv\lambda\bigr)\land\U\bigl(\overline{\lambda}\to\Bv\overline{\lambda}\bigr).
\]
holds.
We can define $\vcoverf[\lambda,\overline{\lambda}]$ similarly, for horizontally stable $\lambda$ and $\overline{\lambda}$.
Now we take fresh variables $\nHtick$ and $\nVtick$, and define
$\cbr$ by taking
\begin{equation}
\cbr:=\ \
\hcoverf[\Htick,\nHtick]\land\vcoverf[\Vtick,\nVtick].
\end{equation}
%
Then \eqref{stripe} and the similar formula for $\nHtick$ and $\nVtick$ follow.
%
%
%
Suppose that $\M$ is an $\HS$-model based on some linear order $\T=(T,\le)$ satisfying $\cbr$.
We define two new binary relations
$\RhM$ and $\RvM$ on $T$ by taking, for all $u,v\in T$,
\begin{align*}
u \RhM v\quad\mbox{iff}\quad & \exists
z\ \Bigl(u\leq z\leq v\mbox{ and }\\
& \forall y\ \bigl(\mbox{if $\auf z,y\zu$ is in $\M$, then } \bigl(\M,\auf u,y\zu\models \Htick\ \leftrightarrow\ \M,\auf z,y\zu\models\neg \Htick)\bigr)\Bigr);\\
%
u \RvM v\quad\mbox{iff}\quad & \exists
z\ \Bigl(u\leq z\leq v\mbox{ and }\\
& \forall x\ \bigl(\mbox{if $\auf x,u\zu$ is in $\M$, then } \bigl(\M,\auf x,u\zu\models \Vtick\ \leftrightarrow\ \M,\auf x,z\zu\models\neg \Vtick)\bigr)\Bigr).
\end{align*}
Then it is straightforward to check that both $\RhM$ and $\RvM$ imply $\leq$, and both are transitive
and irreflexive. (They are not necessarily linear orders.)
%
We call a non-empty subset $I\subseteq T$ a \emph{horizontal} $\M$-\emph{interval} (shortly, an \emph{h-interval\/}), if $I$ is maximal with the
following two properties:
\begin{itemize}
\item
for all $x,y,z\in T$, if $x\leq y\leq z$ and $x,z\in I$ then $y\in I$;
\item
either $\M,\auf x,y\zu\models\Htick$,
for all $x\in I$ and $y\in T$ such that $\auf x,y\zu$ is in $\M$,
or $\M,\auf x,y\zu\models\neg\Htick$,
for all $x\in I$ and $y\in T$ such that $\auf x,y\zu$ is in $\M$.
\end{itemize}
%
For any $x\in T$, let $\hinte(x)$ denote the unique h-interval $I$ with
$x\in I$.
We define \emph{v-intervals} and $\vinte(x)$ similarly, using $\RvM$.
A set $S$ of the form $S=I\times J$ for some h-interval $I$ and v-interval $J$ is called
a \emph{square}.
For any $\auf x,y\zu$ in $\M$, let $\sqe(x,y)$ denote the unique square $S$ with
$\auf x,y\zu\in S$.

Now we define horizontal and vertical successor squares.
Given propositional variables $\pred$ and $\qpred$,
let $\hsuc[\pred,\qpred]$ be the conjunction of
\begin{align}
\label{hsuc1}
& \U\bigl(\pred\land\Htick\to\Dh(\qpred\land\nHtick)\bigr),\\
\nonumber
& \U(\pred\land\overline{\pred}\to\bot),\\
\label{hsuc3}
& \U(\pred\land\Htick\to\Bh\predp),\\
\nonumber
& \U\bigl(\predp\land\nHtick\to (\overline{\pred}\land\Bh\overline{\pred})\bigr),\\
%
\nonumber
& \U(\qpred\land\overline{\qpred}\to\bot),\\
\nonumber
& \U(\qpred\land\nHtick\to\Bh\qpredp),\\
\nonumber
& \U\bigl(\qpredp\land\Htick\to (\overline{\qpred}\land\Bh\overline{\qpred})\bigr),\\
\label{hsuc8}
& \U\bigl(\predp\land\Htick\land\Dh(\qpred\land\nHtick)\to\pred\bigr),\\
\label{hsuc9}
& \U(\predp\land\nHtick\land\Dh\qpred\to\qpred),\\
\label{hsucten}
& \U(\qpred\land \pred\to\bot),\\
\label{onemore}
& \U(\qpred\land\Dh\pred\to\bot)
\end{align}
plus similar formulas for the `$\pred\land\nHtick$' case (here $\overline{\pred}$, $\overline{\qpred}$, $\predp$ and $\qpredp$ are
fresh variables). One can define $\vsuc[\pred,\qpred]$ similarly.
%
Finally, we let
\[
\ffill[\pred]=\ \ \hsuc[\pred_l,\pred]\land\hsuc[\pred,\pred_r]\land\vsuc[\pred_d,\pred]\land
\vsuc[\pred,\pred_u],
\]
where $\pred_l$, $\pred_r$, $\pred_d$, and $\pred_u$ are fresh variables.

\begin{claim}\label{c:hsuc}
Suppose $\M$ satisfies $\cbr$ and $\hsuc[\pred,\qpred]$.
Then the following hold, for all $x$, $y$, $z$, $w$:
%
\begin{itemize}
\item[{\rm (\emph{i})}]
If $\M,\auf x,y\zu\models\pred$, then
there is $v$ such that $x\RhM v$ and $\M,\auf v,y\zu\models\qpred$.
\item[{\rm (\emph{ii})}]
If $\M,\auf x,y\zu\models\pred$ and $x\RhM z$, then $\M,\auf z,y\zu\not\models\pred$.
\item[{\rm (\emph{iii})}]
If $\M,\auf x,y\zu\models\qpred$ and $x\RhM z$, then $\M,\auf z,y\zu\not\models\qpred$.
\item[{\rm (\emph{iv})}]
If $\M,\auf x,y\zu\models\pred$, $z\in\hinte(x)$, $x\leq z$, then $\M,\auf z,y\zu\models\pred$.
\item[{\rm (\emph{v})}]
If $\M,\auf x,y\zu\models\pred$, $\M,\auf z,y\zu\models\qpred$,
$w\in\hinte(z)$ and $w\leq z$, then $\M,\auf w,y\zu\models\qpred$.
\item[{\rm (\emph{vi})}]
If $\M,\auf x,y\zu\models\pred$ and $\M,\auf z,y\zu\models\qpred$, then $x\RhM z$ and there is no $t$ with
$x\RhM t\RhM z$.
\end{itemize}
Similar statements hold if $\M$ satisfies $\vsuc[\pred,\qpred]$. Therefore,
\begin{itemize}
\item[{\rm (\emph{vii})}]
if $\M$ satisfies $\ffill[\pred]$ and $\M,\auf x,y\zu\models\pred$ then
$\M,\auf x',y'\zu\models\pred$ for all $\auf x',y'\zu\in\sqe(x,y)$.
\end{itemize}
\end{claim}

\begin{proof}
It is mostly straightforward. We show the trickiest case,
(\emph{vi}) We have $x\leq z$ by \eqref{hsucten}.
Suppose, say, that $\M,\auf x,y\zu\models\Htick$. By (\emph{i}), there is
$v$ such that $x\RhM v$ and $\M,\auf v,y\zu\models\qpred$, and so
$\M,\auf v,y\zu\models\nHtick$. Then $z\in\hinte(v)$ follows by (iii), and so $x\RhM z$.
Now let $t$ be such that $x\leq t\leq z$.
If $\M,\auf t,y\zu\models\Htick$, then
$\M,\auf t,y\zu\models\pred$ by \eqref{hsuc3} and \eqref{hsuc8}, and so $t\in\hinte(x)$ by (\emph{ii}).
If $\M,\auf t,y\zu\models\nHtick$, then
$\M,\auf t,y\zu\models\qpred$ by \eqref{hsuc3} and \eqref{hsuc9}, and so $t\in\hinte(z)$ by (\emph{iii}).
\end{proof}

\emph{Soundness\/}:
If $\M$ satisfies $\ffill[\pred]$ then $\pred$ must be both `horizontally and vertically square-unique' in the following sense: if $\M,\auf x,y\zu\models\pred$ and $\M,\auf x',y'\zu\models\pred$ for some $x\RhM x'$ and $y\RvM y'$, then $\sqe(x,y)=\sqe(x',y')$ must follow.


Now, using this `chessboard trick', we can modify the formula $\ftm$ above
for any semantical choice of the interval relations.
To begin with,
 instead of using $\fdiag$, we force a unique infinite sequence of $\plane$-squares by introducing a fresh variable $\ndiag$, and taking the conjunction $\fdiagr$ of the following formulas:
\begin{align}
\nonumber
& \cbr\land\ffill[\plane]\land\ffill[\ndiag],\\
\label{gridh}
& \plane\land\hsuc[\plane,\ndiag],\\
\nonumber
& \vsuc[\ndiag,\plane].
\end{align}
Then we have the following generalisation of Claim~\ref{c:grid}:

\begin{claim}\label{c:gridall}
Suppose $\M,\auf r,r'\zu\models\fdiagr$.
Then there exist infinite sequences
\mbox{$( x_n \mid n<\omega)$} and $( y_n \mid n<\omega)$
of points in $\T$ such that the following hold:
%
\begin{itemize}
\item[{\rm (\emph{i})}]
$r=x_0\RhM x_1\RhM \ldots \RhM a_n\RhM \ldots$ and $r'=y_0\RvM y_1\RvM \ldots \RvM y_n\RvM \ldots$.
\item[{\rm (\emph{ii})}]
There is no $x$ with $x_n\RhM x\RhM x_{n+1}$ and
there is no $y$ with $y_n\RvM y\RhM y_{n+1}$, for any $n<\omega$.
\item[{\rm (\emph{iii})}]
For all $x,y$,
$\M,\auf x,y\zu\models\plane$ iff $\auf x,y\zu\in\sqe(x_n,y_n)$ for some $n<\omega$.
\end{itemize}

\end{claim}

In order to show the soundness of $\fdiagr$, let $\T=(T,\le)$ be a linear order containing an infinite ascending chain $u_0<u_1<\ldots$.
%

\begin{claim}\label{c:soundgridr}
$\fdiagr$ is satisfiable in an \HS-model based on $\T$ under arbitrary semantics.
\end{claim}

\begin{proof}
For each $n<\omega$, we let
\[
U_n=\{x\in T \mid u_n\leq x < u_{n+1}\}.
\]
It is straightforward to check that the following $\HS$-model $\M=(\F_{\T},\nu)$ satisfies $\hcoverf[\Htick,\nHtick]$:
\begin{align*}
\nu(\Htick) = & \{\auf x,y\zu\in \int(\mathfrak T) \mid x\in U_n,\ n\mbox{ is even}\},\\
\nu(\nHtick) =&  \{\auf x,y\zu\in \int(\mathfrak T) \mid x\in U_n,\ n\mbox{ is odd}\},\\
\nu({\sf M}_{\Htick})= & \{\auf x,y\zu\in \int(\mathfrak T) \mid x\in U_m,\ y\in U_n,\ \mbox{both $m,n$ are even, or both $m,n$ are odd}\},\\
 \nu({\sf X}_{\Htick})= &\{\auf x,y\zu\in \int(\mathfrak T) \mid x\in U_n,\ y\in U_{n+1}\cup U_{n+2},\ n\mbox{ is even}\},\\
\nu({\sf Y}_{\Htick})= &\{\auf x,y\zu\in \int(\mathfrak T) \mid x\in U_n,\ y\in U_{n+1}\cup U_{n+2},\ n\mbox{ is odd}\}.
\end{align*}
$\vcoverf[\Vtick,\nVtick]$ can be satisfied similarly. The rest is obvious.
\end{proof}

Next, consider the formula $\fgrid$ defined in Claim~\ref{c:gridok}.
Let $\fgridr$ be obtained from $\fgrid$ by replacing each occurrence of  $\succh$ by $\hsuc$
and each occurrence of  $\succv$ by $\vsuc$, and adding the conjuncts
%
%
$\ffill[\pred]$ for $\pred\in\{\now,\plane,\diag,\notdiag,\uvar,\uvaru,\uvarr,\uvarn\}$.
%
Using Claim~\ref{c:gridall},
it is straightforward to show that we have the analogue of Claim~\ref{c:gridok} for squares.

Finally, given a Turing machine $\A$, let $\ftmr$ be the conjunction of
of $\fdiagr$, $\fgridr$, \eqref{tmongrid}--\eqref{nonhalting},
and $\ffill[x]$ for each $x\in \GA$. Then we have:

\begin{claim}\label{c:tmr}
If $\ftmr$ is satisfiable in an $\HS$-model, then
$\A$ diverges with empty input.
\end{claim}

On the other hand, using Fig.~\ref{f:diagtwo}, Claim~\ref{c:soundgridr} and \eqref{tmsound} it is easy to show
how to satisfy $\ftmr$ in an $\HS$-model that is based on some linear order $\T$ having an infinite ascending chain \mbox{$u_0<u_1<\ldots$}, regardless which semantics of the interval relations is considered.


\smallskip

(\emph{ii})  We reduce  `halting' to $\Fin$-satisfiability. We show how to modify the formula
$\ftmr$ above to achieve this. To begin with,
`generating' conjuncts like \eqref{coverd} and its `vertical' version in $\cbr$,
and  \eqref{hsuc1} and its $\nHtick$ version in $\hsuc[\plane,\ndiag]$ of \eqref{gridh}
are not satisfiable in $\HS$-models based on finite orders. In order to obtain a finitely satisfiable version,
we introduce a fresh variable $\evar$, replace \eqref{nonhalting} with the conjunction of
\begin{align}
\label{endunit}
& \U(\evar\to\plane),\\
\label{notfinalstate}
& \U(\evar\land x\to\bot),\qquad\mbox{for $x\in \Sigma\cup\{\pounds\}\cup\bigl(Q^-\times(\Sigma\cup\{\pounds\})\bigr)$},
\end{align}
then replace conjunct \eqref{coverd} in $\hcoverf[\lambda,\overline{\lambda}]$
with the conjunction of
\[
 \U\bigl(\auf\R\zu\evar\to \Dv(\mpred\land\Dh\xpred\land\Dh\ypred)\bigr),
 \qquad\mbox{for $\R\in\{\after,\beginsb,\duringb,\later,\overlaps\}$},
 \]
(and similarly in $\vcoverf[\lambda,\overline{\lambda}]$), and
then replace conjunct \eqref{hsuc1} in
$\hsuc[\plane,\ndiag]$ with the conjunction of
%
\[
 \U\bigl(\auf\R\zu\evar\land\plane\land\Htick\to\Dh(\ndiag\land\nHtick)\bigr),
  \qquad\mbox{for $\R\in\{\after,\beginsb,\duringb,\later,\overlaps\}$}
\]
(and do similarly for the `$\nHtick$-version', and for the `generating' conjuncts in $\vsuc[\ndiag,\plane]$).
\end{proof}


\begin{theorem}\label{t:undeccoreirrefl}
\textbf{\textup{(}$\HSc$, irreflexive semantics\textup{)}}\\
\textup{(}i\textup{)}
For any class $\CC$ of linear orders containing an infinite order,
$\CC(<)$-satisfiability of $\HSc$-formulas is undecidable.
\textup{(}ii\textup{)}
$\Fin(<)$-satisfiability of $\HSc$-formulas is undecidable.
\end{theorem}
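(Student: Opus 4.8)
The plan is to reduce {\sc non-halting} to $\CC(<)$-satisfiability of $\HSc$-formulas for part (i), and {\sc halting} to $\Fin(<)$-satisfiability for part (ii), following the same overall blueprint as the proof of Theorem~\ref{t:undechorn}: given a Turing machine $\A$, lay its computation out on the $\nwplane$-grid one configuration per line (Fig.~\ref{f:lines}) and transfer cell contents between consecutive configurations by means of the ``triples-to-cells'' function $\tau_{\A}$ of \eqref{tmfv} and the set $\WA$. Concretely I would (a) force a \emph{unique} infinite $\plane$-sequence of unit intervals linked by $\after$; (b) encode the Cantor-style enumeration of the $\nwplane$-grid of Fig.~\ref{f:enumtwo} by placing the up-pointers $\uvar,\uvaru,\uvarr,\uvarn$ as dictated by the properties (a.1)--(a.5) and by $\succh,\succv$ (Fig.~\ref{f:diagtwo}); and (c) impose $\tau_{\A}$ along the grid via the analogue of \eqref{transH} and forbid the halting state via the analogue of \eqref{nonhalting}. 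The entire novelty is that all of this must be expressed using only \emph{core} clauses, i.e.\ clauses of the shape $\U(\lambda_1\to\lambda_2)$ and $\U(\lambda_1\land\lambda_2\to\bot)$.

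A good deal of the $\HShd$ machinery is in fact already core, and I would start by isolating what is not. The formula $\fdiag$ (the conjunction of \eqref{unitnotdiag}, \eqref{diaggenwithA} and \eqref{diaguniq}) is, after desugaring, a conjunction of core clauses: each ``$\neg\D\psi$'' on the right becomes a clause $\U(\,\cdot\,\land\D\psi\to\bot)$, while $\plane$ together with $\U(\plane\to\AD\plane)$ is an initial condition and a clause of the form $\U(\lambda_1\to\lambda_2)$; so Claim~\ref{c:grid} transfers verbatim. The uniqueness conjuncts of $\succh,\succv$ (e.g.\ $\U(\lambda_1\to\neg\Dh\lambda_1)$) are core as well, and their nested-box conjuncts can be flattened by the fresh-variable abbreviations of Section~\ref{defs}. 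The TM constraints $\U(x\to\plane)$, $\U(x\to\neg y)$ (for $x\neq y$ in $\GA$), $\U(\diag\to\sqcup)$, $\U((q_f,s)\to\bot)$ and the initial condition $\AD(q_0,\pounds)$ are already core or initial. What is \emph{not} core are the clauses with two or more positive atoms in the antecedent: the grid-building clauses of the form $\U(\plane\land\diag\to\cdots)$ and $\U(\plane\land\notdiag\to\cdots)$, and, crucially, the transition clause \eqref{transH}, whose left-hand side $y\land\AD z\land\AbD x$ has three positive conjuncts. (The restriction to the irreflexive semantics is essential: the uniqueness-forcing in $\fdiag$ is unsatisfiable under the reflexive semantics, and the cover/chessboard trick that repairs this for $\HShd$ in the proof of Theorem~\ref{t:undechorn} relies on genuinely non-core clauses.)

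To express these remaining Horn constraints inside $\HSc$ I would use the binary implication trick $\bigl[\lambda_1\land\lambda_2\imph\lambda\bigr]$ of Claim~\ref{c:bicomp}, which simulates the conjunctive implication $\lambda_1\land\lambda_2\to\lambda$ at the price of making $\lambda$ \emph{horizontally stable} (dependent on the right endpoint only). A three-atom antecedent such as that of \eqref{transH} is handled by chaining: first derive a fresh ``pair'' marker for ``$\AD z$ and $\AbD x$'' by one application of the trick, then apply the trick again to this marker together with $y$; the nested $\Bv\Bh$-prefix on the right of \eqref{transH} is split off by the Section~\ref{defs} equivalences (including $\U(\auf\R\zu\lambda\land\psi\to\lambda')\equiv\U(\lambda\to[\bar\R](\psi\to\lambda'))$), leaving only core clauses and further instances of the trick. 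The main obstacle is precisely the side-effect of Claim~\ref{c:bicomp}: every atom produced by the trick becomes horizontally stable, so the encoding must be arranged --- in the spirit of the solution (\emph{v})--(\emph{vi}) of Section~\ref{lb} and of~\cite{HalpernS91} --- so that horizontal stability is harmless, i.e.\ every atom fed to a trick instance is genuinely intended to depend only on its second coordinate, with the unique $\plane$-sequence and the grid it induces carrying the ``interesting'' information exactly where spreading it horizontally does no damage. Re-proving the analogues of Claims~\ref{c:grid}--\ref{c:gridok} and of Claim~\ref{c:tm} under this discipline, and then establishing \emph{soundness} --- that over a linear order with an infinite ascending (resp.\ descending) chain all the fresh variables of all trick instances can be valued consistently, exactly as was done for Fig.~\ref{f:pspacesound} in the proof of Theorem~\ref{t:pspacecore}, with the grid-content variables placed as in \eqref{tmsound} --- is where the bulk of the work lies.

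For part (ii), over finite orders I would reduce {\sc halting} instead: drop the non-halting clause \eqref{nonhalting}, and replace every ``generating'' conjunct that forces the $\plane$-sequence to continue (such as $\U(\plane\to\AD\plane)$ in \eqref{diaggenwithA}, and the generating parts of $\succh,\succv$) by finitely-satisfiable variants guarded by a fresh ``end'' variable $\evar$, exactly as at the end of the proof of Theorem~\ref{t:undechorn} (cf.\ \eqref{endunit} and the $\auf\R\zu\evar$-guards there). Then a model based on a finite order can exist only if the $\plane$-sequence, hence the represented computation, terminates, while the clauses still demand a $\tau_{\A}$-successor for every non-final cell, so termination forces the halting state $q_f$ to appear; conversely, if $\A$ halts, its finite computation yields a finite model. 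The descending-chain case is symmetric to the ascending one.
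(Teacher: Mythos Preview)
Your overall plan is sound and you have correctly identified the binary implication trick $[\lambda_1\land\lambda_2\imph\lambda]$ of Claim~\ref{c:bicomp} as the key device, and that $\fdiag$ is already core. However, your commitment to the grid encoding via properties (a.1)--(a.5) and the formulas $\succh,\succv$ of Fig.~\ref{f:diagtwo} is precisely what the paper abandons. The paper states explicitly that $\fgrid$ ``contains several seemingly `non-$\HSc$-able' conjuncts'' and instead forces the $\uvar$-pointers \emph{in a different way}, via a new list of properties (b.1)--(b.7) based on line-, wall- and diagonal-markers $\lvar,\wvar,\diag$ (Fig.~\ref{f:diag}). The point of this alternative encoding is that almost all of it is already core (formulas \eqref{initone}--\eqref{upatwall}), and the \emph{only} place where the trick is needed is (b.7), expressed as $\bigl[\DbD\lvar\land\plane\imph\AD\AbD\uvar\bigr]$, whose conclusion $\AD\AbD\uvar$ is naturally horizontally stable since $\AD\psi$ at $\auf x,y\zu$ depends only on $y$. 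By contrast, the (a.1)--(a.5) clauses have antecedents $\plane\land\diag$ or $\plane\land\notdiag$ and nested $\Bv/\Bh$-consequents whose truth genuinely depends on the \emph{left} endpoint; the horizontal smearing imposed by the trick would corrupt them. You acknowledge that ``the encoding must be arranged'' so that horizontal stability is harmless, but give no indication how to do this for (a.1)--(a.5), and the paper's answer is to change the encoding rather than patch it.

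For the TM transition clause \eqref{transH} your chaining idea is essentially right and matches the paper in spirit; the paper makes it concrete by introducing fresh variables $(y,z)$, $\overline{(y,z)}$, $(x,y,z)$, $\overline{(x,y,z)}$ and using \emph{both} the horizontal trick $\imph$ and a vertical counterpart $\impv$ (first $\bigl[\AbD y\land z\impv\overline{(y,z)}\bigr]$, then $\bigl[\AD(y,z)\land x\imph\overline{(x,y,z)}\bigr]$; see Fig.~\ref{f:Tm}). For part~(ii), your reference to the $\auf\R\zu\evar$-guards at the end of the proof of Theorem~\ref{t:undechorn} is off: those guarded clauses have three-atom antecedents and are not core. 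The paper instead replaces the generating conjunct \eqref{diaggenwithA} by $\plane\land\bigl[\LD\evar\land\plane\imph\AD\plane\bigr]$, using the trick once more, and replaces \eqref{nonhalting} by \eqref{endunit} and \eqref{notfinalstate}.
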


\begin{proof}
(\emph{i})
We reduce  {\sc non-halting} to $\CC(<)$-satisfiability.
Given an $\HS$-model $\M$ based on some linear order $\T$, observe that the formula $\fdiag$ (defined in
Claim~\ref{c:grid}) that forces a unique infinite $\plane$-sequence
$\bigl(\auf u_n,u_{n+1}\zu \mid n<\omega\bigr)$ in $\M$ is within $\HSc$.
However, the formula $\fgrid$
(defined in Claim~\ref{c:gridok})
we used in the proof of Theorem~\ref{t:undechorn} to encode the $\nwplane$-grid in $\M$  with the help of properly placed $\uvar$-pointers  contains several seemingly `non-$\HSc$-able' conjuncts. In order to fix this,
below we will force the proper placement of $\uvar$-pointers in a different way.

Consider again the enumeration of $\nwplane$ in Fig.~\ref{f:enumtwo}.
Observe that the enumerated points can be organized in (horizontal) \emph{lines}:
$\mathsf{line}_1=( 1,2)$,
$\mathsf{line}_2=(3,4,5)$,
$\mathsf{line}_3=(6,7,8,9)$,
 and so on.
%
 %
 Consider the following properties of this enumeration (different from the ones listed as (a.1)--(a.5) in
 the proof of Theorem~\ref{t:undechorn} above):
\begin{enumerate}
\item[(b.1)]
$\mathsf{start\_of}(\mathsf{line}_{1})=1$,
and $\mathsf{up\_neighbour\_of}(0)=1$.
\item[(b.2)]
$\mathsf{start\_of}(\mathsf{line}_{i+1})=\mathsf{end\_of}(\mathsf{line}_{i})+1$, for all $i>0$.
\item[(b.3)]
Every line starts with some $n$ on the wall and ends with some $m$ on the diagonal.
\item[(b.4)]
If $n$ is in $\mathsf{line}_i$, then $\mathsf{up\_neighbour\_of}(n)$ is in $\mathsf{line}_{i+1}$, for all $i$.
\item[(b.5)]
For every $m,n$, if $m<n$ then $\mathsf{up\_neighbour\_of}(m)<\mathsf{up\_neighbour\_of}(n)$.
\item[(b.6)]
For every $n>0$ on the wall, there is $m$ with $\mathsf{up\_neighbour\_of}(m)=n$.
\item[(b.7)]
For every $n$, if $n$ is neither on the wall nor on the diagonal, then
there is $m$ with $\mathsf{up\_neighbour\_of}(m)=n$.
\end{enumerate}
Observe that (b.1) and (b.2) imply that every $n$ in the enumeration belongs to $\mathsf{line}_i$ for
some $i$. Also, by (b.2) and (b.3), for every $i$
there is a unique $m$ in $\mathsf{line}_i$ that is on
the diagonal (its last according to the enumeration). As $\mathsf{up\_neighbour\_of}$ is an injective
function, by (b.4) we have that
\[
\mbox{number of points in $\mathsf{line}_i\leq$ number of points in $\mathsf{line}_{i+1}$}.
\]
Further, by (b.4), (b.6) and (b.7),
\[
\mbox{number of non-diagonal points in $\mathsf{line}_{i+1}\leq$ number of points in $\mathsf{line}_{i}$}.
\]
Therefore,
%
\[
\mathsf{length\_of}(\mathsf{line}_{i+1})=\mathsf{length\_of}(\mathsf{line}_i)+1\mbox{ for all $i$.}
\]
%
Finally, by (b.4) and (b.5) we obtain that $\mathsf{line}_i$ is what it should be in Fig.~\ref{f:enumtwo},
and so we have:

\begin{claim}\label{c:nwenum}
Properties {\rm (b.1)--(b.7)} uniquely determine%
\footnote{among those that contain the enumeration of the diagonal locations as $(0,0),\ldots,( 1,1),\ldots,(2,2),\ldots$}
%
the enumeration in Fig.~\ref{f:enumtwo}.
\end{claim}

Given a unique infinite $\plane$-sequence $\mathcal{U}=\bigl(\auf u_n,u_{n+1}\zu \mid n<\omega\bigr)$
in $\M$ as in Claim~\ref{c:grid} above,
we now encode (b.1)--(b.7) as follows.
In addition to $\uvar$, $\diag$, and $\now$,
we will also use a variable $\wvar$ to mark those $\plane$-points that are on the wall,
and a variable $\lvar$ to mark lines in the following sense:
$\M,\auf x,y\zu\models\lvar$ iff $x=u_m$, $y=u_n$ and $(m+1,\ldots,n)$ is a line
(see Fig.~\ref{f:diag} for the intended placement of the variables).
\begin{figure}[ht]
\begin{center}
\setlength{\unitlength}{.05cm}
\begin{picture}(215,195)(0,5)
\put(150,70){\circle*{4}}
\put(155,69){$=\plane$}

\put(20,10){\line(1,0){185}}
\put(35,10){\line(1,1){175}}
\multiput(35,10)(15,15){12}{\line(0,1){15}}
\multiput(35,25)(15,15){11}{\line(1,0){15}}
\multiput(35,9)(15,0){12}{\line(0,1){2}}
\put(20,10){\line(0,1){185}}
\multiput(19,40)(0,15){11}{\line(1,0){2}}
\put(20,25){\circle*{2}}
\multiput(20,10)(15,15){13}{\circle*{4}}

\put(5,10){$\diag$}
\put(5,4){$\now$}
\put(11,23){$\uvar$}
\multiput(35,55)(15,15){2}{\circle*{2}}
\multiput(37,57)(15,15){2}{$\uvar$}
\put(35,40){\circle*{2}}
\put(29,42.5){$\lvar$}
\multiput(65,100)(15,15){3}{\circle*{2}}
\multiput(67,102)(15,15){3}{$\uvar$}
\put(65,85){\circle*{2}}
\put(59,87){$\lvar$}
\put(36,28){$\wvar$}
\put(51,43){$\diag$}
\put(66,58){$\wvar$}
\put(96,88){$\diag$}
\put(111,103){$\wvar$}
\put(156,148){$\diag$}
\put(171,163){$\wvar$}
\put(110,145){\circle*{2}}
\put(105,147){$\lvar$}
\multiput(110,160)(15,15){3}{\circle*{2}}
\multiput(112,162)(15,15){3}{$\uvar$}
\multiput(213,189)(1.5,1.5){3}{\circle*{.5}}

\put(34,3){$u_0$}
\put(49,3){$u_1$}
\put(64,3){$u_2$}
\put(79,3){$u_3$}
\put(94,3){$u_4$}
\put(109,3){$u_5$}
\put(124,3){$u_6$}
\put(139,3){$u_7$}
\put(154,3){$u_8$}
\put(169,3){$u_9$}
\put(184,3){$u_{10}$}
\put(199,3){$u_{11}$}

\put(10,39){$u_2$}
\put(10,54){$u_3$}
\put(10,69){$u_4$}
\put(10,84){$u_5$}
\put(10,99){$u_6$}
\put(10,114){$u_7$}
\put(10,129){$u_8$}
\put(10,144){$u_9$}
\put(8,159){$u_{10}$}
\put(8,174){$u_{11}$}
\put(8,189){$u_{12}$}

\end{picture}
\end{center}
\caption{Encoding the $\nwplane$-grid in an $\HS$-model: version 2.\label{f:diag}}
\end{figure}

To begin with, we express that $\mathsf{up\_neighbour\_of}$ is an injective function by
%
\begin{equation}
\label{initone}
\U(\uvar\to\neg\Dh\uvar\land\neg\Dv\uvar),
\end{equation}
then we express (b.1) by the conjunction of
\begin{align}
& \now\land\AD\lvar,\\
& \U (\uvar\to\neg\DD\now),
\end{align}
%
(b.2) by
\begin{equation}\label{existsline}
\U(\lvar\to\AD\lvar),\\
\end{equation}
(b.3) by the conjunction of
%
\begin{align}
& \U(\wvar\to\plane),\\
& \U(\diag\to\plane),\\
\label{existswall}
& \U(\lvar\to\Dh\diag\land\BD\wvar),
\end{align}
(b.4) by  the conjunction of
%
\begin{align}
\label{gridtoup}
& \U (\plane\to\Dv\uvar),\\
\label{upatunit}
& \U(\uvar\to\Dh\plane\land\BD\plane),\\
& \U(\uvar\to\neg\Dv\lvar\land\neg\DD\lvar),
\end{align}
(b.5) by
\begin{equation}
\U (\uvar\to\neg\DD\uvar),
\end{equation}
(b.6) by
\begin{equation}\label{upatwall}
\U(\wvar\to\EbD\uvar).
\end{equation}
Finally, we can express (b.7) by
\begin{equation}\label{coremidline}
\bigl[\DbD\lvar\land\plane\imph\AD\AbD\uvar\bigr],
\end{equation}
using the `binary implication trick' introduced in Section~\ref{pspace}.
%

%
%
%


Now it is not hard to show the following:

\begin{claim}
Suppose $\M,\auf r,r'\zu\models\fdiag\land\fgridcore$, where $\fgridcore$ is a conjunction of \eqref{initone}--\eqref{coremidline}.  Then $\now$, $\wvar$, $\diag$, $\lvar$, and $\uvar$ are properly placed \textup{(}see Fig.~\ref{f:diag}\textup{)}.
\end{claim}

On the other hand, using Fig.~\ref{f:diag} it is not hard to see that $\fgridcore$ is satisfiable
(using the irreflexive semantics) in an $\HS$-model that is based on some linear order $\T$ having an infinite ascending chain $u_0<u_1<\ldots$. In particular, conjunct
 \eqref{coremidline} is satisfiable because of the following:
$\AD\AbD\uvar$ is clearly horizontally stable, and it is easy to check that
for every $x,n$ with $\M,\auf x,u_n\zu\models\neg \AD\AbD\uvar$, we have
$\M,\auf x,u_n\zu\models\neg\DbD\lvar$.

Given a Turing machine $\A$, consider the conjuncts \eqref{tmongrid}--\eqref{nonhalting} above, and
observe that the only non-$\HSc$ conjuncts among them
are \eqref{transH} for $(x,y,z)\in \WA$. In order to replace these with $\HSc$-formulas we introduce
the following fresh propositional variables:
\begin{itemize}
\item
$(y,z)$ and $\overline{( y,z)}$, for all $y,z\in\GA$, and
\item
$( x,y,z)$ and $\overline{( x,y,z)}$, for all $(x,y,z)\in \WA$.
\end{itemize}
Then we again use the `binary implication trick' of Section~\ref{pspace}
(and its `vertical' version),
and take the conjunction of the
following formulas, for all $y,z\in\GA$ and all $( x,y,z)\in \WA$:
\begin{align*}
& \bigl[\AbD y\land z\impv\overline{( y,z)}\bigr],\\
& \U\bigl(\overline{( y,z)}\to\AbD( y,z)\bigr),\\
& \U\bigl(( y,z)\to\plane\bigr),\\
& \bigl[\AD (y,z)\land x\imph\overline{( x,y,z)}\bigr],\\
& \U\bigl(\overline{(x,y,z)}\to\AD( x,y,z)\bigr),\\
& \U\bigl(( x,y,z)\to\uvar\land\Dh \tau_{\A}(x,y,z)\bigr).
\end{align*}
Fig.~\ref{f:Tm} shows the intended meaning of these formulas, and also how to satisfy them
in the $\HS$-model $\M$ defined in \eqref{tmsound}.
\begin{figure}[ht]
\setlength{\unitlength}{0.07cm}
\begin{center}
\begin{picture}(80,90)(0,3)
\thinlines
\put(0,0){\line(1,1){80}}
\put(0,25){\line(1,0){25}}
\multiput(25,35)(10,10){2}{\line(1,0){10}}
\put(25,70){\line(1,0){45}}
\multiput(25,70)(10,10){2}{\line(0,-1){45}}
\put(15,25){\line(0,-1){10}}
\put(60,70){\line(0,-1){10}}

\multiput(15,25)(10,10){3}{\circle*{3}}
\put(60,70){\circle*{3}}
\put(25,70){\circle*{2}}
\put(11,27){$x$}
\put(12,36){$( y,z)$}
\put(26.5,31.5){$y$}
\put(31,46){$z$}
\put(24,73){$\uvar$}
\put(7,68){$( x,y,z)$}
\put(34,83){$\downarrow$}
\put(30,87){$\overline{( y,z)}$}
\put(27,24){$\leftarrow$}
\put(32,24){$\overline{( x,y,z)}$}
\put(50,73){$\tau_{\A}(x,y,z)$}
\put(60,15){\circle*{3}}
\put(63,14){$=\plane$}

\end{picture}
\end{center}
\caption{Encoding formula \eqref{transH} in $\HSc$.}\label{f:Tm}
\end{figure}

(\emph{ii}) We reduce  {\sc halting} to $\Fin(<)$-satisfiability.
In order to achieve this, we introduce a fresh variable $\evar$,
 replace \eqref{nonhalting} with the conjunction of \eqref{endunit} and \eqref{notfinalstate},
 and replace the `generating' conjunct
\eqref{diaggenwithA} of $\fdiag$ with
\begin{equation}\label{fingridcore}
\plane\,\land\bigl[\LD\evar\land\plane\imph\AD\plane\bigr],
\end{equation}
using the binary implication trick.
%
%
\end{proof}


\begin{theorem}\label{t:undechornboxirrefldisc}
\textbf{\textup{(}$\HShb$, discrete orders, irreflexive semantics\textup{)}}\\
\textup{(}i\textup{)}
For any class $\Disinf$ of discrete linear orders containing an infinite order,
$\Disinf(<)$-satisfiability of $\HShb$-formulas is undecidable.
\textup{(}ii\textup{)}
$\Fin(<)$-satisfiability of $\HShb$-formulas is undecidable.
\end{theorem}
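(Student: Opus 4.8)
The goal is to show that $\HShb$-satisfiability over discrete orders with the irreflexive semantics is undecidable, by reducing {\sc non-halting} (and {\sc halting} for the finite case). Since $\HShb$ forbids diamond operators on the right of clauses but allows them on the left (via the equivalence $\U(\langle\R\rangle\lambda\land\psi\to\lambda')\equiv\U(\lambda\to[\bar\R](\psi\to\lambda'))$ noted in Section~\ref{defs}), the techniques of Theorems~\ref{t:undechorn} and~\ref{t:undeccoreirrefl}, which lean heavily on $\langle\R\rangle$ on the right of implications (e.g.\ to \emph{create} points), are unavailable. So, following item~(\emph{v})--(\emph{vi}) of the methodology, I would devise a fresh encoding of the $\nwplane$-grid that only uses box operators on the right. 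The key asset in the discrete irreflexive setting is the formula~\eqref{diaggen} from Theorem~\ref{t:pspacecoreboxirrefldisc}: $\U(\plane\to[\eds]\bot)\land\U([\eds]\bot\to\plane)$ pins $\plane$ exactly to the punctual intervals $\auf x,x\zu$, and then $\U([\begins]\lambda\to[\eds]\lambda')$ acts as a ``next-punctual-interval'' operator. This already gives a unique $\omega$-sequence of units \emph{for free} in any infinite discrete order, without any diamond and without uniqueness-forcing clauses.

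**Main steps.** First, I would set up the diagonal sequence $\plane$ via~\eqref{diaggen} and introduce a ``horizontal'' layer: for each punctual unit $\auf u_n,u_n\zu$, use the intervals $\auf x,u_n\zu$ with $x<u_n$ as the cells $0,\dots,n$ of configuration $\CC_n$, exactly as in the canonical model picture $\nu(\cellvar_i^z)=\{\auf x,t_n\zu\mid x\le t_n,\ \CC_n(i)=z\}$ used in the proof of Theorem~\ref{t:pspacecoreboxirrefldisc}. The difference from the \PSpace\ proof is that now the number of cells grows with $n$, so I cannot use a bounded family of variables $\cellvar_i^z$; instead I must propagate cell contents \emph{positionally} along the grid. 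Second, I would encode the triples-to-cells function $\tau_\A\colon\WA\to\GA$ (defined in~\eqref{tmfv}): for a cell at interval $\auf x,u_n\zu$ whose $\begins$-predecessor reads $x_{\mathrm{left}}$ and itself reads $y$ and its successor reads $z$, force the corresponding cell in line $n{+}1$ to carry $\tau_\A(x_{\mathrm{left}},y,z)$. The crucial point is relating cell $m$ of line $n$ to cell $m$ of line $n{+}1$: in the Cantor enumeration of Fig.~\ref{f:enumtwo} the vertical neighbour is reached by an ``up-pointer'', which I would re-implement using only $[\begins],[\eds],[\beginsb],[\duringb]$ boxes on the right, mirroring the wall/line/diagonal bookkeeping of properties~(b.1)--(b.7) but with every $\bigl[\lambda_1\land\lambda_2\imph\lambda\bigr]$ replaced by a genuinely $\HShb$ gadget (the $\langle\R\rangle$'s that appear only on the left are legitimate). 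Third, halting is forbidden by $\U((q_f,s)\to\bot)$; for the finite case~(ii) I would add an ``end'' marker $\evar$ and replace the unconditional generation of the next unit by a conditional one triggered as long as no halting cell has appeared, so that finiteness of the order forces the computation to reach $q_f$.

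**Soundness and the hard part.** For the ``if $\ftmb$ is satisfiable then $\A$ diverges'' direction, I would argue by induction on $n$ that in any model of $\ftmb$ based on a discrete order there is an ascending chain $u_0<u_1<\dots$ with $\auf x,u_n\zu$ correctly coding $\CC_n(i)$ for the appropriate cell index $i$ determined by the grid enumeration --- this is where the box-only ``next'' operator and the positional pointers do the work. For the converse I would exhibit the explicit model: take a discrete order with an infinite ascending chain of consecutive points, put $\plane$ on all punctual intervals, set $\nu(\cellvar^z)$-style valuations as in Theorem~\ref{t:pspacecoreboxirrefldisc}, and check each clause, handling the fresh auxiliary variables in the Horn gadgets exactly as in the $\M$-construction there. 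I expect the \textbf{main obstacle} to be designing the grid-navigation gadget: with diamonds banned on the right I cannot say ``there exists an up-neighbour'', so I must \emph{derive} the existence and placement of up-pointers purely from box-propagated constraints over the already-present intervals of the discrete frame, and verify that the discreteness of the order (which gives genuine immediate successors, unlike the dense or reflexive cases) is exactly what makes this possible --- this is also the point where the proof must break down for dense orders, consistently with the \PTime\ upper bound of Theorem~\ref{t:inp}. The descending-chain case, as usual, needs the symmetric formulas and is left to the reader.
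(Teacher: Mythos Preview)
Your proposal correctly identifies the setting and the obstacle, but it does not actually contain the key idea that makes the proof go through. You say you would ``re-implement'' the up-pointers of (b.1)--(b.7) ``using only $[\begins],[\eds],[\beginsb],[\duringb]$ boxes on the right'', but (b.6) and (b.7) are \emph{existence} statements (``there is $m$ with $\mathsf{up\_neighbour\_of}(m)=n$''), and the binary-implication tricks $\bigl[\lambda_1\land\lambda_2\imph\lambda\bigr]$ and $\bigl[\lambda_1\land\lambda_2\imphd\lambda\bigr]$ of Section~\ref{pspace} both contain conjuncts of the form $\U(\lambda\to\AD\mu)$ that are outside $\HShb$. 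Replacing these by ``genuinely $\HShb$ gadgets'' is precisely the whole problem, and your plan stops at the point where a concrete gadget is needed. Likewise, your cell representation ``interval $\auf x,u_n\zu$ is cell number $i$ of $\CC_n$'' presupposes a way of matching $x$ to a cell index without a bounded stock of variables, and ``propagate positionally'' is a restatement of the difficulty, not a solution.

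The paper's route is quite different from what you sketch. First, $\plane$ is \emph{not} placed on punctual intervals via~\eqref{diaggen}; it is placed on length-$1$ intervals by
\[
\U\bigl(\plane\to\neg\Bh\bot\land\Bh\Bh\bot\bigr)\land\U\bigl(\Dh\Bh\bot\land\Bh\Bh\bot\to \plane\bigr),
\]
so that one can step from one $\plane$-interval to the next and previous via $\after$ and $\afterb$. Second, the up-pointer machinery is replaced by a new \emph{nw-next trick}: a purely box-built macro $\bigl[\varphi\nwnxt\lambda\bigr]$ with the effect that if $\varphi$ holds at $\auf u_i,u_j\zu$ then $\lambda$ holds at $\auf u_{i-1},u_{j+1}\zu$. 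Third---and this is the idea you are missing---the $\nwplane$-grid is encoded with a \emph{mirror layout}: after each proper line one inserts a single mirror cell and then a reversed copy of the next line, so that the nw-next step (which goes north-west, not north) lands on the mirrored copy and a second nw-next step lands on the proper next line. This lets one build $\uvar$- and $\mvar$-pointers entirely by box-propagation, without ever asserting existence on the right. Your outline contains none of this; as written, it would get stuck exactly at the step you flag as the main obstacle.
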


\begin{proof}
(\emph{i}) We again reduce  `non-halting' to satisfiability, modifying the techniques
employed in the proofs of Theorems~\ref{t:undechorn} and \ref{t:undeccoreirrefl}.
In both of these proofs, `positive' $\auf\R\zu$-operators are used for two purposes.
First, they help to `generate' an infinite $\plane$-sequence; see formula \eqref{diaggenwithA}.
Second, they help to `generate' appropriate pointers for the encoding of the
$\nwplane$-grid via the enumeration in Fig.~\ref{f:enumtwo}; see formulas
$\succh$, $\succv$, \eqref{existsline}, \eqref{existswall}--\eqref{upatunit}, \eqref{upatwall}
and \eqref{coremidline}.
Below, we show how to `mimic' these features within $\HShb$.
Recall that formulas of the form $\U(\varphi\to\neg[\R]\bot)$ are within $\HShb$.

Take any $\HS$-model $\M$ based on a discrete linear order $\T$, and consider the irreflexive semantics of the interval relations.
%
In case of these semantical choices,
we can single out  $\plane$-intervals as follows. Let  $\fdiagb$ denote the formula
%
\[
\U\bigl(\plane\to\neg\Bh\bot\land\Bh\Bh\bot\bigr)\land\U\bigl(\Dh\Bh\bot\land\Bh\Bh\bot\to \plane\bigr).
\]
It is not hard to see that if $\M$ satisfies $\fdiagb$ then,
for all $\auf x,x'\zu$  in $\int(\T)$, we have $\M,\auf x,x'\zu\models\plane$ iff
$x'$ is an immediate successor of $x$ in $\T$.
(Note that
this is not the same $\plane$-sequence as in the proof of Theorem~\ref{t:pspacecoreboxirrefldisc}.)
This $\plane$-sequence has the useful property of having access to the `next' and `previous' $\plane$-intervals with the $\after$ and $\afterb$ interval relations, respectively.
%
%

The following \emph{nw-next trick} will also be essential.
 For any finite conjunction $\varphi$ of literals and any literal $\lambda$,
we define the formula $\bigl[\varphi\nwnxt\lambda\bigr]$ as the conjunction of
\begin{align*}
& \U(\varphi\to \lambda_\downarrow\land\BB\lambda_\downarrow\land\Bv\lambda_\uparrow),\\
& \U(\lambda_\uparrow\land\BB\lambda_\downarrow\to\lambda_\ast),\\
& \U(\lambda_\ast\to\lambda_\to\land\Bh\lambda_\to\land\EbB\lambda_\leftarrow),\\
& \U(\lambda_\leftarrow\land\Bh\lambda_\to\to\lambda),
\end{align*}
where $\lambda_\downarrow$, $\lambda_\uparrow$, $\lambda_\to$, $\lambda_\leftarrow$ and $\lambda_\ast$ are fresh
variables.
Now suppose $u_0<u_1<\ldots<u_n<\ldots$ is an infinite sequence of subsequent points in $\T$.
(We will `force' its existence with the formula \eqref{infunit} below.)
It is easy to see the following:

\begin{claim}
If $\M\models\bigl[\varphi\nwnxt\lambda\bigr]$ and $\M,\auf u_i,u_j\zu\models\varphi$, then
$\M,\auf u_{i-1},u_{j+1}\zu\models\lambda$.
\end{claim}

\emph{Soundness\/}:
Observe that in order to satisfy $\bigl[\varphi\nwnxt\lambda\bigr]$ there are certain restrictions on
$\varphi$ and $\lambda$. For example, there is no problem
whenever they are both `horizontally and vertically unique in $\M$' in the following sense:
If $\M,\auf x,y\zu\models\varphi$ then $\M,\auf x',y\zu\not\models\varphi$ and
$\M,\auf x,y'\zu\not\models\varphi$ for any $x'\ne x$, $y'\ne y$ (and similarly for $\lambda$).

\smallskip
Next, we force the proper placement of line- and up-pointers of the $\nwplane$-grid in Fig.~\ref{f:enumtwo}
 in a novel way, different from the ones in the proofs of Theorems~\ref{t:undechorn} and \ref{t:undeccoreirrefl}.
In representing this enumeration by our $\plane$-sequence, each line will be
followed by a `mirror'-unit, then by a `mirrored copy' of the next line with its locations listed in reverse order,
and then by a proper listing of the next line's locations.
In order to achieve this, we introduce the following fresh propositional variables:
\begin{itemize}
\item
$\gridp$, $\wvar$ and $\diag$ (to mark those $\plane$-intervals that represent line-locations and the respective wall- and diagonal-ends of each line);
\item
$\gridc$ (to mark $\plane$-intervals representing the mirror-copies of proper line locations);
\item
$\uvar$ and $\mvar$ (to mark pointers helping to access the up-neighbour of each location);
\item
$\gridm$, $\mvarlast$ and $\uvarlast$ (to mark the beginning and end of each `north-west going' $\mvar$- and $\uvar$-sequence, respectively).
\end{itemize}
See Fig.~\ref{f:newnw} for the intended placement of these variables, and for an example of how to
access, say, grid-location $(1,4)$ from $(1,3)$, and $(1,3)$ from $(1,2)$ with the help of
up- and mirror-pointers.

We force the proper placement of these variables
by the conjunction  $\fgridb$ of the following formulas:
%
\begin{align*}
& \init\land\bigl[\init\nwnxt\uvarlast\bigr],\\
& \U(\init\to \plane\land\wvar),\\
& \U(\plane\land\EbD\uvarlast\to\diag),\\
& \U\bigl(\diag\to\AB(\plane\to\gridm)\bigr),\\
& \bigl[\gridm\nwnxt\mvar\bigr],\\
& \bigl[\wvar\nwnxt\uvar\bigr],\\
& \U(\plane\land\EbD\uvar\to\gridp),\\
& \bigl[\mvar\land\BD\gridp\nwnxt\mvar\bigr],\\
& \U(\mvar\land\BD\wvar\to\mvarlast),\\
& \U(\plane\land\EbD\mvar\to\gridc),\\
& \U(\plane\land\EbD\mvarlast\to\wvar),\\
& \bigl[\uvar\land\BD\gridc\nwnxt\uvar\bigr],\\
& \U(\uvar\land\BD\gridm\to\uvarlast).
\end{align*}

\begin{figure}[ht]
\begin{center}
\setlength{\unitlength}{.08cm}
\begin{picture}(160,160)(-8,5)
\thinlines
\put(10,7){\vector(1,0){130}}
\put(0,12){\vector(0,1){152}}
\multiput(20,6)(5,0){24}{\line(0,1){2}}
\multiput(-1,20)(0,5){29}{\line(1,0){2}}
\put(18,4){${}_{u_0}$}
\put(23,4){${}_{u_1}$}
\put(28,4){${}_{u_2}$}
\put(33,4){$\ldots$}
\put(-6,20){${}_{u_1}$}
\put(-6,25){${}_{u_2}$}
\multiput(-4,30)(0,2){3}{\circle*{.5}}

\put(20,20){\circle*{1}}
\put(22,17){$\wvar$}
\put(12,17){$\init$}
\put(15,25){\circle{1.5}}

\put(25,25){\circle*{3}}
\put(27,22){$\diag$}
\put(29,29){$\ast$}
\put(30,30){\circle{3}}
\multiput(24,34)(-5,5){2}{$\ast$}

\multiput(35,35)(5,5){2}{\circle{3}}
\put(42,37){$\wvar$}
\multiput(35,45)(-5,5){2}{\circle{1.5}}
\put(46,42){\scriptsize $(1,2)$}
\multiput(45,48)(0,1){16}{\circle*{.5}}
\multiput(47,65)(1,0){16}{\circle*{.5}}
\multiput(65,68)(0,1){6}{\circle*{.5}}
\multiput(67,75)(1,0){6}{\circle*{.5}}
\multiput(75,78)(0,1){26}{\circle*{.5}}
\multiput(77,105)(1,0){26}{\circle*{.5}}
\multiput(105,108)(0,1){6}{\circle*{.5}}
\multiput(107,115)(1,0){6}{\circle*{.5}}

\multiput(45,45)(5,5){2}{\circle*{3}}
\put(52,47){$\diag$}
\put(54,54){$\ast$}
\put(55,55){\circle{3}}
\multiput(49,59)(-5,5){3}{$\ast$}

\multiput(60,60)(5,5){3}{\circle{3}}
\put(72,67){$\wvar$}
\multiput(65,75)(-5,5){3}{\circle{1.5}}
\put(76,72){\scriptsize $(1,3)$}

\multiput(75,75)(5,5){3}{\circle*{3}}
\put(87,82){$\diag$}
\put(89,89){$\ast$}
\put(90,90){\circle{3}}
\multiput(84,94)(-5,5){4}{$\ast$}

\multiput(95,95)(5,5){4}{\circle{3}}
\put(112,107){$\wvar$}
\multiput(105,115)(-5,5){4}{\circle{1.5}}
\put(116,112){\scriptsize $(1,4)$}

\multiput(115,115)(5,5){4}{\circle*{3}}
\put(132,127){$\diag$}
\put(134,134){$\ast$}
\put(135,135){\circle{3}}
\multiput(129,139)(-5,5){5}{$\ast$}

\put(3,27){$\uvarlast$}
\put(23,52){$\uvarlast$}
\put(47,87){$\uvarlast$}
\put(83,132){$\uvarlast$}

\put(5,42){$\mvarlast$}
\put(30,72){$\mvarlast$}
\put(60,112){$\mvarlast$}
\put(100,162){$\mvarlast$}

\put(142,142){$\swarrow$}
\put(144,147){$\plane$}

\thinlines
\put(30,10){\line(1,1){5}}
\multiput(30,10)(5,5){2}{\line(-1,1){4}}
\put(34,10){line$_1$}

\put(50,30){\line(1,1){10}}
\multiput(50,30)(10,10){2}{\line(-1,1){4}}
\put(56,32){line$_2$}

\put(80,60){\line(1,1){15}}
\multiput(80,60)(15,15){2}{\line(-1,1){4}}
\put(88,64){line$_3$}

\put(120,100){\line(1,1){20}}
\multiput(120,100)(20,20){2}{\line(-1,1){4}}
\put(130,106){line$_4$}

\put(120,50){\circle*{3}$=\gridp$}
\put(120,43){\circle{3}$=\gridc$}
\put(120,36){\circle{3}$=\gridm$}
\put(119,34.9){$\ast$}
\put(120,29){\circle{1.5}$\ =\uvar$}
\put(119,21){$\ast=\mvar$}
\end{picture}
\end{center}
\caption{Encoding the $\nwplane$-grid in an $\HS$-model: version 3.}\label{f:newnw}
\end{figure}

Then it is not hard to show the following:

\begin{claim}\label{c:mirrrorgrid}
If $\M,\auf u_0,u_1\zu\models\fdiagb\land\fgridb$, then all variables are placed as in Fig.~\ref{f:newnw}.
\end{claim}

Finally, given a Turing machine $\A$,
we again place the subsequent configurations of its computation with empty input on the subsequent
lines of the $\nwplane$-grid (see Fig.~\ref{f:lines}), using the function $\tau_{\A}$ defined in
\eqref{tmfv}. We define the formula $\ftmb$ as follows.
First, we ensure that there are infinitely many $\plane$-intervals with
\begin{equation}\label{infunit}
 \U (\plane\land x\to\neg\Bv\bot),
 \qquad\mbox{for $x\in \Sigma\cup\{\pounds\}\cup\bigl((Q-\{q_f\})\times(\Sigma\cup\{\pounds\})\bigr)$}.
\end{equation}
Next,
we take the general constraints \eqref{tmongrid} and \eqref{tmuniq},
then initialize the computation with
\[
\U \bigl(\init\to( q_0,\pounds)\bigr),
\]
and then force the computation steps with the conjunction of \eqref{tmdiag} and
\begin{align*}
& \U(\gridm\to\pounds),\\
& \U\Bigl(\gridp\land y\land\AD z\land\AbD x\to \Bv\bigl(\mvar\to\Bh(\plane\to \tau_\mathcal{A}(x,y,z))\bigr)\Bigr),\\
& \hspace*{10.5cm} \mbox{for $( x,y,z)\in \WA$},\\
& \U\Bigl(\wvar\land y\land\AD z\to \Bv\bigl(\mvar\to\Bh(\plane\to \tau_\mathcal{A}(\sqcup,y,z))\bigr)\Bigr),\quad \mbox{for $(\sqcup,y,z)\in \WA$},\\
& \U\Bigl(\gridc\land\Dv\uvar\land x\to \Bv\bigl(\uvar\to\Bh(\plane\to x)\bigr)\Bigr),\quad \mbox{for $x\in \GA$}.
\end{align*}
Then we force non-halting with \eqref{nonhalting}.
Using Claim~\ref{c:mirrrorgrid}, now it is straightforward to prove the following:

\begin{claim}\label{c:tmb}
If $\ftmb$ is satisfiable in an $\HS$-model based on a discrete linear order, then $\A$ diverges with empty input.
\end{claim}

On the other hand, using Fig.~\ref{f:newnw} it is not hard to see that $\fdiagb\land\fgridb$ is satisfiable
(using the irreflexive semantics) in an $\HS$-model that is based on some discrete linear order $\T$ having an infinite ascending chain $u_0<u_1<\ldots$ of subsequent points.
 If $\A$ diverges with empty input, then it is not hard to modify the $\HS$-model $\M$ given in
 \eqref{tmsound} to obtain a model satisfying $\ftmb$.
 The case when $\T$ contains an infinite descending chain of immediate predecessor points requires `symmetrical versions' of the used formulas and is left to the reader.

\smallskip
(\emph{ii}) We reduce  `halting' to $\Fin(<)$-satisfiability.
In order to achieve this, we omit \eqref{nonhalting}.
This completes the proof of the theorem.
\end{proof}


\section{Conclusions and open problems}

Our motivation for introducing the Horn fragments of \HS{} and investigating  their computational behaviour comes from two sources. The first one is applications for ontology-based access to temporal data, where an ontology provides definitions of complex temporal predicates that can be employed in user queries. Atemporal ontology-based data access (OBDA)~\cite{PLCD*08} with Horn description logics and profiles of \OWL{} is now  paving its way to industry~\cite{DBLP:conf/semweb/KharlamovHJLLPR15}, supported by OBDA systems such as Stardog~\cite{Perez-Urbina12}, Ultrawrap~\cite{DBLP:conf/semweb/SequedaAM14}, and the Optique platform~\cite{optique,ISWC13,DBLP:conf/semweb/KontchakovRRXZ14}. However, OBDA ontology languages were not designed for applications with \emph{temporal} data (sensor measurements, historical records, video or audio annotations, etc.). That the datalog extension of (multi-dimensional) $\HShb$ is sufficiently expressive for defining useful temporal predicates over historical and sensor data was shown by Kontchakov et al.~\citeyear{IJCAI16}, who also demonstrated experimentally the efficiency of $\HShb$ for query answering. We briefly discussed these applications in Section~\ref{sec:app}. (Other temporal ontology languages have been developed based on Horn fragments of the linear temporal logic LTL~\cite{DBLP:conf/ijcai/ArtaleKKRWZ15,DBLP:conf/ijcai/Gutierrez-Basulto15,DBLP:conf/ijcai/Gutierrez-Basulto16},  computational tree logic CTL~\cite{DBLP:conf/kr/Gutierrez-BasultoJ014}, and metric temporal logic MTL~\cite{DBLP:conf/ecai/Gutierrez-Basulto16,DBLP:conf/aaai/BrandtKKRXZ17}.)

Our second motivation originates in multi-dimensional modal logic~\cite{many_dimensional_modal_logics,Kurucz07}. Its formalisms try to capture the interactions
between modal operators representing time, space, knowledge, actions, etc., and are closely connected not only to $\HS$ but also to finite variable fragments of various kinds of predicate logics
(as first-order quantifiers can be regarded as propositional modal operators over interacting universal
relations).
While the satisfiability problem of the two-variable fragment of classical predicate logic is \NExpTime-complete
\cite{GKV:decptv},
taming even two-dimensional propositional modal logics over interacting transitive but not equivalence relations
 by designing their interesting fragments turned out to be a difficult task.
 Introducing syntactical restrictions (guards, monodicity) \cite{Hodkinson06,Deg&Fisher&Konev06,Hodkinsonetal00,HodkinsonWZ02,Hodkinsonetal03}
 and/or modifying the semantics by allowing various subsets of product-like domains
 \cite{GabelaiaKKWZ05,gkwz06,hk15}
 or restricting the available valuations
 \cite{GollerJL15}
 might result in decidable logics that are still very complex, ranging from \ExpSpace\ to non-primitive recursive.
In this context, it would be interesting to see whether Horn fragments of multi-dimensional modal
formalisms exhibit more acceptable computational properties. Here, we make a step in this direction.

This paper has launched an investigation of Horn fragments of the Halpern-Shoham interval temporal logic $\HS$, which provides a powerful framework for temporal representation and reasoning on the one hand, but is notorious for its nasty computational behaviour on the other. We classified the Horn fragments of $\HS$ along the four axes:
\begin{itemize}
\item the type of interval modal operators available in the fragment: boxes $[\R]$ or diamonds $\langle \R \rangle$, or both;

\item the type of the underlying timelines: discrete or dense linear orders;

\item the type of semantics for the interval relations: reflexive or irreflexive; and

\item the number of literals in Horn clauses: two in the \emph{core}  fragment or more.
\end{itemize}
Both positive and negative results were obtained. The most unexpected negative results are the undecidability of (\emph{i}) $\HSc$ with both box and diamond operators under the irreflexive semantics, and of (\emph{ii}) $\HShb$ over discrete orders under the irreflexive semantics. Compared with (\emph{i}) and (\emph{ii}), the ubiquitous undecidability of $\HShd$ might look like a natural feature. Fortunately, we have also managed to identify a `chink in $\HS$'s armour' by proving that $\HShb$ turns out to be \emph{tractable} (\PTime-complete) over both discrete and dense orders under the reflexive semantics and over dense orders under the irreflexive semantics. First applications of the $\HShb$ fragment to  ontology-based data access over temporal databases or streamed data have been found by Kontchakov et al.~\citeyear{IJCAI16}.

Recently, Wa{\l}\c{e}ga~\citeyear{Prz2017} has considered a hybrid version of $\HShb$ (with nominals and the @-operator) and proved that it is NP-complete over  discrete and dense orders under the reflexive semantics and over dense orders under the irreflexive semantics.

In order to prove the undecidability results mentioned above as well as \PSpace-hardness of $\HSc$ under the reflexive semantics and of $\HScb$ over discrete orders under the irreflexive semantics, we developed a powerful toolkit that utilises the 2D character of $\HS$ and builds on various techniques and tricks from many-dimensional modal logic. However, we still do not completely understand the computational properties of the core fragment of $\HS$, leaving the following questions open:

\begin{question}
Are $\HSc$ and $\HScd$ decidable over any unbounded class of timelines under the reflexive semantics? What is the computational complexity?
\end{question}

\begin{question}
Is $\HScb$ decidable over any unbounded class of discrete timelines under the irreflexive semantics? What is the computational complexity?
\end{question}

In our Horn-$\HS$ logics, we did not restrict the set of available interval relations, which used to be one of the ways of obtaining decidable fragments. Classifying Horn fragments of $\HS$ along this axis can be an interesting direction for further research in the area. Syntactically, all of our Horn-$\HS$ logics are different. However, we do not know whether they are distinct in terms of their expressive power. Establishing an expressivity hierarchy of Horn fragments of $\HS$ (taking into account different semantical choices) can also be an interesting research question.


\begin{acks}
We are grateful to Przemys{\l}aw Wa{\l}\c{e}ga
for spotting and correcting a mistake in the preliminary version of the proof of
Theorem~\ref{t:pspacecoreboxirrefldisc}.
Thanks are also due to the anonymous reviewers for their useful suggestions.
\end{acks}

\bibliographystyle{acmsmall}

\end{document}